\documentclass[11pt]{article}
\usepackage[a4paper,left=1.02in,right=1.02in,top=1.1in,bottom=1.15in]{geometry}
\usepackage{setspace}
\renewcommand{\arraystretch}{1.22}
\usepackage{caption}
\AtBeginDocument{\setlength{\abovedisplayskip}{11pt plus 3pt minus 3pt}\setlength{\belowdisplayskip}{11pt plus 3pt minus 3pt}\setlength{\abovedisplayshortskip}{7pt plus 2pt}\setlength{\belowdisplayshortskip}{8pt plus 2pt}}
\usepackage{amsmath,amssymb,amsthm,mathtools}
\usepackage{booktabs,graphicx,xcolor,enumitem}
\usepackage{tikz}
\usetikzlibrary{arrows.meta,positioning,calc,patterns,decorations.pathreplacing}
\usepackage{placeins}
\usepackage{pgfplots}\pgfplotsset{compat=1.18}\usepgfplotslibrary{groupplots,fillbetween}
\pgfplotsset{every axis/.append style={tick label style={font=\footnotesize},label style={font=\small},legend cell align=left},every axis legend/.append style={font=\footnotesize,draw=none,fill=none,at={(0.5,-0.26)},anchor=north,legend columns=2,/tikz/every even column/.append style={column sep=0.6em}}}
\usepackage[numbers,sort&compress]{natbib}
\usepackage[expansion=false]{microtype}
\usepackage[colorlinks,linkcolor=blue!50!black,citecolor=blue!50!black,urlcolor=blue!50!black]{hyperref}
\setlist{itemsep=0.35em,topsep=0.4em}

\theoremstyle{plain}\makeatletter\def\thm@space@setup{\thm@preskip=1.1em plus 0.3em\thm@postskip=1.1em plus 0.3em}\makeatother
\newtheorem{theorem}{Theorem}[section]
\newtheorem{proposition}[theorem]{Proposition}
\newtheorem{corollary}[theorem]{Corollary}

\theoremstyle{definition}
\newcounter{experiment}\renewcommand{\theexperiment}{E\arabic{experiment}}
\newtheorem{definition}[theorem]{Definition}

\newtheorem{example}[theorem]{Example}

\makeatletter\renewenvironment{proof}[1][\proofname]{\par\medskip\noindent\textbf{#1.}\ \ignorespaces}{\hfill$\square$\par\medskip}\makeatother

\newcommand{\calA}{\mathcal{A}}\newcommand{\calH}{\mathcal{H}}\newcommand{\calW}{\mathcal{W}}
\newcommand{\calD}{\mathcal{D}}\newcommand{\calM}{\mathcal{M}}\newcommand{\calE}{\mathcal{E}}
\newcommand{\calF}{\mathcal{F}}\newcommand{\calG}{\mathcal{G}}\newcommand{\calI}{\mathcal{I}}
\newcommand{\calC}{\mathcal{C}}\newcommand{\calL}{\mathcal{L}}
\newcommand{\R}{\mathbb{R}}\newcommand{\E}{\mathbb{E}}\newcommand{\Prob}{\mathbb{P}}
\newcommand{\Cov}{\operatorname{Cov}}\newcommand{\Var}{\operatorname{Var}}\newcommand{\argmin}{\operatorname*{arg\,min}}
\newcommand{\Min}{\operatorname{Min}}\newcommand{\Hs}{H^{\star}}\newcommand{\Hst}{H^{\star}_{t}}
\newcommand{\Ast}{\calA^{\star}_{t}}\newcommand{\eqv}{\sim_{\calD}}
\newcommand{\lex}{<_{\mathrm{lex}}}\newcommand{\Tr}{\top}\newcommand{\pto}{\xrightarrow{\;p\;}}
\newcommand{\ind}{\mathbf 1}

\title{{\LARGE\bfseries Admissible Portfolio Optimization}\\[6pt]\parbox{\textwidth}{\centering\large\bfseries\setstretch{1.0}Information Constraints, Conditional Efficient Frontiers,\\ and the Price of Causal Identification}}
\author{Alejandro Rodr\'iguez Dom\'inguez\thanks{Quantitative Analysis and Artificial Intelligence Department, Miralta Finance Bank S.A., Madrid, Spain; Department of Computer Science, University of Reading, Reading, United Kingdom; and Lecturer in Data and Artificial Intelligence, Albert School, Madrid, Spain. Email: \texttt{arodriguez@miraltabank.com}. The views expressed are those of the author and do not necessarily reflect those of the institutions with which he is affiliated.}}
\date{September 2026}

\begin{document}
\maketitle

\begin{abstract}
\noindent
Mean--variance portfolio choice takes the conditioning information as given, so that two errors about it, using variables not available at the decision time and treating common variation as idiosyncratic, pass into the portfolio unseen. This paper turns what that practice assumes into a condition declared before estimation, certified from data and priced when it fails. The conditioning information becomes a decision variable subject to admissibility constraints on availability, arbitrage, statistical separation and, for interventional claims, invariance across regimes; an order in which no risk--return quantity enters selects an information class, and the classical problem is solved inside it. We prove that solutions exist, that the value of information is monotone only under an embedding that portfolio constraints can break, so that selection by decision loss admits look-ahead information whenever offered, that the two-stage solution and the joint envelope are minimal elements of two orders on one set, that the diversifiable part of risk is a property of the admissible class, and that causal identification has an explicit price. Allocation performance is downstream of the framework, since the information decision enters the decision regret at second order and the reported risk at first order, while the weights themselves move at first order. On market data the condition holds on small sets of individual securities and fails on pre-diversified portfolios and on books of realistic size, where a factor model with diagonal residual understates its portfolio's variance by a factor that the consequence statistic announces in advance; selecting the same information by decision loss yields portfolios that promise a fraction of the risk they realize.
\end{abstract}

\noindent\textbf{Keywords:} portfolio choice; information selection; conditional efficient frontier; causal identification; decision-focused learning; oracle inequalities.\\
\textbf{JEL:} C44, C58, G11. \quad \textbf{MSC 2020:} 91G10, 62C05, 90C29.

\section{Introduction}\label{sec:intro}

A mean--variance investor chooses portfolio weights once expected returns and covariances have been specified relative to some information \cite{Markowitz1952,Merton1972}. The information is an input. The theory of conditioning information \cite{HansenRichard1987,FersonSiegel2001} shows how much it matters for the frontier the investor perceives, but in practice it is a factor model, a set of signals or the output of a variable-selection procedure, adopted before the optimizer is called and never itself optimized. Two errors about this input, using variables that were not legitimately available when the decision was taken and treating common variation across assets as if it were idiosyncratic, cause a backtest to overstate what a portfolio can deliver \cite{Michaud1989,ChopraZiemba1993,HarveyLiu2015}. Neither is visible once the moments have been estimated, and neither is corrected by covariance shrinkage \cite{LedoitWolf2004} or by training the forecasting model on the decision loss \cite{ElmachtoubGrigas2022}, since both take the information as given.

\sloppy The paper makes the conditioning information a decision variable, subject to hard constraints,\footnote{Admissibility is used throughout in the sense of eligibility for a declared claim, not in Wald's decision-theoretic sense of non-dominance; an admissible representation here may be dominated in decision value by an inadmissible one.} and the primitive of portfolio construction becomes an admissible information--portfolio pair, chosen in two stages that do not interact. A decision contract, written before any return is examined, declares the claim the portfolio is meant to support and the conditions a candidate information set must satisfy: availability at the decision time, an enlargement of the decision filtration that preserves the absence of arbitrage, removal of the common dependence between assets that mean--variance choice treats as absent, and, when the claim concerns interventions, invariance of the conditional return law across the declared regimes. Candidates that violate a required condition are infeasible. Among the admissible ones a lexicographic admissibility order, in which no risk--return quantity enters, selects an optimal information class; only then, inside that class, is the classical problem solved and its frontier, tangency portfolio and capital allocation line constructed. The order is not a utility, because chronology, publicity, separation and identification are propositions with truth values rather than attributes with rates of substitution.

Markowitz introduced diversification through the portfolio problem; the paper introduces admissibility, and causality as its interventional case, through the same problem, and the two notions are connected. Diversification concerns dependence between returns, dependence is, by the Commonality Principle \cite{RD2023,RD2025causal}, attributable to the common causal drivers of the constituents, and the admissible information class determines which drivers may explain it. Conditional on admissible information, covariation splits into a response to common drivers and an idiosyncratic residual, only the residual is diversifiable, and the split is a property of the information class rather than of the asset universe.

The paper's claim should be stated precisely. What mean--variance practice assumes, that its conditioning information removes the common variation its diagonal residual treats as absent, becomes a condition declared before estimation, certified from data and priced when it fails; and because an error in the conditional moments moves the weights at first order but their realized loss only at second order, while it moves the reported risk at first order, the consequences of the information decision are in the risk a model can be trusted to report, in how often the allocation moves and in what alternative selection rules deliver, which is what the experiments measure.

Population dominance of the admissible problem over any fixed representation is a population result, and Section~\ref{sec:finite} gives its finite-sample counterpart; causal restriction can lower static decision value because it solves a stronger problem, and a lower conditional loss never licenses a causal interpretation; and of the admissibility conditions only the mandate scope is a declaration, the others being certified from provenance, from the enlarged filtration or from returns.

\subsection{Contributions, practical use and limits}\label{sec:contrib}

The contributions are the following. The two-stage problem has solutions and the selected class is invariant to recodings of the same information. The value of information is monotone only under an embedding hypothesis that portfolio constraints can break, with the sharp consequence that any rule selecting information by decision loss, loss-based representation selection included, admits look-ahead information whenever the candidate set contains it. The two-stage solution and the joint envelope over all admissible representations are the minimal elements of two orders on one set of admissible pairs, differing only in whether risk may compare portfolios across representations. Under exact separation the diversifiable part of risk is determined by the admissible class and is invariant to interventions only under interventional admissibility, the portfolio form of the Commonality Principle. Requiring causal identification carries an oracle price bounded by an explicit distance in conditional moments and traded against representation-search complexity in an oracle inequality. An exact decomposition states what the diagonal-residual restriction costs a given portfolio, the residual share of its variance times its breadth times the average residual correlation, and supplies a consequence statistic that is certified at the weights the mandate holds and remains informative on universes where the screening defect no longer discriminates. The two-stage estimator is consistent, the regret of the estimated portfolio is second order in estimation error, and information selection separates across decision nodes. Controlled experiments verify the identities and rates. Field experiments on a book of individual equities of a size a manager would hold, on a cross-asset universe of equities and bonds, on sorted-portfolio universes and on a small equity panel, all against a declared universe of economic, financial, industry and style candidate drivers, find that the separator condition is attainable only on small sets of individual securities and never on pre-diversified portfolios or on a book of realistic size, where the maximum over asset pairs stops discriminating between representations and the portfolio-level consequence statistic becomes the operative diagnostic. On the large book the diagonal-residual restriction understates the variance of the model's own portfolio substantially when the candidates are macroeconomic series and modestly once industry and style candidates are admitted, and the model's bias statistic, the ratio of realized to predicted volatility, moves accordingly, so that the candidate universe is itself a design variable of the contract and the framework reports what a given choice of it costs. Demanding loadings that are stable across the regimes of the sample leaves very few admissible drivers and raises out-of-sample variance by a measurable amount, which is the price of causal identification observed rather than bounded. And selecting the same information by its decision loss, the loss-based alternative, produces portfolios that promise a small fraction of the risk they realize, the overfitting form of the look-ahead corollary.

What a practitioner can take from the paper is narrower than the framework and can be stated without its apparatus. First, a specification test for an existing factor risk model: the screening defect with a size-controlled tolerance, and the consequence statistic at the weights the mandate holds, say whether the diagonal specific-risk assumption is defensible for a given universe and factor set and how much variance it discards when it is not, a check that model-validation functions can run on the models they already have. Second, a map of where factor models with diagonal residual should not be expected to hold: pre-diversified assets, whether sector or style portfolios, funds of funds or strategy indices, retain a common residual component that no candidate set of the kind searched here removes. Third, a rule for when to change the information a book is conditioned on that uses certification statistics and not performance, so that the selection cannot learn from the profit and loss it produces. Fourth, a price for a claim that mandates often make implicitly, that the risk model should survive the next regime.

The limits are equally definite, and the most important of them concerns performance. The framework does not lower realized portfolio risk: against static and dynamic shrinkage estimators, with the same solver and paired tests, the covariance built on the selected class is at par and never significantly better, while it does dominate statistical factors of the same dimension. The reason is the setting, not the information. Inside the quadratic mean--variance problem the conditional information enters only through two conditional moments, and an error in them moves the weights at first order but their realized loss only at second order, so that two covariances agreeing on the systematic part and differing in the residual deliver nearly the same realized variance however differently they describe the risk; \cite{RD2025necessary} reached the same conclusion from a computational angle, that in this setting a causally better representation cannot be expected to pay in allocation. Converting the information the framework certifies into realized performance therefore requires leaving the quadratic setting for one in which the drivers enter through the trajectory of the constituents' sensitivities to them, the sensitivity-based formulation of \cite{RD2023} and the causal PDE--control model of \cite{RD2025cpcm}, in which portfolio choice takes place on the manifold of the common causal drivers and the sensitivities carry the trajectory; that step is downstream of certification and is not taken here. Two further limits follow from the evidence rather than the setting: the representations the framework selects are contemporaneous decompositions of risk and not signals, since nothing survives a one-session lag; and the interventional branch prices a stability restriction over observational environments, which coincides with the price of causal identification only under the identifying assumptions stated, and has no application until a mandate makes a claim about interventions that a conditional-law model cannot support.

The rest of the paper is organized as follows. Section~\ref{sec:lit} places the framework in the literatures on conditioning information, information orderings, decision-focused learning and causal structure in portfolio construction. Section~\ref{sec:apo} develops the theory: the decision contract with its claim levels and admissibility conditions, the admissible class and the order that ranks it, the two solution sets that order produces, and the consequences that carry the paper, the value of admissible information with its look-ahead corollary, the diversification result with the decomposition of what the diagonal-residual restriction costs a portfolio, and the price of causal identification. Section~\ref{sec:finite} prices the selection from data and confronts it with evidence, controlled and on market panels: whether the separator condition is attainable, what its failure costs, what the selected class delivers under the standard allocation rule, and how the three ways of choosing a portfolio's information compare on the same data. Section~\ref{sec:discussion} concludes.

\section{Literature review}\label{sec:lit}

Mean--variance theory \cite{Markowitz1952,Merton1972} chooses weights given a mean vector and a covariance matrix, and its conditional extensions compute those moments relative to information available at the decision time. That the conditioning information reshapes the frontier is the classical result of Hansen and Richard \cite{HansenRichard1987}; Ferson and Siegel \cite{FersonSiegel2001,FersonSiegel2009} characterized the portfolios that use a given conditioning set efficiently; Brandt and Santa-Clara \cite{BrandtSantaClara2006} incorporated it by augmenting the asset space with managed portfolios; Pe\~naranda and Sentana \cite{PenarandaSentana2016} showed that conditioning information and active strategies alter the duality between portfolio and discount-factor frontiers, and Pe\~naranda \cite{Penaranda2016} distinguished the notions of efficiency that predictability creates; Kandel and Stambaugh \cite{KandelStambaugh1996} and Abhyankar, Basu and Stremme \cite{Abhyankar2007,Abhyankar2012} measured the value of the information from the allocation side, and Fletcher \cite{Fletcher2011} found that, net of the turnover it requires, that value is largely dissipated in costs. In all of this the conditioning set is supplied by the model. The frontier these papers characterize is, for a fixed admissible class, the second-stage object of this paper, whose question is the prior one of which conditioning set may be used at all; Section~\ref{sec:order} places the classical objects next to their counterparts here, and Fletcher's finding reappears as a consequence of the perturbation bound of Section~\ref{sec:guarantees}, under which the information decision enters the decision regret at second order.

The estimation literature documents that optimized portfolios are fragile to errors in means \cite{Michaud1989,BestGrauer1991,ChopraZiemba1993} and that shrinkage and constraints stabilize them \cite{JagannathanMa2003,LedoitWolf2004,DeMiguelNogales2009,FanZhangYu2012}, with nonlinear and dynamic shrinkage \cite{LedoitWolf2017,LedoitWolf2020,LedoitWolf2022,EngleLedoitWolf2019} as the current state of the art for large universes. Approximate factor structures \cite{Ross1976,ChamberlainRothschild1983,BaiNg2002} justify treating residual risk as diversifiable, the fundamental factor models that practice runs descend from \cite{Rosenberg1974}, and the estimators of Fan and coauthors \cite{FanFanLv2008,FanLiaoMincheva2013} and of De Nard, Ledoit and Wolf \cite{DeNardLedoitWolf2021} allow a sparse or dynamic residual dependence to survive the factors. Every one of these estimators presupposes a split between systematic and diversifiable risk and estimates its two parts; the present paper asks on which conditioning information the split holds, certifies the answer from data, and shows that when it fails the resulting error in reported risk has a sign and a size, which the validation statistics of risk-model practice \cite{MencheroOrrWang2011,Patton2011} then measure. The decomposition behind that statement is developed in \cite{RD2026screening}, which works at the level of a fixed representation: it splits the conditional covariance of an arbitrary candidate into represented response, omitted response and residual dependence and proves that the two error terms have different sign structure. The question asked there is what a given representation leaves unexplained; the question asked here is which representation may be used at all, and the decomposition enters only as a tool once that choice has been made.

The validity of the information itself is addressed by two literatures that do not meet. Data snooping and multiple testing in the evaluation of strategies \cite{SullivanTimmermannWhite1999,White2000,HarveyLiu2015,HarveyLiuZhu2016} show how selection on past performance inflates what a backtest promises; the chronological condition of this paper excludes the mechanism rather than correcting for it, and the look-ahead corollary of Section~\ref{sec:voi} states why no correction based on decision loss can. The theory of enlargement of filtrations \cite{Jacod1985,PikovskyKaratzas1996,AmendingerImkellerSchweizer1998,Acciaio2016} characterizes when adding information to the decision filtration destroys the semimartingale property or creates arbitrage, which is the mechanism by which an insider's expected utility becomes unbounded; the paper takes the sufficient condition of Jacod as the content of its publicity condition and the no-arbitrage framework of Delbaen and Schachermayer \cite{DelbaenSchachermayer1994} as the sense in which the enlarged filtration must remain admissible. Decision-focused learning and integrated estimation--optimization \cite{ElmachtoubGrigas2022,Elmachtoub2023,Mandi2024,Mandi2022,ButlerKwon2023,CostaIyengar2023,Anis2025,Lee2025,Yang2025} address the mismatch between the loss used to estimate a model and the loss of the decision it feeds; these methods decide how the primitives of a given representation are estimated, they do not decide whether the representation is eligible for the claim attached to it, and the paper proves that they cannot, since selection by decision loss admits look-ahead information whenever the candidate set contains it. The selection of characteristics, factors and features for portfolios \cite{Bjerring2017,KellyPruittSu2019,KozakNagelSantosh2020,GuKellyXiu2020} selects by predictive fit on the same grounds. From model selection \cite{Vapnik1998,Massart2007,Bartlett2008,BLM2013} the paper takes the approximation--estimation logic of oracle inequalities and applies it to a nesting fixed by the claim rather than tuned to the data.

Two orderings of information enter the analysis. Blackwell's comparison of experiments \cite{Blackwell1951,Blackwell1953,LeCam1964,Torgersen1991} says that finer information weakly improves every decision problem because coarser policies remain available; the Commonality Principle, discussed below, removes the dependence between a portfolio's constituents by conditioning on their common causal drivers. The paper uses the second as the separator condition, with residual dependence after conditioning as the statistical defect of a candidate, and the first as the benchmark against which admissibility truncates the value of information: Blackwell's theorem presupposes that finer information is eligible and coarser policies implementable, and both fail here, the first for look-ahead and non-public information and the second when constraints depend on the representation. The word admissible is used in the sense of eligibility for a declared claim, not in Wald's decision-theoretic sense of non-dominance \cite{Wald1950}, and the two notions can disagree; the lexicographic order that ranks eligible representations admits no continuous utility representation \cite{Fishburn1974}, which is the formal reason admissibility cannot be exchanged against performance. The uncertainty set that the certification produces, a bound on every residual correlation, connects the framework to robust portfolio choice under parameter and model uncertainty \cite{GilboaSchmeidler1989,GoldfarbIyengar2003,GarlappiUppalWang2007}; the difference is that the set is not declared by the modeller but delivered by the certification, with the declared tolerance as its radius.

Causal structure entered portfolio construction through the Commonality Principle \cite{RD2023}: the drivers relevant to a portfolio's diversification are the common causes of its constituents, so that conditioning on them removes the dependence among the constituents and leaves the residual idiosyncratic. Its antecedent is Reichenbach's common-cause principle \cite{Reichenbach1956}; its structural proof, through the causal Markov condition, is in \cite{RD2025causal}; and its consequences were developed in the conditional risk-neutral dynamics of a portfolio given its common drivers \cite{RD2024pde}, in the static theory induced by separation as a single structural condition \cite{RD2026screening}, in the causal PDE--control model, in which allocation is a control problem on the common-driver manifold under partial information \cite{RD2025cpcm}, in the dynamic problem in which the conditioning representation is the state that rotates and jumps \cite{RD2026dynamic}, and in the attribution of a portfolio's price of risk to causal drivers with its obstruction to aggregation \cite{RD2026obstruction}. Causal-network representations have since been applied to factor design and timing \cite{Howard2025}. Pearl's hierarchy \cite{Pearl2009,Bareinboim2022} grades claims, structural causal models are formalized in \cite{PetersJanzing2017}, and invariant causal prediction \cite{Peters2016,HeinzeDeml2018,Buhlmann2020} identifies causal predictors by the stability of the conditional law across environments. Whether causal identification is a prerequisite for efficient portfolios is contested: \cite{LopezdePrado2024} argue that it is, and \cite{RD2025necessary} argues from a computational perspective that it is not, since predictive validity and model misspecification are separate questions. The present paper gives that disagreement a precise form. Causal identification is a level of the claim, with an interventional admissibility condition whose satisfaction can be certified by invariance across declared regimes, and it carries a price in decision value that the paper bounds and measures; a conditional-law contract is not improved by it and an interventional contract cannot do without it. The paper also adds the portfolio consequence, that the diversifiable component of risk is determined by the admissible class and is interventionally stable only under interventional admissibility. Coherent and spectral risk measures \cite{Artzner1999,Acerbi2002}, the expected-shortfall program of \cite{RockafellarUryasev2000} and mean--risk theory \cite{Miller2008,Rodrigues2019,Bermin2024,Lim2011} supply alternatives to variance for the second stage, orthogonal to the information contract of the first: the risk measure changes the portfolio chosen inside the class and leaves the selection of the class untouched.

\section{Admissible information and admissible portfolio choice}\label{sec:apo}
\label{sec:prelim}

This section builds one object, the admissible information--portfolio pair, and proves three things about it: that information value is monotone only under an embedding that portfolio constraints can break, that the diversifiable part of risk is fixed by the admissible class, and that requiring causal identification has a bounded price. Section~\ref{sec:setting} fixes notation; Section~\ref{sec:setup} defines the contract and its conditions; Section~\ref{sec:order} defines the order and the two solutions; Section~\ref{sec:results} states the three results. Everything of a technical nature beyond these is in the appendices.

\subsection{Notation}\label{sec:setting}

Fix a finite asset family $A$ with $|A|=n$, a decision node $t$, a horizon $h>0$, a probability space $(\Omega,\calF,\Prob)$ and baseline operative information $\calG^{0}_{t}\subseteq\calF$. Let $R_{t,h}\in L^{2}(\Prob;\R^{n})$ be the vector of horizon returns, written $R$ when the node and horizon are fixed. All equalities between $\sigma$-algebras, conditional laws and conditional moments are modulo $\Prob$-null sets; $\R^{n}$ carries the Euclidean topology, $\succ0$ and $\succeq0$ denote positive definiteness and semidefiniteness, and $\ind$ is the vector of ones. Upper hemicontinuity of a correspondence is in the sense of \cite[Def.~17.2]{AliprantisBorder2006}, and lower semicontinuity of an extended-real function means $\liminf_{x'\to x}f(x')\ge f(x)$. Calligraphic letters $\calA,\calH,\calW,\calD$ denote classes of representations, candidate sets, feasible portfolio sets and decision contracts; $H,\tilde H,K$ denote representations, $w$ a portfolio and $m$ a target return. For a representation $H$, $\mu_H:=\E[R\mid H]$ and $\Sigma_H:=\Cov(R\mid H)$ are its conditional mean and covariance, well defined because $R$ is square-integrable. A star marks optimal objects: $\Ast$ is the Stage-I solution set, $\Hst$ a representative of it, $\rho^{\star}_H(m)$ the minimum conditional risk at target $m$ under $H$, and $\rho^{\star}_t(m):=\rho^{\star}_{\Hst}(m)$ when the representative is fixed. Hats denote estimators. The classical results the construction rests on are stated with their hypotheses at the point where each is first used, and proved in Appendix~\ref{app:used}.

\subsection{Decision contracts, claim levels and admissibility conditions}\label{sec:setup}

This subsection defines what an information set must satisfy to be usable for a declared claim. The definitions are the framework's primitives; the reader who wants only the argument can take from them that a contract fixes the claim and the conditions in advance, that the conditions are constraints rather than penalties, and that all but one of them are certified from data.

\begin{definition}[Decision contract]\label{def:contract}
A decision contract $\calD$ specifies, before any data are examined, the asset family $A$, the decision node $t$ and horizon $h$, the baseline information $\calG^{0}_{t}$, a declared class $\calH$ of candidate information representations, the feasible portfolio set $\calW(H)\subseteq\R^{n}$ attached to each candidate $H$, the claim level of Definition~\ref{def:claim} with its identification requirements, and the decision loss $\ell$, with $J(w;H)=\E[\ell(w,R_{t,h};H)]$. A representation has a decision-time component $H^{-}$, with respect to which portfolios are measurable, and may have a horizon-closed component $H^{+}$ used only for certification.
\end{definition}

\begin{definition}[Claim levels]\label{def:claim}
A decision contract declares the level of the causal hierarchy at which its conclusions are to be read \cite{Pearl2009,Bareinboim2022}. A \emph{conditional-law claim} ($\calC=\mathrm{c}$) asserts properties of the conditional law $\calL(R_{t,h}\mid H)$ under the observational regime only; it requires the financial, chronological and statistical admissibility conditions and no causal semantics. An \emph{interventional claim} ($\calC=\mathrm{i}$) asserts in addition that the conditional primitives remain valid under a declared set $\mathcal E$ of interventions, transports or counterfactual regimes, which must contain the regimes about which the claim is made. Identification at this level is defined by invariance: writing $\calL^{e}$ for the law under regime $e\in\mathcal E$ and $\calL^{\mathrm{o}}$ for the observational law, the causal admissibility condition is satisfied if and only if $\calL^{e}(R_{t,h}\mid H)=\calL^{\mathrm{o}}(R_{t,h}\mid H)$ for every $e\in\mathcal E$. Because the interventional level adds a requirement to the conditional-law level, $\calA_t(\mathrm{i})\subseteq\calA_t(\mathrm{c})$ for a fixed candidate set.
\end{definition}

\label{rem:claim}The grading is the portfolio analogue of the association--intervention--counterfactual ladder \cite{Pearl2009}, and it is part of the contract because it determines which conditions a candidate must satisfy. Identification is the invariance criterion of \cite{Peters2016}, testable from data collected under the regimes in $\mathcal E$ and coinciding with graphical identification when $\mathcal E$ consists of interventions on non-descendants of $H$ in a structural model \cite[Ch.~3]{Pearl2009}. If the claim concerns interventions on the drivers themselves, $\mathcal E$ must contain them, and invariance under them is a statement about the absence of confounding between $H$ and $R$. Invariance does not select a minimal set, since every superset of an invariant conditioning set is invariant; minimality enters through the dimension coordinate of the admissibility order. The requirement that $\mathcal E$ contain the regimes of interest keeps the condition from being vacuous.

The statistical condition rests on one object. Let $Z$ be a vector of drivers with path $Z_{[t,t+h]}$ over the horizon, $H^{+}:=\sigma(\calG^{0}_t,Z_{[t,t+h]})$ the horizon-closed information it generates and $H^{-}:=\sigma(\calG^{0}_t,Z_t)$ its decision-time component; the drivers are a \emph{separator} for $A$ if the components of $R$ are mutually independent conditional on $H^{+}$, and a separator is \emph{least} if no separator generates a strictly smaller $\sigma$-algebra (Definition~\ref{def:separator}, Appendix~\ref{app:used}). By the law of total covariance, which needs only $R\in L^{2}$, $\Cov(R\mid H^{-})=\E[\Cov(R\mid H^{+})\mid H^{-}]+\Cov(\E[R\mid H^{+}]-\E[R\mid H^{-}]\mid H^{-})$; under separation the first term, $\Omega_H$, is diagonal, and when $\E[R\mid H^{+}]$ is affine in a statistic of the driver path the second is $BVB^{\Tr}$ (Proposition~\ref{prop:tower}, Appendix~\ref{app:used}). The \emph{screening defect} of a candidate $H$ is the maximal absolute conditional correlation between components of $R$ given $H$, zero if and only if the drivers are a separator in the second-moment sense, since a conditional correlation is zero almost surely exactly when the corresponding conditional covariance is; it is the statistic the statistical condition is certified with. The market implementation computes it from least-squares residuals, which equals the conditional correlation when the conditional mean is affine in the drivers and is the linear-projection analogue otherwise, a distinction Section~\ref{sec:market} keeps explicit.

\begin{definition}[Admissibility conditions]\label{def:conditions}
A representation $H\in\calH$ passes the \emph{financial admissibility condition} if (i) its decision-time component is measurable with respect to information released no later than the decision node, a property of data provenance certified from release timestamps and point-in-time vintages; (ii) the enlargement of the baseline filtration by $H^{-}$ preserves the semimartingale property of discounted prices and the absence of free lunches with vanishing risk \cite{DelbaenSchachermayer1994}, for which Jacod's condition on the conditional laws of the added information is sufficient \cite{Jacod1985,AmendingerImkellerSchweizer1998}; and (iii) the enlargement lies within the mandate and legal scope declared in the contract. Condition (i) concerns timing and (ii) publicity: for information that is public before $t$ the enlargement stays inside the market filtration and (ii) holds automatically, whereas (ii) is binding when a variable passes the timestamp check but is available to the decision maker alone, as for an insider; the \emph{statistical admissibility condition} if it satisfies the declared separator or sufficiency condition; the \emph{causal admissibility condition} if, for an interventional claim, it carries identified intervention semantics in the sense of Definition~\ref{def:claim}; and the \emph{implementation admissibility condition} if the representation can be implemented under the declared trading technology. We write $G_F(H)$, $G_S(H)$, $G_C(H)$ and $G_I(H)$ for the corresponding indicators, equal to one when the condition is satisfied. The contract determines which admissibility conditions are required, and the \emph{admissible class} $\calA_t$ is the set of candidates in $\calH$ satisfying every required admissibility condition.
\end{definition}

\begin{definition}[Information order and policy embedding]\label{def:order}
For decision-time information objects write $H_1\preceq H_2$ when $\sigma(H_1)\subseteq\sigma(H_2)$ modulo $\Prob$-null sets; when the objects are statistical experiments rather than sub-$\sigma$-algebras, $\preceq$ denotes the declared Blackwell (garbling) order \cite{Blackwell1953}. For $H\in\calA_t$ let $\Pi(H)$ denote the set of feasible policies, i.e.\ $\sigma(H^{-})$-measurable random vectors $\pi$ with $\pi\in\calW(H)$ almost surely, and write $J(\pi;H)=\E[\ell(\pi,R_{t,h};H)]$. For $H_1\preceq H_2$, say that $H_2$ embeds the $H_1$ decision problem if there is a map $\iota:\Pi(H_1)\to\Pi(H_2)$ with $J(\iota\pi;H_2)\le J(\pi;H_1)$ for every $\pi\in\Pi(H_1)$; the inequality is required of expected losses, and an almost-sure inequality between the losses is a sufficient condition for it. When the loss does not depend on the representation the inequality is an equality for $\iota$ the identity; when it does, as for the conditional mean--variance loss, the identity map is still an embedding whenever $\sigma(H_1^{-})\subseteq\sigma(H_2^{-})$ and $\calW(H_1)\subseteq\calW(H_2)$, by the law of total variance (Appendix~\ref{app:proofs}).
\end{definition}

The order $\preceq$ compares information content and $\eqv$ identifies representations indistinguishable for the declared decision. Embedding is stronger than inclusion of weight vectors when constraints or costs depend on the representation, and its failure is one of the two ways in which the monotonicity of information value fails (Corollary~\ref{cor:not-maximal}).

\subsection{The admissibility order and the admissible portfolio problem}\label{sec:order}

The message of this subsection is that the problem has two solution sets and that they are computed in a fixed order: an admissibility optimum over information, in which no risk--return quantity enters, and a conditional efficient frontier inside it. The joint envelope that optimizes over pairs at once is a different contract, and the relation between the two is stated below.
Fix a decision node $t$, the asset family, horizon, baseline information, candidate set $\calH_t$, claim level, tolerances and portfolio constraints.

\subsubsection{The admissible class and its order}

\begin{definition}[Admissible class and admissibility order]\label{def:adm-order}
The admissible class $\calA_t$ is the set of candidates in $\calH_t$ that satisfy every admissibility condition the contract requires.\label{eq:hard-class} To each $H\in\calA_t$ are attached four nonnegative quantities: the separator or sufficiency defect $\varepsilon_t(H)$; the causal-response defect $\delta_t(H)$, identically zero under a conditional-law contract; the information complexity $\kappa_t(H)$; and a predeclared implementation burden $\iota_t(H)$. The first two are scientific quantities, the last two tie-breaks. The contract may declare, in addition, a \emph{consequence statistic} for the portfolio it intends to hold, the relative misallocation $w^{\Tr}\Omega^{\mathrm{off}}_Hw/w^{\Tr}\Sigma_Hw$ of Proposition~\ref{prop:diversification}(d) evaluated at the intended weights; the screening defect certifies whether the representation separates, the consequence statistic certifies how much its failure costs the portfolio, and Section~\ref{sec:market} shows that the two need not agree.
\begin{enumerate}[label=(\alph*)]
\item (\emph{Exact order.}) $H_1\prec_{\calA}H_2$ if the first of $\varepsilon_t,\delta_t,\kappa_t,\iota_t$ at which $H_1$ and $H_2$ differ is smaller for $H_1$.
\item (\emph{Order at tolerance $\bar\varepsilon$.}) Let $\calA_t(\bar\varepsilon)$ be the candidates in $\calA_t$ with $\varepsilon_t(H)\le\bar\varepsilon$. On $\calA_t(\bar\varepsilon)$, $H_1\prec_{\calA}^{\bar\varepsilon}H_2$ if the first of $\delta_t,\kappa_t,\iota_t,\varepsilon_t$ at which they differ is smaller for $H_1$: within tolerance the defect is a satisfied constraint, and it re-enters only as the last tie-break.
\end{enumerate}
The Stage-I solution is the set of minimal elements,
\begin{equation}\label{eq:stage1}
\Ast=\Min_{\prec_{\calA}}\calA_t,\qquad \Ast(\bar\varepsilon)=\Min_{\prec_{\calA}^{\bar\varepsilon}}\calA_t(\bar\varepsilon),
\end{equation}
The two orders coincide at $\bar\varepsilon=0$ when an exact separator exists, since then (a) and (b) both select the exact separators of least dimension.
\end{definition}

No return target, Sharpe ratio, realized utility or point of the Stage-II frontier enters any of the four quantities. Figure~\ref{fig:stage1} illustrates on the \emph{reference system}, used for every population figure in this section and for the controlled experiments of Section~\ref{sec:evidence}: eight assets driven by two common states $Z_1,Z_2$ with loadings drawn once, idiosyncratic noise, a proxy $P$ correlated with $Z_1$ but not a cause of the assets, and a future variable $X^{+}$ that violates the chronological condition; the candidate set is the $41$ non-empty subsets of $\{Z_1,Z_2,P,X^{+},Z_1^{2},Z_2^{2}\}$ and every quantity shown is a population value computed from the system's exact moments, with parameters listed in Appendix~\ref{app:config}.

\begin{figure}[!htb]
\centering
\begin{tikzpicture}
\begin{axis}[width=11cm,height=5.6cm,xlabel={information dimension $\kappa_t(H)$},ylabel={screening defect $\varepsilon_t(H)$},xtick={1,2,3},xmin=0.5,xmax=3.5,ymin=-0.04,ymax=0.62,legend style={font=\footnotesize,at={(0.5,-0.32)},anchor=north,legend columns=2,/tikz/every even column/.append style={column sep=0.6cm}},tick label style={font=\footnotesize},label style={font=\small}]
\addplot[only marks,mark=x,mark size=2.6pt,red!70!black,thick] coordinates {(0.97,0.472) (1.95,0.615) (2.10,0.566) (2.07,0.615) (1.91,0.435) (1.78,0.472) (3.17,1.000) (3.12,0.615) (3.09,0.615) (2.94,0.615) (3.19,1.000) (3.06,0.646) (2.91,0.566) (3.10,0.615) (2.88,0.615) (3.15,0.435)};
\addlegendentry{contains $X^{+}$: fails chronological admissibility condition}
\addplot[only marks,mark=*,mark size=1.8pt,black] coordinates {(0.82,0.463) (0.88,0.562) (1.13,0.463) (1.04,0.418) (0.82,0.472) (1.99,0.000) (1.85,0.463) (2.10,0.463) (1.83,0.463) (2.01,0.000) (1.97,0.299) (2.04,0.562) (2.20,0.463) (1.91,0.463) (2.09,0.418) (3.21,0.000) (2.91,0.000) (2.92,0.000) (3.04,0.463) (2.99,0.463) (2.79,0.463) (2.82,0.000) (3.07,0.000) (2.87,0.299) (3.11,0.463)};
\addlegendentry{satisfies the admissibility conditions}
\draw[blue!60!black,thick,rounded corners=3pt] (axis cs:1.7,-0.025) rectangle (axis cs:2.3,0.03);
\node[blue!60!black,font=\footnotesize,anchor=south] at (axis cs:2.0,0.035) {$\calA^{\star}_t=\{Z_1,Z_2\}\sim\{\phi(Z_1),Z_2\}$};
\node[font=\footnotesize,anchor=east] at (axis cs:1.85,0.30) {$\{P,Z_2\}$: $\varepsilon_t=0.299$};
\end{axis}
\end{tikzpicture}
\caption{Stage~I is an information problem, shown on the 41 candidates of the reference system (population values). Crosses are the 16 candidates containing the future variable $X^{+}$; they include the best predictors and are infeasible. Among the 25 admissible candidates the lexicographic order first minimizes the screening defect, which selects the exact separators, and then the dimension, which selects the class of $\{Z_1,Z_2\}$ and its recoding over the three-dimensional exact separators. No risk--return quantity enters.}
\label{fig:stage1}
\end{figure}

\label{rem:lex}The lexicographic order is chosen over a penalty because chronology, publicity, separation and identification are propositions with truth values, not attributes with rates of substitution: a penalty would let a large enough Sharpe ratio purchase a false proposition, and a lexicographic order admits no continuous utility representation \cite{Fishburn1974}. On a finite candidate set it has a real representation (Section~\ref{sec:finite}), and every result holds for each fixed pre-declared tie-break.

In the exact population problem an exact separator, when it exists, has $\varepsilon_t=0$ and is selected by the first coordinate of the order, the remaining coordinates deciding among exact separators; in finite samples the statistical condition is applied with a pre-declared tolerance calibrated to sampling error (Section~\ref{sec:finite}). Causal validity is never exchanged against portfolio performance.

Existence of the two solution sets, under compactness of the feasible set with continuous risk and mean, or on a compact quotient when the four quantities are lower semicontinuous, is Theorem~\ref{thm:two-sets-formal} in Appendix~\ref{app:formal}; the argument is given below.\label{thm:two-sets}

On a finite candidate set the admissibility order is a total preorder, and a finite set with a total preorder has minimal elements, so $\Ast$ is nonempty; the lexicographic structure gives them explicitly, by keeping the candidates of least defect, then among them those of least dimension, and so on through the four coordinates. On a compact quotient the same selection goes through because each coordinate is lower semicontinuous, so its set of minimizers over a compact set is nonempty and compact, and four nested minimizations preserve both properties. Inside a selected class the feasible set at a target $m$ is compact when the feasible set is compact and the conditional mean is continuous, and a continuous risk functional attains its minimum there by Weierstrass' theorem; the efficient set is closed in a compact set and therefore compact. Because neither the four quantities nor $\calA_t$ depend on $m$ or on any portfolio, the set $\Ast$ is the same for every feasible target, so a change of target moves the portfolio inside the fixed class and never reselects information. Appendix~\ref{app:formal} states the hypotheses in full.\label{prop:separation}

The finite candidate set is the case the market experiments use; the continuous case, in which candidates are directions in the driver space and the admissible class is a compact subset of the sphere, is treated in Appendix~\ref{app:proofs}.

In conditional form, Markowitz portfolio choice reads
\begin{equation}\label{eq:markowitz}
w^{\star}_{H}\in\argmin_{w\in\calW(H)}\Big\{\tfrac{\gamma}{2}\,w^{\Tr}\Sigma_{H}w-\mu_{H}^{\Tr}w+c_{H}(w)\Big\},
\end{equation}
where $H$ is the information relative to which $\mu_H,\Sigma_H$ are defined and $c_H$ a cost term; the representation $H$ is an input. The admissible problem makes it a variable.

\subsubsection{Two-stage solution and joint envelope}
\begin{definition}[Admissible envelope problem]\label{def:apo}
Let $J:\{(H,w):H\in\calA_t,\,w\in\calW(H)\}\to\R\cup\{+\infty\}$ be the population decision loss. The admissible envelope problem is
\begin{equation}\label{eq:apo}
V^{\star}=\inf_{H\in\calA_t}\ \inf_{w\in\calW(H)} J(w;H),
\end{equation}
and any minimizer $(\Hs,w^{\star})$ is an envelope-optimal pair. The conditional value function is $V(H)=\inf_{w\in\calW(H)}J(w;H)$, so that $V^{\star}=\inf_{H\in\calA_t}V(H)$. Problem~\eqref{eq:apo} ranks admissible representations by decision loss. The name APO is reserved for the two-stage problem of Definition~\ref{def:apo-canonical}, in which admissible representations are ranked by admissibility and loss enters only inside the selected class;~\eqref{eq:apo} is its envelope in the sense of Proposition~\ref{prop:lex}, and it is the object to which the finite-sample bounds of Section~\ref{sec:finite} apply.
\end{definition}

Two facts about~\eqref{eq:apo} are immediate. If $H\eqv\tilde H$ then $V(H)=V(\tilde H)$, because the value depends on $H$ only through the primitives that define decision equivalence, so $V$ is well defined on $\calA_t/\!\eqv$ and the problem may be posed on the quotient; and $V^{\star}\le V(H_0)$ for every $H_0\in\calA_t$, since the fixed-representation problem is one point of the admissible one.\label{prop:quotient}

Existence for the envelope problem under the same hypotheses is Theorem~\ref{thm:existence-formal} in Appendix~\ref{app:formal}.\label{thm:existence}

The hypotheses of the general clause are verified on a continuous class of driver directions in Appendix~\ref{app:proofs} (Proposition~\ref{prop:continuous}); elsewhere in the paper the candidate set is finite. For a finite candidate set nothing beyond compactness of each feasible set is needed; for continuous classes the topology on the quotient and the continuity of $H\mapsto\calW(H)$ are modeling requirements.

\begin{definition}[APO as a partial order on admissible pairs]\label{def:apo-canonical}
Fix a feasible target $m$ and write $\Gamma_t(m)=\{(H,w):H\in\calA_t,\ w\in\calW(H),\ \mu_H(w)\ge m\}$ for the admissible pairs at that target. Define the strict partial order $\prec_{\mathrm{APO}}$ on $\Gamma_t(m)$ by
\begin{equation}\label{eq:apo-order}
(H',w')\prec_{\mathrm{APO}}(H,w)\iff H'\prec_{\calA}H\ \ \text{or}\ \ \big[H'=H\ \text{and}\ \rho_H(w')<\rho_H(w)\big].
\end{equation}
Admissible portfolio optimization at target $m$ is the problem of finding the $\prec_{\mathrm{APO}}$-minimal elements of $\Gamma_t(m)$; its solution set is denoted $\mathrm{APO}_t(m)$.
\end{definition}

The relation is the lexicographic order on the pair (admissibility of $H$, risk of $w$) with the second coordinate compared only for the same representation, hence a strict partial order; it compares representations by admissibility and portfolios by risk only within a representation, and two portfolios attached to different representations are incomparable. The requirement that loss must not select information is expressed by this incomparability.

The alternative to selecting first and optimizing second is to optimize over pairs at once. The joint envelope at target $m$ is the least risk attainable over all admissible pairs,
\begin{equation}\label{eq:joint}
\rho^{\mathrm{joint}}(m)=\inf_{H\in\calA_t}\rho^{\star}_H(m)
\end{equation}
in which the representation may change with the target; it encodes a different contract, and the two are related as follows.

\begin{proposition}[Two-stage solution and envelope as two orders on the same set]\label{prop:lex}
Let $m$ be a target that is feasible under every representation in $\calA_t$, that is, $m$ lies in the common target domain $\bigcap_{H\in\calA_t}\{\mu_H^{\Tr}w:w\in\calW(H)\}$. Then
\begin{equation}\label{eq:lex-char}
\mathrm{APO}_t(m)=\bigcup_{H\in\Ast}\{H\}\times\calE_H(m),
\end{equation}
so the two-stage construction of this section computes the $\prec_{\mathrm{APO}}$-minimal pairs. If $\prec_{\mathrm{APO}}$ is replaced by the total preorder $(H',w')\preceq_{\mathrm{env}}(H,w)\iff\rho_{H'}(w')\le\rho_H(w)$ on $\Gamma_t(m)$, the minimal elements are the pairs attaining the joint envelope~\eqref{eq:joint}; with $\rho$ replaced by $J$ and the target constraint dropped, they are the minimizers of the envelope problem~\eqref{eq:apo}.
\end{proposition}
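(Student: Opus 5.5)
Throughout, $\calE_H(m)$ is read as the set of minimizers of $\rho_H$ over the fiber $\Gamma_t(m)\cap(\{H\}\times\calW(H))$, and $\rho_H$ is the risk functional of the Stage-II problem. The characterization~\eqref{eq:lex-char} is proved by double inclusion, using only the definition of $\prec_{\mathrm{APO}}$ in~\eqref{eq:apo-order} and the existence results.

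The envelope claims follow from the definition of minimal elements of a total preorder. The minimal pairs are those with $\rho_H(w)\le\rho_{H'}(w')$ for all $(H',w')\in\Gamma_t(m)$. Equivalently, $\rho_H(w)=\inf_{H'}\inf_{w'}\rho_{H'}(w')=\rho^{\mathrm{joint}}(m)$, using $\rho^{\star}_{H'}(m)=\inf$ over the fiber. The same argument with $J$ in place of $\rho$, over $\{(H,w):w\in\calW(H)\}$, gives the minimizers of~\eqref{eq:apo}. This set is possibly empty; nonemptiness is supplied by Theorem~\ref{thm:existence-formal}, though the claim is only an identification of sets.

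For $\supseteq$, take $H\in\Ast$ and $w\in\calE_H(m)$. Since $m$ is feasible under every representation, $(H,w)\in\Gamma_t(m)$. Suppose some pair $(H',w')\prec_{\mathrm{APO}}(H,w)$. Either $H'\prec_{\calA}H$ with $H'\in\calA_t$, which contradicts the minimality of $H$ in $\calA_t$. Or $H'=H$ and $\rho_H(w')<\rho_H(w)$ with $w'$ in the fiber, which contradicts $w$ minimizing risk there.

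For $\subseteq$, take $(H,w)$ minimal. First show $H\in\Ast$. If not, some $H'\in\calA_t$ has $H'\prec_{\calA}H$. The common-target hypothesis makes the fiber of $H'$ nonempty, so there is some $w'$ with $(H',w')\in\Gamma_t(m)$, and then $(H',w')\prec_{\mathrm{APO}}(H,w)$, a contradiction. This is the one place where feasibility under every representation is used. It is also the main subtlety: without it, a representation that is better in admissibility but infeasible at $m$ could not dominate, and~\eqref{eq:lex-char} would fail. Then $w\in\calE_H(m)$, because any $w'$ of strictly smaller risk in the same fiber would dominate $(H,w)$.

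A further point to check is that $\prec_{\calA}$ is a strict order on representations, not on equivalence classes. If ties in all four coordinates are allowed between distinct $H$, minimality is taken in the sense "no strictly better element," and the argument above uses only that sense. Nonemptiness of $\calE_H(m)$ for each $H\in\Ast$ comes from Theorem~\ref{thm:two-sets-formal} via Weierstrass on the compact fiber, so the union in~\eqref{eq:lex-char} is nonempty when $\Ast$ is.
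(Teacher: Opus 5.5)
Your proof is correct and follows the paper's argument essentially step for step: you prove double inclusion through the two clauses of $\prec_{\mathrm{APO}}$, using the common-target hypothesis once to produce a dominating pair $(H',w')$ when $H\notin\Ast$, and you read the envelope claim off the fact that the minimal elements of a total preorder are its global minimizers. One small slip: in the $\supseteq$ direction, $(H,w)\in\Gamma_t(m)$ follows directly from $w\in\calE_H(m)$ and $H\in\Ast\subseteq\calA_t$, not from the common-target hypothesis, which, as you note later, is needed only in the $\subseteq$ direction.
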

\noindent The proof is in Appendix~\ref{app:proofs}.

Targets outside the common domain are excluded because a representation under which $m$ is infeasible contributes no pair to $\Gamma_t(m)$, and a less admissible representation could then be minimal in $\Gamma_t(m)$ for the sole reason that every more admissible one cannot reach the target; on the common domain this cannot happen. The two concepts differ only in whether risk may compare portfolios across representations. No admissibility order reproduces the envelope, since portfolios attached to different representations are incomparable under $\prec_{\mathrm{APO}}$; the envelope is the total preorder, appropriate when the representation is a free modeling choice, whereas APO is appropriate whenever the representation carries a claim.

Assume first that Stage~I identifies a single information-equivalence class and choose any representative $\Hst\in\Ast$. Conditional on it the portfolio problem is classical in form; in the quadratic case with full investment its solution is the conditional frontier of Proposition~\ref{prop:merton} (Appendix~\ref{app:used}) applied to $\mu_{\Hst},\Sigma_{\Hst}$.

\label{def:frontier}\subsubsection{The conditional efficient frontier}
Once the class is fixed, the second solution set is the classical one. For conditional moments $\mu_H$ and $\Sigma_H\succ0$ with $\mu_H$ not proportional to $\ind$, Merton's closed forms \cite{Merton1972} give the unique fully invested minimum-variance portfolio at every target, the equivalence of the target and risk-aversion parametrizations of the efficient branch, and, when a riskless rate below the mean of the minimum-variance portfolio is available, the tangency portfolio and capital allocation line (Proposition~\ref{prop:merton}, Appendix~\ref{app:used}); nothing in them depends on how $H$ was chosen, which is what makes the two stages separable, and Figure~\ref{fig:mpt-apo} places each classical object next to its counterpart in the admissible problem. For each feasible target $m$,
\begin{equation}\label{eq:frontier}
\calE^{\star}_t(m)=\argmin_{w\in\calW(\Hst):\ \mu_{\Hst}(w)\ge m}\rho_{\Hst}(w),\qquad
\rho^{\star}_t(m)=\min_{w\in\calW(\Hst):\ \mu_{\Hst}(w)\ge m}\rho_{\Hst}(w).
\end{equation}
The correspondence $m\mapsto\calE^{\star}_t(m)$ is the second solution set of APO and the graph of $m\mapsto\rho^{\star}_t(m)$ is its efficient frontier. Transaction and holding costs belong to this stage; the cost term $c_H$ of~\eqref{eq:markowitz} may be carried in the risk functional or in the feasible set, in which case the frontier is net of costs; costs never enter the first stage.

\begin{figure}[!htb]
\centering
\begin{tikzpicture}[scale=0.9,declare function={hyp(\y,\a,\b,\c)=\a+\b*(\y-\c)^2;}]
\begin{scope}
\draw[-{Latex[length=2mm]}] (0,0)--(6.3,0); \node[below,font=\small] at (3.1,-0.05) {risk $\sigma$};
\draw[-{Latex[length=2mm]}] (0,0)--(0,4.6); \node[rotate=90,font=\small] at (-0.45,2.3) {expected return $\mu$};
\fill[gray!10,domain=0.4:3.6,samples=60,variable=\y] plot ({hyp(\y,1.2,0.42,2.0)},\y) -- (5.2,3.6) -- (5.2,0.4) -- cycle; \node[gray!60,font=\footnotesize] at (4.0,1.0) {feasible set};
\draw[gray!60,dashed,domain=0.4:2.0,samples=40,variable=\y] plot ({hyp(\y,1.2,0.42,2.0)},\y);
\draw[thick,domain=2.0:3.9,samples=60,variable=\y] plot ({hyp(\y,1.2,0.42,2.0)},\y);
\fill (1.2,2.0) circle (1.8pt); \node[font=\footnotesize,anchor=west] at (1.3,2.0) {MVP};
\coordinate (rf) at (0,1.15); \fill (rf) circle (1.8pt); \node[left,font=\footnotesize] at (rf) {$r_f$};
\coordinate (T) at (1.656,3.042); \fill (T) circle (2pt); \node[below right,font=\footnotesize] at (T) {tangency};
\draw[blue!60!black,thick] (rf) -- ($(rf)!1.6!(T)$); \node[blue!60!black,font=\footnotesize,anchor=north west] at ($(rf)!0.22!(T)+(0.1,-0.02)$) {CAL};
\path ($(rf)!1.35!(T)$) coordinate (U); \fill[red!70!black] (U) circle (2pt);
\draw[red!70!black,shift={(U)},domain=-1.0:1.0,samples=40,variable=\t] plot (\t,{0.55*\t+0.35*\t*\t});
\node[red!70!black,font=\footnotesize,anchor=west] at ($(U)+(1.05,0.9)$) {indifference curve};
\end{scope}
\node[font=\small\bfseries] at (3.1,5.15) {Modern portfolio theory};
\begin{scope}[xshift=7.6cm]
\draw[-{Latex[length=2mm]}] (0,0)--(6.3,0); \node[below,font=\small] at (3.1,-0.05) {conditional risk $\rho_H$};
\draw[-{Latex[length=2mm]}] (0,0)--(0,4.6); \node[rotate=90,font=\small] at (-0.45,2.3) {conditional expected return $\mu_H$};
\draw[gray!55,dashed,domain=1.6:3.9,samples=50,variable=\y] plot ({hyp(\y,0.7,0.30,1.8)},\y);
\node[gray!55,font=\footnotesize,align=left,anchor=north west] at (3.2,1.75) {$H_1$ (dashed): violates the\\ chronological condition;\\ infeasible however good};
\draw[gray!85,domain=1.4:3.9,samples=50,variable=\y] plot ({hyp(\y,2.2,0.45,1.9)},\y);
\node[gray!85,font=\footnotesize,anchor=west] at (4.3,3.55) {$H_2\in\calA^{0}_t\setminus\calA^{\star}_t$};
\draw[thick,domain=1.9:3.9,samples=60,variable=\y] plot ({hyp(\y,1.5,0.42,1.9)},\y);
\fill (1.5,1.9) circle (1.8pt); \node[font=\footnotesize,anchor=west] at (1.6,1.9) {MVP$_{\Hst}$};
\coordinate (rf2) at (0,1.15); \fill (rf2) circle (1.8pt); \node[left,font=\footnotesize] at (rf2) {$r_f$};
\coordinate (T2) at (2.192,3.183); \fill (T2) circle (2pt); \node[below right,font=\footnotesize] at (T2) {tangency$_{\Hst}$};
\draw[blue!60!black,thick] (rf2) -- ($(rf2)!1.6!(T2)$); \node[blue!60!black,font=\footnotesize,anchor=north west] at ($(rf2)!0.22!(T2)+(0.1,-0.02)$) {CAL$_{\Hst}$};
\path ($(rf2)!1.35!(T2)$) coordinate (U2); \fill[red!70!black] (U2) circle (2pt);
\draw[red!70!black,shift={(U2)},domain=-1.0:1.0,samples=40,variable=\t] plot (\t,{0.55*\t+0.35*\t*\t});
\node[font=\footnotesize,anchor=north west,align=left] at (3.3,2.75) {Stage I fixes $\Hst$;\\ Stage II is classical on it};
\end{scope}
\node[font=\small\bfseries] at (10.7,5.15) {Admissible portfolio optimization};
\end{tikzpicture}
\caption{Modern portfolio theory and its admissible counterpart. Left: the classical picture, in which the feasible set (shaded), the minimum-variance portfolio (MVP), the efficient frontier, the capital allocation line (CAL) through the tangency portfolio and the investor's indifference curve are defined relative to one unquestioned set of moments. Right: every one of those objects reappears, relative to the conditional moments of the admissibility-optimal information $\Hst$. A candidate that violates an admissibility condition ($H_1$) is infeasible whatever its frontier, and a candidate that satisfies the conditions but is not admissibility-optimal ($H_2$) is not compared on risk and return. Preferences enter after the information has been fixed.}
\label{fig:mpt-apo}
\end{figure}

The first solution is generally set-valued. For $\Ast$ with several non-equivalent elements APO returns the correspondence $m\mapsto\{\rho^{\star}_H(m):H\in\Ast\}$; it is single-valued when all elements share the feasible set and the conditional mean and risk functionals, and equal admissibility records alone do not imply this (Example~\ref{ex:two-asset} in Appendix~\ref{app:used}, with equal defects, gives equal records and different frontiers).

When Stage~I ties, a predeclared implementation criterion may select one class before portfolio optimization; performance is never used to redefine admissibility. \label{sec:joint}Admissibility determines which information may define the decision; it does not force variance to be the risk measure, and it does not by itself carry the classical objects, which are now recovered conditional on the fixed class.

Inside the selected class every object of classical portfolio theory reappears conditional on its moments: the tangency portfolio, the capital allocation line and, because the moments depend on the decision state, a capital allocation surface whose root-mean-square slope is the unconditional Sharpe ratio of the conditional tangent strategy. Figure~\ref{fig:geometry} draws the frontier, the tangency portfolio and the capital allocation line at one decision state for the unconditional information, for a representation that fails separation and for the admissible class; the objects and their geometry across states are developed in Appendix~\ref{app:cal}, and the one result used later is that enlarging the information inside the admissible class cannot lower the Sharpe ratio of the conditional tangent strategy (Proposition~\ref{prop:sharpe}, Appendix~\ref{app:cal}).

\begin{figure}[!htb]
\centering
\begin{tikzpicture}
\begin{groupplot}[group style={group size=1 by 1},tick label style={font=\footnotesize},label style={font=\small}]
\nextgroupplot[width=11cm,height=7.2cm,xlabel={risk $\sigma$},ylabel={expected return $\mu$},xmin=0,xmax=1.05,ymin=0,ymax=0.75,legend style={legend columns=2}]
\addplot[thick,blue!70!black] coordinates {(0.284,0.000) (0.276,0.016) (0.269,0.032) (0.265,0.048) (0.264,0.064) (0.266,0.080) (0.270,0.096) (0.276,0.112) (0.285,0.128) (0.296,0.144) (0.309,0.160) (0.323,0.176) (0.339,0.192) (0.356,0.208) (0.374,0.224) (0.393,0.239) (0.413,0.255) (0.434,0.271) (0.455,0.287) (0.477,0.303) (0.499,0.319) (0.521,0.335) (0.544,0.351) (0.567,0.367) (0.591,0.383) (0.615,0.399) (0.639,0.415) (0.663,0.431) (0.687,0.447) (0.711,0.463) (0.736,0.479) (0.761,0.495) (0.785,0.510) (0.810,0.526) (0.835,0.542) (0.860,0.558) (0.885,0.574) (0.911,0.590) (0.936,0.606) (0.961,0.622) (0.986,0.638) (1.012,0.654) (1.037,0.670)}; \addlegendentry{admissible frontier $\mathcal F(\Hst)$}
\addplot[gray!80,dashed] coordinates {(0.288,0.014) (0.275,0.030) (0.267,0.046) (0.264,0.062) (0.267,0.078) (0.276,0.094) (0.289,0.109) (0.307,0.125) (0.328,0.141) (0.353,0.157) (0.379,0.173) (0.408,0.189) (0.438,0.205) (0.470,0.221) (0.502,0.237) (0.535,0.253) (0.569,0.269) (0.604,0.285) (0.639,0.301) (0.674,0.317) (0.710,0.333) (0.746,0.349) (0.782,0.365) (0.819,0.380) (0.855,0.396) (0.892,0.412) (0.929,0.428) (0.966,0.444) (1.003,0.460) (1.041,0.476)}; \addlegendentry{proxy frontier (fails separation)}
\addplot[black,dotted,thick] coordinates {(0.698,0.012) (0.504,0.028) (0.340,0.044) (0.269,0.059) (0.353,0.075) (0.521,0.091) (0.717,0.107) (0.923,0.123)}; \addlegendentry{unconditional frontier}
\addplot[blue!70!black,thick] coordinates {(0,0.010) (1.050,0.679)}; \addlegendentry{admissible CAL}
\addplot[gray!80,dashed] coordinates {(0,0.010) (1.050,0.488)};
\addplot[black,dotted,thick] coordinates {(0,0.010) (1.050,0.215)};
\addplot[red!70!black,thick] coordinates {(0.000,0.051) (0.020,0.052) (0.040,0.055) (0.060,0.060) (0.080,0.067) (0.100,0.076) (0.120,0.087) (0.140,0.100) (0.160,0.115) (0.180,0.132) (0.200,0.151) (0.220,0.172) (0.240,0.195) (0.260,0.220) (0.280,0.247) (0.300,0.276) (0.320,0.307) (0.340,0.340) (0.360,0.375)}; \addlegendentry{indifference curve, $\gamma=5$}
\addplot[only marks,mark=*,mark size=2pt,blue!70!black] coordinates {(0.264,0.063) (0.834,0.542)};
\addplot[only marks,mark=*,mark size=2pt,red!70!black] coordinates {(0.127,0.091)};
\addplot[only marks,mark=o,mark size=2pt,gray!80] coordinates {(0.625,0.294)};
\node[font=\scriptsize,anchor=west] at (axis cs:0.284,0.063) {GMV$_{\Hst}$};
\node[font=\scriptsize,anchor=south east] at (axis cs:0.824,0.552) {tangency$_{\Hst}$};
\node[font=\scriptsize,red!70!black,anchor=south west] at (axis cs:0.135,0.105) {optimum};
\node[font=\scriptsize,anchor=west] at (axis cs:0.01,0.035) {$r_f$};

\end{groupplot}
\end{tikzpicture}
\caption{The mean--variance geometry of the admissible problem at one decision state ($z=0.3$), from the population moments of the reference system with the admissibility-optimal class $\{Z_1,Z_2\}$, $r_f=0.01$ and risk aversion $\gamma=5$. The conditional efficient frontier of the admissible class carries its minimum-variance portfolio (GMV) and its tangency portfolio; the admissible capital allocation line runs from $r_f$ through the tangency portfolio, and the investor's indifference curve is tangent to it at the optimum. The frontier and line of a representation that fails the separator condition (the proxy, open marker at its tangency portfolio) and the unconditional frontier and line are flatter, so the same investor obtains less expected return per unit of risk; the difference is attributable to the choice of information.}
\label{fig:geometry}
\end{figure}

\subsection{Consequences}\label{sec:results}

Three results follow from the construction. Each is stated with the hypothesis that carries it, because in each case the hypothesis is where the content lies.

\subsubsection{Value of admissible information}\label{sec:voi}

That more information cannot hurt is Blackwell's theorem \cite{Blackwell1951,Blackwell1953}: for a finite parameter space, an experiment attains no larger Bayes risk than another in every decision problem with a finite action space and bounded loss, for every prior, if and only if the other is a garbling of it, an equivalence that extends to general parameter spaces under domination and is quantified by Le Cam's deficiency \cite{LeCam1964,Torgersen1991}. The paper uses only its elementary direction, in the form it takes for sub-$\sigma$-algebras: if $\sigma(H_1)\subseteq\sigma(H_2)$, every $\sigma(H_1)$-measurable policy is $\sigma(H_2)$-measurable. The theorem presupposes that the finer information is eligible and that the coarser policies remain implementable, and both presuppositions fail in the admissible problem, the first for look-ahead and non-public information and the second when the feasible set depends on the representation.

\begin{theorem}[Value of admissible information]\label{thm:voi}
Let $H_1,H_2\in\calA_t$ with $H_1\preceq H_2$ and suppose $H_2$ embeds the $H_1$ decision problem (Definition~\ref{def:order}). Then
\begin{equation}\label{eq:voi}
V(H_2)\le V(H_1).
\end{equation}
If moreover $V(H_1)$ is attained at some $\pi_1\in\Pi(H_1)$ and there exists $\pi_2\in\Pi(H_2)$ with $J(\pi_2;H_2)<J(\iota\pi_1;H_2)$, then $V(H_2)<V(H_1)$.
\end{theorem}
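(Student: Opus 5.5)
The first inequality \eqref{eq:voi} follows directly from the embedding map $\iota$ of Definition~\ref{def:order}; no probabilistic argument is needed beyond what that definition already packages. Since $H_2$ embeds the $H_1$ decision problem, there is a map $\iota:\Pi(H_1)\to\Pi(H_2)$ with $J(\iota\pi;H_2)\le J(\pi;H_1)$ for every $\pi\in\Pi(H_1)$. I will work with $V(H)$ as the infimum of $J(\pi;H)$ over feasible policies $\pi\in\Pi(H)$. For the deterministic weights of Definition~\ref{def:apo} this is the same object, and I will note that the argument is unchanged in either reading. Then, for every $\pi\in\Pi(H_1)$,
\[
V(H_2)\le J(\iota\pi;H_2)\le J(\pi;H_1),
\]
because $\iota\pi$ is feasible for $H_2$. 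Taking the infimum over $\pi\in\Pi(H_1)$ gives $V(H_2)\le V(H_1)$. The convention $J\in\R\cup\{+\infty\}$ causes no trouble: if $\Pi(H_1)$ is empty or every loss is infinite, then $V(H_1)=+\infty$ and the bound is trivial.

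For the strict part, let $\pi_1\in\Pi(H_1)$ attain $V(H_1)$, so $J(\pi_1;H_1)=V(H_1)$. By hypothesis there is $\pi_2\in\Pi(H_2)$ with $J(\pi_2;H_2)<J(\iota\pi_1;H_2)$. Chaining this with the embedding inequality at $\pi_1$ gives
\[
V(H_2)\le J(\pi_2;H_2)<J(\iota\pi_1;H_2)\le J(\pi_1;H_1)=V(H_1).
\]
Attainment is used exactly once here: it turns the comparison at the single policy $\pi_1$ into a comparison of values. Without it, a strict gap at each policy need not survive the infimum, and I will remark on this briefly.

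The only step that needs care is making sure the embedding inequality is available as stated, since the theorem assumes it rather than deriving it. I will add a remark recovering the identity-map case quoted in Definition~\ref{def:order}, namely $\sigma(H_1^-)\subseteq\sigma(H_2^-)$ and $\calW(H_1)\subseteq\calW(H_2)$ under the conditional mean--variance loss. In that case $\pi\in\Pi(H_1)$ is $\sigma(H_2^-)$-measurable and lies in $\calW(H_2)$ almost surely. Its loss under $H_2$ is no larger than under $H_1$, because the law of total variance and the tower property give
\[
\E[\pi^{\Tr}\Sigma_{H_2}\pi]\le\E[\pi^{\Tr}\Sigma_{H_1}\pi],
\]
while the mean terms agree: since $\pi$ is $\sigma(H_1^-)$-measurable, the tower property gives $\E[\mu_{H_2}^{\Tr}\pi]=\E[\mu_{H_1}^{\Tr}\pi]$. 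This remark shows where the hypothesis can fail, which is what Corollary~\ref{cor:not-maximal} exploits: when $\calW$ shrinks with the representation, the identity is no longer a map into $\Pi(H_2)$.
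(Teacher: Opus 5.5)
Your proof is correct and follows the paper's own proof essentially line for line. You obtain the weak inequality by pushing each $\pi\in\Pi(H_1)$ through $\iota$ and taking infima, the strict inequality by the chain $V(H_2)\le J(\pi_2;H_2)<J(\iota\pi_1;H_2)\le J(\pi_1;H_1)=V(H_1)$, and the identity-map case by the law of total variance, which the paper places as Step~0 of its proof.

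One small caution: an infimum over deterministic weights and an infimum over $\sigma(H^{-})$-measurable policies are not in general the same quantity, and a general $\iota$ need not send constant policies to constant ones. Your argument should therefore be read with $V(H)=\inf_{\pi\in\Pi(H)}J(\pi;H)$, which is the reading the paper's proof also uses.
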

\noindent The proof is in Appendix~\ref{app:proofs}.

\begin{corollary}[The envelope need not choose maximal information]\label{cor:not-maximal}
Let $H_1\preceq H_2$. If $H_2\notin\calA_t$, or if $H_2\in\calA_t$ but no embedding $\iota$ exists, Theorem~\ref{thm:voi} is inapplicable and $V(H_2)\le V(H_1)$ may fail. In particular maximal raw information need not solve the envelope problem even when it is more informative in the Blackwell order.
\end{corollary}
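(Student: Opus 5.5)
The corollary is a statement that a monotonicity may fail, so I will prove it by counterexample: one explicit instance for each of the two hypotheses of Theorem~\ref{thm:voi} that can break. The first sentence needs no separate argument. Theorem~\ref{thm:voi} requires both $H_1,H_2\in\calA_t$ and an embedding $\iota$, and when either is absent its conclusion is simply not implied. What must be shown is that the inequality can actually fail, or, in the inadmissible case, that the more informative representation is excluded from the envelope however low its value is. I will build both instances on the reference system, or on a two-asset reduction of it, so that every quantity is an exact population moment.

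For the case $H_2\notin\calA_t$, take $H_1=\{Z_1,Z_2\}$ and $H_2=\{Z_1,Z_2,X^{+}\}$. Then $\sigma(H_1)\subseteq\sigma(H_2)$, so $H_1\preceq H_2$, and by the elementary Blackwell direction every $H_1$-policy is an $H_2$-policy. Hence the raw value of $H_2$ is at most $V(H_1)$, typically strictly so, because $X^{+}$ predicts $R$. Yet $H_2$ fails condition (i) of Definition~\ref{def:conditions}, so it does not appear in the infimum~\eqref{eq:apo}. The maximal raw information, here the full set of six variables, which contains $X^{+}$, is therefore not a solution of the envelope problem. This already proves the ``in particular'' clause. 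I will remark that here the inequality holds in the raw sense, and what fails is that the Blackwell-maximal element belongs to the problem.

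For the case where both are admissible but no embedding exists, I will construct $\calW(H)$ depending on $H$. Take two assets and the conditional mean--variance loss $J(w;H)=\E[\tfrac{\gamma}{2}w^{\Tr}\Sigma_Hw-\mu_H^{\Tr}w]$. Let $H_1$ be trivial information with $\calW(H_1)=\{w:\ind^{\Tr}w=1\}$. Let $H_2=\sigma(Z)$ with $Z$ public and admissible, but give it a representation-dependent constraint such as a leverage or turnover limit, or simply $\calW(H_2)=\{(1,0)\}$ (a mandate restricting the finer model to one asset). Choose the moments so that the single feasible weight under $H_2$ has expected conditional loss strictly above the unconstrained $H_1$ optimum: for example, asset~2 has much lower variance, and $Z$ carries little information about the means. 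A direct computation then gives $V(H_2)>V(H_1)$. It also follows that no map $\iota$ with $J(\iota\pi;H_2)\le J(\pi;H_1)$ exists, since otherwise Theorem~\ref{thm:voi} would yield the opposite inequality. This contrapositive step confirms that the example falls in the stated case.

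The main obstacle is keeping the second example honest, so that the constraint is a genuine mandate feature rather than an artifact. I would first try a mild version, a leverage bound that tightens with the dimension of the representation, and check numerically that the strict reversal survives. If the margin is delicate, I will fall back on the degenerate singleton constraint, which makes the computation immediate.
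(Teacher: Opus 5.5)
Your proposal is correct and follows the paper's route. An inadmissible $H_2$, such as one containing $X^{+}$, simply does not enter the infimum in~\eqref{eq:apo}. Failure of embedding is exhibited by a representation-dependent feasible set $\calW(H_2)$ that excludes the $H_1$ optimizer while the finer information adds essentially nothing; the paper's phrase ``while the conditional law is unchanged'' is obtained exactly by taking $Z$ independent of $R$. Your explicit singleton constraint, direct computation and contrapositive check that no $\iota$ exists make concrete what the paper only sketches via Example~\ref{ex:two-asset}. They also avoid the uniqueness and attainment of $w^{\star}_{H_1}$ that the paper's one-line argument tacitly assumes.
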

\noindent The proof is in Appendix~\ref{app:proofs}.

Enlarging a financial filtration can violate the financial condition, a future variable improves prediction while failing chronology, and a richer observational representation need not carry interventional semantics: admissibility truncates the information order. Figure~\ref{fig:voichain} shows both the monotonicity inside the admissible class and the two ways it fails on a chain of representations of the reference system.

\begin{figure}[!htb]
\centering
\begin{tikzpicture}
\begin{axis}[width=11cm,height=6.2cm,ybar,bar width=14pt,xtick={0,1,2,3,4},xticklabels={{unconditional},{$\{Z_1\}$},{$\{Z_1,Z_2\}$},{$\{P,Z_2\}$},{$\{Z_1,Z_2,X^{+}\}$}},x tick label style={font=\scriptsize},ylabel={decision value $V(H)$ (lower is better)},ymin=-1.80,ymax=0.3,tick label style={font=\footnotesize},label style={font=\small},legend style={font=\footnotesize},legend cell align=left,nodes near coords,every node near coord/.append style={font=\scriptsize,/pgf/number format/fixed,/pgf/number format/precision=2}]
\addplot[fill=black!70,draw=black] coordinates {(0,0.1214) (1,-0.2206) (2,-0.5701)}; \addlegendentry{admissible, nested}
\addplot[fill=gray!40,draw=gray] coordinates {(3,-0.3059)}; \addlegendentry{admissible, fails separation}
\addplot[fill=red!60,draw=red!70!black] coordinates {(4,-1.5632)}; \addlegendentry{contains look-ahead $X^{+}$: inadmissible}
\end{axis}
\end{tikzpicture}
\caption{The value of information along a chain of representations, reference system, risk aversion $\gamma=5$, full investment, averaged over the state. Along the nested admissible chain the value improves, as Theorem~\ref{thm:voi} requires; the proxy pair, which carries less information than $\{Z_1,Z_2\}$, is worse; and the representation containing the look-ahead variable has by far the best value, which is the trap of Corollary~\ref{prop:lookahead}: a rule that selects by value selects it. The second failure of monotonicity is not shown here because it needs representation-dependent constraints: with the same system, $\{Z_1\}$ under a position bound of one has value $-0.218$ while the richer $\{Z_1,Z_2\}$ under a bound of $0.2$ has value $-0.158$, so more information with a tighter constraint is worth less, and no embedding exists.}
\label{fig:voichain}
\end{figure}

The failure of chronology has a consequence that is immediate from the theorem but worth stating, because it is the mechanism by which look-ahead enters a data-driven selection.

\begin{corollary}[Decision-loss selection under look-ahead contamination]\label{prop:lookahead}
Let $H$ be a representation with decision-time component $H^{-}$, and let $X^{+}$ be a random variable measurable with respect to the horizon-closed information and not $\sigma(H^{-})$-measurable. Let $H^{\sharp}$ be the representation with decision-time component $\sigma(H^{-},X^{+})$ and $\calW(H^{\sharp})=\calW(H)$. Then $V(H^{\sharp})\le V(H)$; and the inequality is strict whenever the $H^{\sharp}$-conditional optimizer differs from the $H^{-}$-measurable optimizer on a set of positive probability, which for the quadratic loss with $\calW$ containing a neighbourhood of the optimizer holds whenever $\Cov(R\mid H^{\sharp})^{-1}\E[R\mid H^{\sharp}]\neq\Cov(R\mid H^{-})^{-1}\E[R\mid H^{-}]$ on a set of positive probability. Consequently, on any candidate set that contains $H$ and $H^{\sharp}$, no rule that selects by minimizing $V$, the envelope problem included, can select $H$; and if $H$ minimizes $V$ among the candidates that exclude $X^{+}$, or if every such candidate has its own look-ahead augmentation in the set, every $V$-minimizer contains $X^{+}$; decision-loss selection is consistent only on candidate sets from which the chronological admissibility condition has removed every such candidate.
\end{corollary}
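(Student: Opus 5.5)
The plan is to obtain the weak inequality from Theorem~\ref{thm:voi} and then to treat strictness and the selection consequences separately. Take $H_1=H$ and $H_2=H^{\sharp}$. By construction $\sigma(H^{-})\subseteq\sigma(H^{-},X^{+})$, so $H\preceq H^{\sharp}$ in the sense of Definition~\ref{def:order}. Also $\calW(H^{\sharp})=\calW(H)$, so the identity map is an embedding; this is the law-of-total-variance remark following Definition~\ref{def:order}. Theorem~\ref{thm:voi} needs both representations to lie in $\calA_t$, but $H^{\sharp}$ is inadmissible. I will therefore not invoke the theorem as stated. Instead I will rerun its one-line argument for the value function $V$ defined on the candidate set. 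Every $\pi\in\Pi(H)$ is $\sigma(H^{\sharp,-})$-measurable and feasible for $H^{\sharp}$, and $J(\pi;H^{\sharp})\le J(\pi;H)$. Taking the infimum over $\pi$ gives $V(H^{\sharp})\le V(H)$. The corollary's point is exactly that nothing in this computation uses admissibility.

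For strictness I will use the second clause of Theorem~\ref{thm:voi} in the same unadmitted form. Let $\pi_1$ be the $H^{-}$-measurable optimizer. Suppose the pointwise $H^{\sharp}$-conditional optimizer $\pi_2$ differs from $\pi_1$ on a set of positive probability. Because conditional minimization is pointwise in the conditioning state, $\pi_2$ achieves a conditional loss no larger than that of $\pi_1$ almost surely. On the disagreement set the loss is strictly smaller, provided the conditional problem has a unique minimizer. Integrating then gives $J(\pi_2;H^{\sharp})<J(\pi_1;H^{\sharp})$.

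For the quadratic loss $\tfrac\gamma2 w^{\Tr}\Sigma w-\mu^{\Tr}w$, the conditional objective is strictly convex when $\Sigma\succ0$. If $\calW$ contains a neighbourhood of the optimizer, the constraint is inactive and the unique minimizer is $\gamma^{-1}\Sigma^{-1}\mu$. So the two optimizers differ exactly where $\Cov(R\mid H^{\sharp})^{-1}\E[R\mid H^{\sharp}]\neq\Cov(R\mid H^{-})^{-1}\E[R\mid H^{-}]$.

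One subtlety has to be handled with care. The $H$-optimizer is optimal against the $H$-loss, while the comparison takes place under the $H^{\sharp}$-loss. I will bridge the two with the tower identity $J(\pi_1;H^{\sharp})\le J(\pi_1;H)=V(H)$. I expect this step to be the main obstacle: matching the conditional-variance loss of Definition~\ref{def:order} across the two $\sigma$-algebras so that strictness survives. The route I would try first is to write the mean–variance loss as conditional second moments and apply the law of total variance, which produces a nonnegative gap term. Strictness then comes from the optimizer gap alone.

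The selection statements are then order-theoretic. Strictness gives $V(H^{\sharp})<V(H)$, so no $V$-minimizing rule selects $H$ when both are in the candidate set. Now suppose $H$ minimizes $V$ among the candidates that exclude $X^{+}$. Then every such candidate $K$ satisfies $V(K)\ge V(H)>V(H^{\sharp})$, so every minimizer contains $X^{+}$. Alternatively, suppose every $X^{+}$-free candidate $K$ has its augmentation $K^{\sharp}$ in the set, with strictness applying to each. Then no $X^{+}$-free candidate can be a minimizer. In the finite case a minimizer exists, so it contains $X^{+}$. The final sentence of the corollary is the contrapositive: consistency requires the chronological condition of Definition~\ref{def:conditions}(i) to remove every such $H^{\sharp}$ beforehand.
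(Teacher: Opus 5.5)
Your proposal is correct and follows the paper's proof essentially step for step: the identity embedding with the law-of-total-variance inequality for the weak bound, strict convexity of the $H^{\sharp}$-conditional quadratic objective integrated over the disagreement set for strictness, and the same order-theoretic argument for the two selection clauses. Your two refinements are, if anything, slightly more careful than the written proof: you rerun the argument of Theorem~\ref{thm:voi} rather than invoking it for the inadmissible $H^{\sharp}$, and you bridge with the inequality $J(\pi_1;H^{\sharp})\le J(\pi_1;H)$ where the paper says the conditional objective ``averages to'' $J(\pi;H)$, although only the mean term is preserved exactly and the variance term drops by a nonnegative amount.
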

\noindent The proof is in Appendix~\ref{app:proofs}.

The look-ahead of this corollary should be distinguished from the one that empirical practice already controls. The standard notion is a property of the data, a value used at the decision node that was not available then, through a revised series, a publication lag, a survivorship-biased universe or a misaligned timestamp, and it is removed by point-in-time data and by purging the evaluation sample. The corollary is a property of the selection rule. With every series correctly dated, a rule that chooses the conditioning information by the decision loss it produces prefers, by the first direction of Theorem~\ref{thm:voi}, the finest information available in the candidate set, and horizon information is the finest there is; the rule does not fail to detect look-ahead, it seeks it, and no loss-based criterion can be asked to police it. The same mechanism has an overfitting form that involves no future variable at all, in which the rule selects whatever information fits the sample at hand, and Section~\ref{sec:sidebyside} measures it on market data where every timestamp is correct. The chronological condition is therefore placed before the loss, as a hard constraint on the candidate set, which is why no risk--return quantity enters the admissibility order; and the paper uses horizon information where it is legitimate, in the certification of separation on a block that is never used for the decision, and excludes it from the decision-time component on which portfolios are measurable.

The proposition is the population form of the fact, measured in Section~\ref{sec:evidence}, that the look-ahead representation attains the lowest in-sample loss in every replication: the chronological condition cannot be replaced by a loss criterion, because the criterion is how look-ahead enters.

\label{sec:two-stage}

The information problem and the portfolio problem must be kept separate. The first selects a class by admissibility, and only after that class is fixed does the second construct the efficient set conditional on it (Figure~\ref{fig:architecture}). The conditional portfolio problem is used in two parametrizations of one frontier, the utility form with risk aversion $\gamma$ and the target form with return target $m$, related by Proposition~\ref{prop:merton}.

\begin{figure}[!htb]
\centering
\resizebox{\textwidth}{!}{\begin{tikzpicture}[
  box/.style={draw,rounded corners=2pt,align=center,text width=30mm,minimum height=11mm,inner sep=3pt,font=\small},
  stage/.style={box,fill=blue!7,thick},
  arr/.style={-{Latex[length=2.2mm]},thick}]
\node[box]   (data)  at (0,0)    {point-in-time data; candidate set $\calH_t$};
\node[box]   (cond) at (4.1,0)  {admissibility conditions satisfied};
\node[box]   (adm)   at (8.2,0)  {admissible class $\calA_t$};
\node[stage] (s1)    at (12.3,0) {\textbf{Stage I}\\ $\Ast=\Min_{\prec_{\calA}}\calA_t$};
\node[box]   (fix) at (12.3,-2.6) {fix $\Hst$ (or its $\eqv$-class)};
\node[box]   (prim)   at (8.2,-2.6)  {conditional primitives $\mu_{\Hst},Q_{\Hst},\calW(\Hst)$};
\node[stage] (s2)     at (4.1,-2.6)  {\textbf{Stage II}\\ $\calE^{\star}_t(m)$, $\rho^{\star}_t(m)$};
\node[box,draw=none,fill=none,font=\small\itshape] (note) at (0,-2.6) {target return $m$ enters only here};
\draw[arr] (data)--(cond); \draw[arr] (cond)--(adm); \draw[arr] (adm)--(s1);
\draw[arr] (s1)--(fix); \draw[arr] (fix)--(prim); \draw[arr] (prim)--(s2);
\draw[arr,dashed] (note)--(s2);
\end{tikzpicture}}
\caption{The two-solution architecture. Candidates violating a required admissibility condition are infeasible, not penalized. Stage~I ranks the survivors by the admissibility order alone and fixes the minimal class. Stage~II constructs the conditional efficient set and frontier inside that class; portfolio performance never relaxes a violated admissibility condition.}
\label{fig:architecture}
\end{figure}

\subsubsection{Diversification under admissible information}\label{sec:diversification}
Suppose the conditional mean is affine in a statistic of the driver path as in Proposition~\ref{prop:tower}, so that with $u_H:=R-\E[R\mid H^{+}]$ the conditional return innovation decomposes as
\begin{equation}\label{eq:decomp}
R-\mu_H=B_H\xi_H+u_H,\quad \Cov(\xi_H\mid H)=V_H,\quad \Cov(u_H,\xi_H\mid H)=0,\quad \Omega_H:=\Cov(u_H\mid H)\succeq0 ,
\end{equation}
where the orthogonality of $u_H$ and $\xi_H$ is a consequence of the construction, $u_H$ having zero conditional mean given $H^{+}$ while $\xi_H$ is $H^{+}$-measurable. Then $\Sigma_H=B_HV_HB_H^{\Tr}+\Omega_H$, and a risk functional built on the response term alone agrees with covariance risk on $\calW(H)$ if and only if $w\mapsto w^{\Tr}\Omega_Hw$ is constant there, in particular when $\Omega_H=0$; otherwise the two rank portfolios differently, and the difference is the residual term the rest of this subsection studies.\label{prop:residual}

By Proposition~\ref{prop:tower} an admissible representation may induce a response operator $B_H\in\R^{n\times k}$ and an innovation covariance $V_H\in\R^{k\times k}$. Define the represented response risk and the response-based objective
\begin{equation}\label{eq:response-risk}
Q^{\mathrm{resp}}_{H}=B_HV_HB_H^{\Tr},\qquad
J^{\mathrm{resp}}(w;H)=\tfrac{\gamma}{2}\,w^{\Tr}Q^{\mathrm{resp}}_{H}w-r_H^{\Tr}w .
\end{equation}

The causal origin of the separator condition is the Commonality Principle \cite{RD2023}, which states that the drivers relevant to a portfolio's diversification are the common causes of its constituents, and whose structural proof \cite{RD2025causal} runs through Reichenbach's common-cause principle \cite{Reichenbach1956} and the causal Markov condition: in a structural causal model in which the declared drivers are the complete set of common causes of the constituents, with no latent common cause and no causal link among the constituents, the causal Markov condition \cite{Pearl2009,Spirtes2000} makes the constituents mutually independent given the drivers, so the common causes are a structural separator and in particular a separator, while omitting a common cause leaves dependence and conditioning on a common effect creates it. Conditioning on the common causes places their innovations in the systematic term and leaves the residual diagonal; the principle is recorded as Proposition~\ref{prop:commonality} in Appendix~\ref{app:used}. The framework does not assume the model; it makes the principle's conclusion, separation, a certifiable condition. Markowitz diversification acts on the quadratic form $w\mapsto w^{\Tr}\Sigma w$, and under an approximate factor structure the idiosyncratic part is diversified away while the factor part is not (Figure~\ref{fig:divcurve}) \cite{Ross1976,ChamberlainRothschild1983}. Part (a) below is that argument; the content of the proposition is that the split it presupposes is a property of the conditioning class and is stable under interventions only under interventional admissibility.

\begin{figure}[!htb]
\centering
\begin{tikzpicture}
\begin{axis}[width=10.5cm,height=6.2cm,xlabel={number of assets $n$ (equal weights)},ylabel={portfolio variance},xmin=1,xmax=60,ymin=0,tick label style={font=\footnotesize},label style={font=\small},legend style={font=\footnotesize},legend cell align=left]
\addplot[thick,red!70!black] coordinates {(1,1.0093) (2,0.7193) (3,0.6226) (4,0.5742) (5,0.5452) (6,0.5259) (7,0.5121) (8,0.5017) (9,0.4937) (10,0.4872) (11,0.4820) (12,0.4776) (13,0.4738) (14,0.4707) (15,0.4679) (16,0.4655) (17,0.4633) (18,0.4614) (19,0.4597) (20,0.4582) (21,0.4568) (22,0.4556) (23,0.4544) (24,0.4534) (25,0.4524) (26,0.4515) (27,0.4507) (28,0.4499) (29,0.4492) (30,0.4486) (31,0.4479) (32,0.4473) (33,0.4468) (34,0.4463) (35,0.4458) (36,0.4453) (37,0.4449) (38,0.4445) (39,0.4441) (40,0.4437) (41,0.4434) (42,0.4430) (43,0.4427) (44,0.4424) (45,0.4421) (46,0.4418) (47,0.4416) (48,0.4413) (49,0.4411) (50,0.4408) (51,0.4406) (52,0.4404) (53,0.4402) (54,0.4400) (55,0.4398) (56,0.4396) (57,0.4394) (58,0.4392) (59,0.4390) (60,0.4389)}; \addlegendentry{representation fails separation: true variance}
\addplot[thick,blue!60!black,dashed] coordinates {(1,1.0093) (2,0.6797) (3,0.5698) (4,0.5149) (5,0.4820) (6,0.4600) (7,0.4443) (8,0.4325) (9,0.4234) (10,0.4160) (11,0.4100) (12,0.4050) (13,0.4008) (14,0.3972) (15,0.3941) (16,0.3913) (17,0.3889) (18,0.3867) (19,0.3848) (20,0.3831) (21,0.3815) (22,0.3801) (23,0.3788) (24,0.3776) (25,0.3765) (26,0.3755) (27,0.3745) (28,0.3737) (29,0.3728) (30,0.3721) (31,0.3714) (32,0.3707) (33,0.3701) (34,0.3695) (35,0.3689) (36,0.3684) (37,0.3679) (38,0.3675) (39,0.3670) (40,0.3666) (41,0.3662) (42,0.3658) (43,0.3654) (44,0.3651) (45,0.3648) (46,0.3644) (47,0.3641) (48,0.3638) (49,0.3636) (50,0.3633) (51,0.3630) (52,0.3628) (53,0.3625) (54,0.3623) (55,0.3621) (56,0.3619) (57,0.3617) (58,0.3615) (59,0.3613) (60,0.3611)}; \addlegendentry{separator, or what the diagonal model reports}
\addplot[thick,black!50] coordinates {(1,0.3501) (2,0.3501) (3,0.3501) (4,0.3501) (5,0.3501) (6,0.3501) (7,0.3501) (8,0.3501) (9,0.3501) (10,0.3501) (11,0.3501) (12,0.3501) (13,0.3501) (14,0.3501) (15,0.3501) (16,0.3501) (17,0.3501) (18,0.3501) (19,0.3501) (20,0.3501) (21,0.3501) (22,0.3501) (23,0.3501) (24,0.3501) (25,0.3501) (26,0.3501) (27,0.3501) (28,0.3501) (29,0.3501) (30,0.3501) (31,0.3501) (32,0.3501) (33,0.3501) (34,0.3501) (35,0.3501) (36,0.3501) (37,0.3501) (38,0.3501) (39,0.3501) (40,0.3501) (41,0.3501) (42,0.3501) (43,0.3501) (44,0.3501) (45,0.3501) (46,0.3501) (47,0.3501) (48,0.3501) (49,0.3501) (50,0.3501) (51,0.3501) (52,0.3501) (53,0.3501) (54,0.3501) (55,0.3501) (56,0.3501) (57,0.3501) (58,0.3501) (59,0.3501) (60,0.3501)}; \addlegendentry{systematic floor $\beta^{\Tr}V\beta$}
\end{axis}
\end{tikzpicture}
\caption{The textbook diversification curve and what admissibility adds to it. Equal-weight portfolio variance against the number of assets, with the systematic variance and residual variance of the reference system and a residual correlation of $0.12$ left by a representation that fails separation, the portfolio-weighted average observed on the equity panel. Under a separator the residual decays as $1/n$ to the systematic floor, which is the classical picture and also what a model with diagonal residual reports whatever the representation. When the representation does not separate, the true curve plateaus at $\rho\,\sigma^{2}$ above the floor, Proposition~\ref{prop:diversification}(b): the gap between the two curves at large $n$ is risk counted as diversifiable that is not, and it is what the consequence statistic of part (d) measures for a given portfolio.}
\label{fig:divcurve}
\end{figure}

\begin{proposition}[Diversification is a property of the admissible class]\label{prop:diversification}
Let~\eqref{eq:decomp} hold for a representation $H$ and let $\calW\subseteq\{w\in\R^{n}:\ind^{\Tr}w=1\}$.
\begin{enumerate}[label=(\alph*)]
\item (\emph{Separation makes the residual idiosyncratic.}) If $H$ satisfies the separator condition $G_S$ in the exact sense of Definition~\ref{def:separator}, and $\xi_H$ in~\eqref{eq:decomp} is the innovation of $H^{+}$ relative to $H^{-}$, then $\Omega_H$ is diagonal, and for every $w\in\calW$
\begin{equation}\label{eq:idio-bound}
w^{\Tr}\Omega_Hw\ \le\ \Big(\max_{i}\Omega_{H,ii}\Big)\,\|w\|_{2}^{2},
\end{equation}
so the residual risk of an equal-weight portfolio, or of any portfolio whose squared Euclidean norm is of order $1/n$, is of order $\max_i\Omega_{H,ii}/n$ and vanishes along any sequence of models in which $n\to\infty$ with $\sup_i\Omega_{H,ii}$ uniformly bounded, the approximate-factor-structure setting of \cite{ChamberlainRothschild1983}. The systematic component $w^{\Tr}B_HV_HB_H^{\Tr}w$ admits no such bound: it equals $\beta_w^{\Tr}V_H\beta_w$ with $\beta_w:=B_H^{\Tr}w$ the portfolio's exposure to the common drivers, and is unaffected by spreading weights among assets with equal exposures.
\item (\emph{Failure of the admissibility condition misclassifies risk.}) If $H$ fails $G_S$, so that $\Omega_H$ has a nonzero off-diagonal block $\Omega^{\mathrm{off}}_H$, then $w^{\Tr}\Omega_Hw$ equals $\sum_i w_i^2\Omega_{H,ii}$ plus $w^{\Tr}\Omega^{\mathrm{off}}_Hw$, and the second term does not vanish under spreading: for equal weights it equals $n^{-2}\ind^{\Tr}\Omega^{\mathrm{off}}_H\ind$, whose order is that of the average off-diagonal residual covariance rather than $1/n$. Common variation not explained by $H$ is then counted as diversifiable when it is not.
\item (\emph{Interventional admissibility makes the mechanism invariant.}) Let $H$ be admissible for an interventional contract with declared regime set $\mathcal E$, so that $\calL^{e}(R\mid H^{+})=\calL^{\mathrm{o}}(R\mid H^{+})$ for every $e\in\mathcal E$ by Definition~\ref{def:claim}. Then for every $e\in\mathcal E$ the response $B_H$, the residual covariance $\Omega_H$ and the diagonality in (a) are the same under the post-intervention law as under the observational one. The systematic term $B_HV_HB_H^{\Tr}$ is invariant in addition only if the law of the driver innovation given $H^{-}$, and hence $V_H$, is also invariant, which is a condition on the drivers' own dynamics and not on the response mechanism; an intervention can leave the mechanism $R\mid H^{+}$ unchanged while changing how much the drivers move, in which case the diversifiable component is the same before and after and the systematic component is not.
\item (\emph{What the failure costs a given portfolio.}) Let $s_i=\Omega_{H,ii}^{1/2}$, $v=w\odot s$, $c_{ij}=\Omega_{H,ij}/(s_is_j)$ the residual correlations, and $D=(\sum_i v_i)^{2}-\sum_i v_i^{2}$. If $D\neq0$, the variance that the diagonal-residual restriction discards satisfies the exact decomposition
\begin{equation}\label{eq:misallocation}
\frac{w^{\Tr}\Omega^{\mathrm{off}}_Hw}{w^{\Tr}\Sigma_Hw}=\underbrace{\frac{\|v\|_2^{2}}{w^{\Tr}\Sigma_Hw}}_{\text{residual share}}\;\times\;\underbrace{\Big(\frac{(\sum_iv_i)^{2}}{\|v\|_2^{2}}-1\Big)}_{\text{breadth}-1}\;\times\;\underbrace{\frac{1}{D}\sum_{i\neq j}v_iv_jc_{ij}}_{\bar c_w},
\end{equation}
where $\bar c_w$ is the portfolio-weighted average residual correlation. If $v\ge0$ then $|\bar c_w|\le\varepsilon_t(H)$, so the relative misallocation is at most the screening defect times the residual share times breadth minus one. The defect therefore bounds the cost but does not order it: it is a maximum over pairs, whereas the cost is governed by $\bar c_w$, a weighted average.
\end{enumerate}
\end{proposition}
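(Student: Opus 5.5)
The proof goes part by part. Each part reduces to the decomposition~\eqref{eq:decomp} and the law of total covariance of Proposition~\ref{prop:tower}, so the work is mostly identifying which conditional objects are involved and checking measurability.

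\emph{Part (a).} Exact separation (Definition~\ref{def:separator}) makes the components of $R$ mutually independent given $H^{+}$. Hence $\Cov(R\mid H^{+})$ is diagonal almost surely. Since $u_H=R-\E[R\mid H^{+}]$, we have $\Cov(u_H\mid H^{+})=\Cov(R\mid H^{+})$. Because $\xi_H$ is the innovation of $H^{+}$ relative to $H^{-}$, $\Omega_H=\E[\Cov(R\mid H^{+})\mid H^{-}]$ by the tower property, and a conditional expectation of a diagonal matrix is diagonal. The bound~\eqref{eq:idio-bound} is then $\sum_i w_i^2\Omega_{H,ii}\le\max_i\Omega_{H,ii}\sum_iw_i^2$. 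For equal weights $\|w\|_2^2=1/n$, which gives the stated rate. For the systematic term I write $w^{\Tr}B_HV_HB_H^{\Tr}w=\beta_w^{\Tr}V_H\beta_w$. This depends on $w$ only through $\beta_w$, and $\beta_w$ is preserved when weight is reallocated among assets whose rows of $B_H$ coincide.

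\emph{Part (b).} I split $\Omega_H=\mathrm{diag}(\Omega_H)+\Omega^{\mathrm{off}}_H$ and substitute $w=\ind/n$. This gives $n^{-2}\ind^{\Tr}\Omega^{\mathrm{off}}_H\ind=\frac{n(n-1)}{n^2}\cdot\overline{\Omega}_{\mathrm{off}}$, where $\overline{\Omega}_{\mathrm{off}}$ is the average off-diagonal entry, so the term is of that order and not $1/n$.

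\emph{Part (c).} Invariance of $\calL(R\mid H^{+})$ across $e$ gives invariance of $\E[R\mid H^{+}]$ and of $\Cov(R\mid H^{+})$ as functions of the conditioning variables. The response $B_H$ is read off the affine form of $\E[R\mid H^{+}]$ in the driver statistic, so it is invariant. To get invariance of $\Omega_H=\E[\Cov(R\mid H^{+})\mid H^{-}]$ as well, I read $\Omega_H$ through the horizon-closed conditional covariance (the object that (a) shows diagonal). Otherwise the outer expectation requires the law of $H^{+}$ given $H^{-}$, which is exactly the object (c) declares may change. This reading is the step I expect to be the main obstacle: I would make it explicit and note that it is what the statement means by ``mechanism''. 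For the systematic term, I exhibit an intervention that rescales the driver innovation while leaving the kernel $R\mid H^{+}$ fixed. Then $V_H$ changes and $B_H$ and $\Omega_H$ do not, which shows the extra condition is needed.

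\emph{Part (d).} This is algebra. Write $w^{\Tr}\Omega^{\mathrm{off}}_Hw=\sum_{i\neq j}w_iw_js_is_jc_{ij}=\sum_{i\neq j}v_iv_jc_{ij}$. Multiply and divide by $D$, and observe that $D=\|v\|_2^2\big((\sum v_i)^2/\|v\|_2^2-1\big)$. Dividing by $w^{\Tr}\Sigma_Hw$ then yields~\eqref{eq:misallocation}. For the bound, suppose $v\ge0$. Then each weight $v_iv_j\ge0$ and $\sum_{i\neq j}v_iv_j=D$, so $\bar c_w$ is a convex combination of the $c_{ij}$. Each residual correlation satisfies $|c_{ij}|\le\varepsilon_t(H)$ by the definition of the screening defect, provided the defect is read as the residual correlation given $H$. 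This gives $|\bar c_w|\le\varepsilon_t(H)$ and the stated product bound. The closing remark, that a maximum bounds but does not order a weighted average, follows from the same convex-combination representation.
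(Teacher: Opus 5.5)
Parts (a), (b) and (d) follow the paper's proof step for step: the tower decomposition with the conditional expectation of a diagonal matrix, the $(n-1)/n$ computation at equal weights, and the convex-combination reading of $\bar c_w$ when $v\ge0$. In (d) you can drop your proviso on how the defect is read. If $\varepsilon_t(H)$ bounds the $H^{+}$-conditional correlations almost surely, conditional Cauchy--Schwarz gives $|\Omega_{H,ij}|\le\varepsilon_t(H)\,\E[(\Var(R_i\mid H^{+})\Var(R_j\mid H^{+}))^{1/2}\mid H^{-}]\le\varepsilon_t(H)\,s_is_j$. So $|c_{ij}|\le\varepsilon_t(H)$ under either reading.

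Part (c) is where you depart from the paper, and your objection is sound. The paper's proof goes from invariance of $\Cov(R\mid H^{+})$ to invariance of $\Omega_H=\E[\Cov(R\mid H^{+})\mid H^{-}]$ without comment. But the outer expectation integrates against the law of $H^{+}$ given $H^{-}$, which the hypothesis leaves free. For a counterexample, take $H^{-}$ trivial, a scalar driver $Z$ generating $H^{+}$, and $R_i=Z+|Z|\epsilon_i$ with $\epsilon_i$ standard normal, independent of each other and of $Z$. The kernel $R\mid Z$ is fixed, $B_H=\ind$, and $\Cov(R\mid H^{+})=Z^{2}I$. An intervention that doubles $Z$ multiplies $\Omega_H=\E[Z^{2}]I$ by four.

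So the literal claim needs an extra hypothesis. Your remedy, reading $\Omega_H$ as $\Cov(R\mid H^{+})$, is not a proof, because it replaces the object defined in~\eqref{eq:decomp}. State instead what the argument gives:
\begin{itemize}
\item $B_H$, $\Cov(R\mid H^{+})$ and the diagonality of $\Omega_H$ are invariant with no further assumption.
\item $\Omega_H$ itself is invariant when $\Cov(R\mid H^{+})$ is $H^{-}$-measurable, for example an additive residual independent of the driver path, as in the Gaussian reference system.
\end{itemize}
The alternative sufficient condition, invariance of the law of $H^{+}$ given $H^{-}$, would also fix $V_H$ and empty the last clause of (c).

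For the same reason, the intervention you construct to show that $V_H$ can move with the mechanism fixed must be homoscedastic. Otherwise it moves $\Omega_H$ as well, and it does not illustrate the clause that the diversifiable component is unchanged while the systematic one is not. With that choice, your constructive example is a stronger argument for the ``only if'' than the paper's remark that the hypothesis does not constrain $V_H$.
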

\noindent The proof is in Appendix~\ref{app:proofs}. Under a conditional-law contract part (c) has no basis: an intervention may sever the channel through which $H$ explained the covariation, after which $\Omega_H$ acquires off-diagonal mass and part (b) applies; Experiment~\ref{exp:E3} measures this case. The proposition is the portfolio-level form of the Commonality Principle stated in Proposition~\ref{prop:commonality}: Markowitz diversification acts on $\Sigma$, admissible diversification on the residual $\Omega_{\Hst}$ of the selected class, and causal diversification requires the class to be interventionally admissible.


Two consequences of the decomposition can be stated as propositions, one about when the separator condition can be met at all and one about what the certification permits when it is not. The first explains the ordering across universes that Section~\ref{sec:market} finds.

\begin{proposition}[Aggregation raises the defect]\label{prop:aggregation}
Let the residuals of $N$ securities given a representation $H$ satisfy $u_i=f+e_i$, where $f$ is a common component that $H$ leaves unexplained, with $\Var f=\sigma_f^{2}>0$, and the $e_i$ are pairwise uncorrelated and uncorrelated with $f$, with $\Var e_i=\sigma_e^{2}$. For two disjoint equally weighted baskets of $k$ securities the residual correlation given $H$ is
\[
\rho_k=\frac{\sigma_f^{2}}{\sigma_f^{2}+\sigma_e^{2}/k},
\]
strictly increasing in $k$ with $\rho_k\to1$. The separator condition at tolerance $\tau\in(0,1)$ holds for baskets of size $k$ if and only if $k\le k^{\star}(\tau):=\tau\,\sigma_e^{2}/\big((1-\tau)\,\sigma_f^{2}\big)$; for $k>k^{\star}(\tau)$ no representation that leaves $f$ unexplained is admissible, whatever else it contains.
\end{proposition}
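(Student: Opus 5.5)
The plan is a direct computation of the basket residual correlation, followed by a monotonicity argument and a solution of the tolerance inequality. Let the two disjoint baskets be $S_1,S_2$ with $|S_1|=|S_2|=k$, and write their equally weighted residuals as
\[
U_j=k^{-1}\sum_{i\in S_j}u_i=f+\bar e_j,\qquad \bar e_j:=k^{-1}\sum_{i\in S_j}e_i .
\]
Because the $e_i$ are pairwise uncorrelated and uncorrelated with $f$, bilinearity of covariance gives
\[
\Var\bar e_j=\sigma_e^{2}/k,\qquad \Cov(\bar e_1,\bar e_2)=0,\qquad \Cov(f,\bar e_j)=0 .
\]
Disjointness of $S_1$ and $S_2$ is used exactly at the step $\Cov(\bar e_1,\bar e_2)=0$. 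It follows that
\[
\Cov(U_1,U_2)=\sigma_f^{2},\qquad \Var U_1=\Var U_2=\sigma_f^{2}+\sigma_e^{2}/k ,
\]
and the ratio of the covariance to the product of standard deviations is $\rho_k$. I would note explicitly that these moments are conditional on $H$, as in the screening defect of Section~\ref{sec:setup}. Since the hypotheses are stated in second moments only, the computation needs no distributional assumption.

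For monotonicity, write $\rho_k=\bigl(1+\sigma_e^{2}/(k\sigma_f^{2})\bigr)^{-1}$. The term $\sigma_e^{2}/(k\sigma_f^{2})$ is strictly decreasing in $k$ when $\sigma_e^{2}>0$, and it tends to zero as $k\to\infty$. This gives strict increase of $\rho_k$ and the limit $\rho_k\to1$. If $\sigma_e^{2}=0$, then $\rho_k\equiv1$; in that degenerate case strictness fails and $k^{\star}=0$. I would add the assumption $\sigma_e^{2}>0$, which is implicit in the statement.

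For the tolerance characterisation, I read the separator condition at tolerance $\tau$ as requiring the screening defect to be at most $\tau$. Since $\rho_k>0$, the absolute value in the defect is harmless. Because $\rho_k$ is the common pairwise residual correlation among the basket-assets, the defect of the basket universe equals $\rho_k$. The condition $\rho_k\le\tau$ can be cleared of denominators, since everything is positive and $\tau<1$:
\[
\sigma_f^{2}\le\tau\sigma_f^{2}+\tau\sigma_e^{2}/k
\iff (1-\tau)\sigma_f^{2}k\le\tau\sigma_e^{2}
\iff k\le k^{\star}(\tau).
\]

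The last clause is the step I expect to need the most care. The difficulty is making precise that ``whatever else it contains'' leaves the computation intact. Any representation $H'$ under which $f$ remains unexplained has residuals of the assumed form $u_i=f+e_i$, possibly with $e_i$ changed. The hypothesis of the proposition is stated per representation, so applying the same formula to $H'$ gives a defect $\rho_k>\tau$. Hence $G_S(H')=0$ at that tolerance, and $H'$ is excluded from $\calA_t(\tau)$ by Definition~\ref{def:adm-order}(b).

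I would flag that this reading is what the statement assumes, namely that the decomposition with the same $\sigma_f^{2}$ and $\sigma_e^{2}$ persists under enlargement. If it does not, one can only say that unexplained $f$ with variance $\sigma_f^{2}$ and idiosyncratic variance at most $\sigma_e^{2}$ still forces $\rho_k$ above the stated value. That weaker claim still yields inadmissibility for $k>k^{\star}(\tau)$, because $\rho_k$ is decreasing in $\sigma_e^{2}$.
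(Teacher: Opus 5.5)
Your proposal is correct and follows the paper's proof essentially line by line: the same basket computation $\bar u_A=f+\bar e_A$ giving covariance $\sigma_f^{2}$ and variances $\sigma_f^{2}+\sigma_e^{2}/k$, the same monotonicity and limit, and the same rearrangement of $\rho_k\le\tau$, with the final clause read, as the paper reads it, as the hypothesis applied to any representation that leaves $f$ unexplained. Your caveat that strict monotonicity requires $\sigma_e^{2}>0$ is a correct refinement that the paper leaves implicit.
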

\begin{proof}
For disjoint baskets $A$ and $B$ with equal weights $1/k$, $\bar u_A=f+\bar e_A$ with $\Var\bar e_A=\sigma_e^{2}/k$, so $\Cov(\bar u_A,\bar u_B)=\sigma_f^{2}$ and $\Var\bar u_A=\Var\bar u_B=\sigma_f^{2}+\sigma_e^{2}/k$, which gives $\rho_k$. The map $k\mapsto\rho_k$ is increasing with limit one, and $\rho_k\le\tau$ rearranges to $k\le k^{\star}(\tau)$.
\end{proof}

Averaging within an asset removes the idiosyncratic part that a representation was never needed to explain and leaves the common part that it cannot explain, so pre-diversified assets fail the condition for a reason that is structural with respect to any representation leaving the common component unexplained, rather than a shortage of candidates within the family searched; with heterogeneous idiosyncratic variances the same holds with $\sigma_e^{2}$ replaced by the within-basket average. The second proposition gives the covariance a certified class supports when its defect is not zero.

\begin{proposition}[Certified robust covariance]\label{prop:robust}
Let $H$ have residual standard deviations $s=(\Omega_{H,ii}^{1/2})_i$ and certified defect $\varepsilon=\varepsilon_t(H)$, so that every residual correlation satisfies $|c_{ij}|\le\varepsilon$, and let $v=w\odot s$. Over all correlation matrices $C$ with $|c_{ij}|\le\varepsilon$ for $i\neq j$,
\[
\max_{C}\ w^{\Tr}\big(B_HV_HB_H^{\Tr}+\operatorname{diag}(s)\,C\,\operatorname{diag}(s)\big)w
=w^{\Tr}B_HV_HB_H^{\Tr}w+(1-\varepsilon)\|v\|_2^{2}+\varepsilon\|v\|_1^{2},
\]
attained at $C^{\star}=(1-\varepsilon)I+\varepsilon\,\sigma\sigma^{\Tr}$ with $\sigma=\operatorname{sign}(v)$, which is positive semidefinite with unit diagonal. For a long-only portfolio $C^{\star}=(1-\varepsilon)I+\varepsilon\ind\ind^{\Tr}$ does not depend on $w$, and the minimax portfolio is the minimum-variance portfolio of the covariance $\Sigma^{\varepsilon}_H:=B_HV_HB_H^{\Tr}+(1-\varepsilon)\operatorname{diag}(\Omega_H)+\varepsilon\,ss^{\Tr}$; the same holds with $\varepsilon$ replaced by any $\rho\in[0,\varepsilon]$ as a declared level, $\rho=0$ being the diagonal model.
\end{proposition}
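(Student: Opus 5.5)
The plan is to separate the systematic term, which does not depend on $C$, and maximise the residual quadratic form $w^{\Tr}\operatorname{diag}(s)C\operatorname{diag}(s)w=v^{\Tr}Cv$ over the feasible correlation matrices. Expanding gives
\[
v^{\Tr}Cv=\|v\|_2^{2}+\sum_{i\neq j}v_iv_jc_{ij}.
\]
The only constraint entering the objective is $|c_{ij}|\le\varepsilon$, so each off-diagonal term satisfies $v_iv_jc_{ij}\le\varepsilon|v_i||v_j|$. This yields the upper bound
\[
\|v\|_2^{2}+\varepsilon\big(\|v\|_1^{2}-\|v\|_2^{2}\big)=(1-\varepsilon)\|v\|_2^{2}+\varepsilon\|v\|_1^{2}.
\]
This is a relaxation, since it drops the requirement that $C$ be positive semidefinite. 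The remaining work is to show that the relaxed bound is attained by a genuine correlation matrix.

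For attainment I take $C^{\star}=(1-\varepsilon)I+\varepsilon\sigma\sigma^{\Tr}$ with $\sigma=\operatorname{sign}(v)$. Wherever $v_i=0$ I set $\sigma_i=1$; those coordinates do not affect the objective, so the choice is free.
\begin{itemize}
\item The diagonal is $(1-\varepsilon)+\varepsilon\sigma_i^2=1$.
\item The off-diagonal entries are $\pm\varepsilon$, so $C^{\star}$ meets the bound.
\item It is positive semidefinite, being the sum of $(1-\varepsilon)I\succeq0$ for $\varepsilon\le1$ and the rank-one matrix $\varepsilon\sigma\sigma^{\Tr}\succeq0$.
\item Direct evaluation gives $v^{\Tr}C^{\star}v=(1-\varepsilon)\|v\|_2^{2}+\varepsilon(\sigma^{\Tr}v)^2$, and $\sigma^{\Tr}v=\|v\|_1$.
\end{itemize}
So the maximum is achieved and equals the stated expression.

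For the long-only clause, $w\ge0$ and $s\ge0$ give $v\ge0$, so the sign vector $\sigma$ can be taken to be $\ind$ and $C^{\star}$ no longer depends on $w$. Substituting back,
\[
\operatorname{diag}(s)\big((1-\varepsilon)I+\varepsilon\ind\ind^{\Tr}\big)\operatorname{diag}(s)=(1-\varepsilon)\operatorname{diag}(\Omega_H)+\varepsilon ss^{\Tr}.
\]
Hence the worst-case variance on the long-only feasible set is the quadratic form $w^{\Tr}\Sigma^{\varepsilon}_Hw$.

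The minimax claim then reduces to an exchange: $\min_w\max_C=\min_w w^{\Tr}\Sigma^{\varepsilon}_Hw$, because the inner maximum is attained at a single $C^{\star}$ for every feasible $w$. I expect this interchange to be the delicate step. It needs the feasible set to sit inside $\{w\ge0\}$, since otherwise the maximiser varies with $w$ and the worst-case risk $(1-\varepsilon)\|v\|_2^2+\varepsilon\|v\|_1^2$ is merely convex rather than quadratic. I will note that $\Sigma^{\varepsilon}_H$ is a valid covariance because $C^{\star}\succeq0$, and that $(w^{\star},C^{\star})$ is a saddle point. The declared level $\rho\in[0,\varepsilon]$ follows from the same argument with $\varepsilon$ replaced by $\rho$, since $|c_{ij}|\le\rho$ is the same kind of constraint. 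At $\rho=0$ the matrix $\Sigma^{0}_H$ is exactly the diagonal model $B_HV_HB_H^{\Tr}+\operatorname{diag}(\Omega_H)$.
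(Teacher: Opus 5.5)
Your proposal is correct and follows the paper's proof essentially line for line: the termwise bound $v_iv_jc_{ij}\le\varepsilon|v_i||v_j|$, attainment at $C^{\star}=(1-\varepsilon)I+\varepsilon\sigma\sigma^{\Tr}$ as a convex combination of positive semidefinite matrices, and the observation that for $v\ge0$ the maximizer is free of $w$. Your convention $\sigma_i=1$ where $v_i=0$ settles a unit-diagonal detail the paper leaves implicit. The step you flag as a delicate interchange is not actually one. Once the inner maximum is evaluated pointwise as $w^{\Tr}\Sigma^{\varepsilon}_Hw$ on the long-only set, $\min_w\max_C$ is that minimum-variance problem by definition, and no exchange of $\min$ and $\max$ is needed.
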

\begin{proof}
$w^{\Tr}\operatorname{diag}(s)C\operatorname{diag}(s)w=\sum_i v_i^{2}+\sum_{i\neq j}v_iv_jc_{ij}\le\|v\|_2^{2}+\varepsilon\sum_{i\neq j}|v_i||v_j|=\|v\|_2^{2}+\varepsilon(\|v\|_1^{2}-\|v\|_2^{2})$, with equality at $c_{ij}=\varepsilon\operatorname{sign}(v_iv_j)$; that choice is $C^{\star}$, which is a convex combination of $I$ and $\sigma\sigma^{\Tr}$ and hence positive semidefinite with unit diagonal. If $v\ge0$ then $\sigma=\ind$ and $C^{\star}$ is free of $w$, so the inner maximum is a fixed quadratic form and the minimax problem is a minimum-variance problem for $\Sigma^{\varepsilon}_H$.
\end{proof}

The robust covariance is the bound of part~(d) taken as an objective: the diagonal model completed by an equicorrelated residual at the level the certification permits, with no parameter to tune. Section~\ref{sec:sidebyside} reports it at the certification-block level $\rho=\bar c$ and finds what the perturbation bound predicts, that it repairs the promised risk and does not change the realized one.

\subsubsection{The price of causal admissibility}\label{sec:causal}
Causality changes the feasible information class when the claim requires interventional meaning. With the claim levels of Definition~\ref{def:claim}, let $\calA_t(\mathrm{c})$ denote the representations admissible for a conditional-law contract and $\calA_t(\mathrm{i})\subseteq\calA_t(\mathrm{c})$ those admissible for an interventional contract, i.e.\ additionally carrying an identified intervention response under the declared design.

\label{def:causal-apo}For an interventional contract,
\begin{equation}\label{eq:causal-apo}
V^{\star}_{\mathrm{i}}=\inf_{H\in\calA_t(\mathrm{i})}\ \inf_{w\in\calW(H)}J(w;H);
\end{equation}
for a conditional-law contract the problem is posed over $\calA_t(\mathrm{c})$ with value $V^{\star}_{\mathrm{c}}$. For the same population loss and nested classes the causal admissibility price is
\begin{equation}\label{eq:price}
\Delta_{\mathrm{ci}}:=V^{\star}_{\mathrm{i}}-V^{\star}_{\mathrm{c}}.
\end{equation}

Since $\calA_t(\mathrm{i})\subseteq\calA_t(\mathrm{c})$, $V^{\star}_{\mathrm{c}}\le V^{\star}_{\mathrm{i}}$ and $\Delta_{\mathrm{ci}}\ge0$. The inequality is not evidence against causality, since the interventional problem solves a stronger semantic contract, and $\Delta_{\mathrm{ci}}$ is the oracle price of that contract at fixed numerical loss. Conversely a lower conditional-law loss does not license an interventional interpretation.

\begin{theorem}[Distance bound for the causal admissibility price]\label{thm:distance}
Let $(\calI,d)$ be a metric space of decision-equivalence classes containing $\calA_t(\mathrm{c})$ and suppose $V$ is $L$-Lipschitz on $\calA_t(\mathrm{c})$, i.e.\ $|V(H)-V(G)|\le L\,d(H,G)$ for $H,G\in\calA_t(\mathrm{c})$. If $\Hs_0\in\calA_t(\mathrm{c})$ satisfies $V(\Hs_0)=V^{\star}_{\mathrm{c}}$ and $\calA_t(\mathrm{i})\neq\emptyset$, then
\begin{equation}\label{eq:distance}
0\le\Delta_{\mathrm{ci}}\le L\,d(\Hs_0,\calA_t(\mathrm{i})),\qquad d(H,\calA_t(\mathrm{i})):=\inf_{G\in\calA_t(\mathrm{i})}d(H,G).
\end{equation}
For the quadratic loss $J(w;H)=\tfrac{\gamma}{2}w^{\Tr}\Sigma_Hw-\mu_H^{\Tr}w$ on a common feasible set with $\bar r:=\sup_{w\in\calW}\|w\|_2<\infty$, the hypothesis holds with $L=\bar r$ for the moment metric
\begin{equation}\label{eq:moment-metric}
d(H,G):=\|\mu_H-\mu_G\|_2+\tfrac{\gamma}{2}\bar r\,\|\Sigma_H-\Sigma_G\|_{\mathrm{op}},
\end{equation}
and, when the moments are random functions of the decision state, for its expectation over the state.
\end{theorem}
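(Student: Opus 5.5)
The plan has two parts. First comes the general bound, which is pure metric-space bookkeeping. Then the Lipschitz constant is verified for the quadratic loss.

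For the general bound, nonnegativity is already recorded: $\calA_t(\mathrm{i})\subseteq\calA_t(\mathrm{c})$ gives $V^{\star}_{\mathrm{c}}\le V^{\star}_{\mathrm{i}}$. For the upper bound, fix any $G\in\calA_t(\mathrm{i})$, which is possible because the class is nonempty. Then $V^{\star}_{\mathrm{i}}\le V(G)$. Since $G$ and $\Hs_0$ both lie in $\calA_t(\mathrm{c})$, the Lipschitz hypothesis gives
\[
V(G)\le V(\Hs_0)+L\,d(\Hs_0,G)=V^{\star}_{\mathrm{c}}+L\,d(\Hs_0,G).
\]
Subtracting $V^{\star}_{\mathrm{c}}$ and taking the infimum over $G\in\calA_t(\mathrm{i})$ yields \eqref{eq:distance}. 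One point needs care: $V$ must be well defined on equivalence classes, which is the quotient remark after Definition~\ref{def:apo}. The bound needs no attainment of $V^{\star}_{\mathrm{i}}$, only attainment of $V^{\star}_{\mathrm{c}}$ at $\Hs_0$, which is assumed.

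For the quadratic case, I would write $V(H)=\inf_{w\in\calW}J(w;H)$ over the common feasible set $\calW$. I would then use the elementary fact that an infimum of functions differs from another infimum by at most the sup-norm distance between the functions:
\[
|V(H)-V(G)|\le\sup_{w\in\calW}|J(w;H)-J(w;G)|.
\]
Here the common feasible set is essential, and so is finiteness of the infima. Finiteness holds because $|J|\le\tfrac{\gamma}{2}\bar r^2\|\Sigma_H\|_{\mathrm{op}}+\bar r\|\mu_H\|_2$, and the moments are finite since $R\in L^2$. For each $w$, Cauchy--Schwarz and the operator-norm bound give
\[
|J(w;H)-J(w;G)|\le\tfrac{\gamma}{2}\|w\|_2^2\|\Sigma_H-\Sigma_G\|_{\mathrm{op}}+\|w\|_2\|\mu_H-\mu_G\|_2\le\bar r\Big(\|\mu_H-\mu_G\|_2+\tfrac{\gamma}{2}\bar r\|\Sigma_H-\Sigma_G\|_{\mathrm{op}}\Big),
\]
which is $\bar r\,d(H,G)$. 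So $L=\bar r$ for the metric \eqref{eq:moment-metric}. This expression is a pseudometric on representations. It is a metric on decision-equivalence classes because classes with equal moments have equal value, so it is legitimate to identify them.

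For state-dependent moments, I would interpret $V(H)$ as $\E[\inf_{w}J(w;H\mid z)]$, with policies measurable in the state and optimized statewise. The optimization interchanges with the expectation by a measurable-selection argument, which is the step requiring the most care. One route is a normal-integrand argument on the compact set $\calW$ with a continuous integrand, giving a measurable minimizer. The alternative is to avoid the interchange altogether by bounding the difference of per-state values pointwise, since the inequality above holds for each state. Taking expectations of $|V_z(H)-V_z(G)|\le\bar r\,d_z(H,G)$ then gives Lipschitz continuity with the same $L$ for $\E\,d_z$. That metric satisfies the triangle inequality by linearity of expectation, and the general bound applies verbatim.

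The main obstacle is this measurability and interchange bookkeeping, not the inequalities themselves.
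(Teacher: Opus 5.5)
Your proposal is correct and follows the paper's proof essentially step for step. The paper also argues: nonnegativity from $\calA_t(\mathrm{i})\subseteq\calA_t(\mathrm{c})$; the chain $V^{\star}_{\mathrm{i}}\le V(G)\le V^{\star}_{\mathrm{c}}+L\,d(\Hs_0,G)$ followed by an infimum over $G\in\calA_t(\mathrm{i})$; and, for the quadratic loss, the pointwise bound $|J(w;H)-J(w;G)|\le\bar r\,d(H,G)$ on the common feasible set, transferred to the infima and then taken in expectation state by state. What you add is sound bookkeeping that the paper leaves implicit rather than a different route: finiteness of the infima, and the measurable-selection step identifying the policy-level value with the expectation of statewise values. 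You also remark that $d$ separates decision-equivalence classes; strictly, the reason is that zero distance forces equal primitives on a common feasible set, which is decision equivalence, not merely that such representations have equal value.
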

\noindent The proof is in Appendix~\ref{app:proofs}.

On a finite class every function is Lipschitz in the discrete metric, so the bound has content only for a decision-relevant metric such as~\eqref{eq:moment-metric}.

The causal admissibility price is therefore controlled by the distance, in conditional mean and covariance, between the best predictive representation and the nearest interventionally admissible one; in Experiment~\ref{exp:E3} this distance is zero before the intervention, which is why proxy and state tie. Requiring identification removes otherwise useful representations in exchange for interventional, transport or counterfactual meaning; it is neither a general advantage nor a general prerequisite. When identification establishes an intervention-response map $\Psi^{\mathrm{do}}_H$, the response operator may be taken as its derivative at the operating point, $B^{\mathrm{do}}_H=D\Psi^{\mathrm{do}}_H(z_0)$, and only then may a response-based objective be called causal-response optimization; a causal graph selected by an algorithm does not by itself supply it.


\section{Finite-sample theory and evidence}\label{sec:finite}

This section prices the selection from data, verifies the theory on a controlled system, and then asks of market data whether the separator condition is attainable, what its failure costs, what the selected class delivers under the standard allocation rule, and how the three ways of choosing a portfolio's information compare. The finite-sample results are given here and formally in Appendix~\ref{app:formal}; the controlled experiments that verify them are in Section~\ref{sec:controlled}.

\subsection{Finite-sample guarantees}\label{sec:guarantees}

This subsection prices the selection from data: both stages are consistent, the regret of selecting jointly is bounded by a uniform deviation over the class, the oracle inequality trades the causal price against search complexity, and reselection across nodes is bounded by a hysteresis rule.

Let $\widehat\calA^{0}_{t}$ be produced on a certification block and the four quantities of Definition~\ref{def:adm-order} estimated without the final evaluation block, and define $\widehat\calA^{\star}_{t}=\Min_{\widehat\prec_{\calA}}\widehat\calA^{0}_{t}$ with $\widehat\prec_{\calA}$ the admissibility order computed from the estimated quantities. For a finite class the admissibility order is a total preorder on a finite set and therefore admits a real-valued representation: there is $s_t:\calA_t\to\R$ with $H_1\prec_{\calA}H_2\iff s_t(H_1)<s_t(H_2)$, and $\Ast=\argmin s_t$. We state consistency for such a score.

\subsubsection{Consistency of the two stages}
Stage~I is selection-consistent on a finite candidate set whenever the certified class equals the admissible class with probability tending to one and the estimated score converges uniformly, provided the population score has a positive gap at the optimum; with a shrinking tolerance the selected class is admissible with probability tending to one (Proposition~\ref{prop:stage1-formal}, Appendix~\ref{app:formal}).\label{prop:stage1-consistency}

Given a consistent Stage~I, the estimated frontier converges uniformly to the population frontier on compact interior ranges of targets, and the estimated efficient portfolios to the population efficient set, under compactness of the feasible set, continuity and convexity of the risk functional, continuity of the mean and uniform convergence of the estimated moments (Proposition~\ref{prop:stage2-formal}, Appendix~\ref{app:formal}).\label{prop:stage2-consistency}

\label{rem:dependence}Three qualifications concern the sample. If the feasible set is estimated with Hausdorff distance $\delta_N\pto0$ and $\rho,\mu$ are Lipschitz near $\calW$, the sandwich in the proof of Proposition~\ref{prop:stage2-consistency} holds with $\delta_N$-neighbourhoods and an additional term $L_\rho\delta_N$, and the conclusion is unchanged.
All finite-sample statements are for independent observations. Returns are serially dependent; under $\beta$-mixing the same conclusions hold with contiguous blocks separated by a gap in place of independent observations and with the effective sample size reduced accordingly, by the blocking arguments of \cite{Yu1994}. The certification, estimation and evaluation blocks are disjoint for a second reason: the frontier is estimated after a selection over the candidate set, and without splitting the Stage-II estimates would be subject to the selection effect and their confidence statements invalid. Splitting is the post-selection device that makes Proposition~\ref{prop:stage2-consistency} conditional on the selected class alone.

Let $\widehat J_N(w;H)$ be an estimated loss on a selection sample and
\begin{equation}\label{eq:joint-hat}
(\widehat H,\hat w)\in\argmin_{H\in\widehat\calA_N,\ w\in\calW(H)}\widehat J_N(w;H),
\end{equation}
where the empirical admissible set $\widehat\calA_N$ is itself estimated and can make false-pass or false-fail errors. Consider first the case in which the population admissible class is known.

\subsubsection{Regret of joint selection}
\begin{proposition}[Uniform decision-regret bound]\label{thm:regret}
Assume $\calA$ is known and that on an event $E_N$ of probability at least $1-\delta$,
\begin{equation}\label{eq:uniform-dev}
\sup_{H\in\calA}\ \sup_{w\in\calW(H)}\big|\widehat J_N(w;H)-J(w;H)\big|\le\varepsilon_N .
\end{equation}
Let $(\Hs,w^{\star})$ minimize $J$ and $(\widehat H,\hat w)$ minimize $\widehat J_N$ over $\{(H,w):H\in\calA,w\in\calW(H)\}$. Then on $E_N$,
\begin{equation}\label{eq:regret}
J(\hat w;\widehat H)-J(w^{\star};\Hs)\le2\varepsilon_N .
\end{equation}
\end{proposition}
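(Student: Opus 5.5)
The plan is the standard three-term decomposition of excess risk for empirical risk minimization, carried out on the joint index set of admissible pairs $\Gamma:=\{(H,w):H\in\calA,\ w\in\calW(H)\}$. Because $\calA$ is known, both the population minimizer $(\Hs,w^{\star})$ and the empirical minimizer $(\widehat H,\hat w)$ lie in the same set $\Gamma$. The uniform deviation~\eqref{eq:uniform-dev} therefore applies at both points, with a single bound over the pair rather than separate bounds over $H$ and over $w$. No structure of the loss, of the representation or of the admissibility order is used beyond this. In particular the fact that the loss depends on $H$ (Definition~\ref{def:order}) is harmless, since~\eqref{eq:uniform-dev} is stated for $J(\cdot;H)$ at each $H$.

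On $E_N$ I will write
\[
J(\hat w;\widehat H)-J(w^{\star};\Hs)=\big[J(\hat w;\widehat H)-\widehat J_N(\hat w;\widehat H)\big]+\big[\widehat J_N(\hat w;\widehat H)-\widehat J_N(w^{\star};\Hs)\big]+\big[\widehat J_N(w^{\star};\Hs)-J(w^{\star};\Hs)\big].
\]
I bound the three brackets in turn.
\begin{itemize}
\item The first bracket is at most $\varepsilon_N$ by~\eqref{eq:uniform-dev}, evaluated at $(\widehat H,\hat w)\in\Gamma$.
\item The middle bracket is at most $0$, because $(\widehat H,\hat w)$ minimizes $\widehat J_N$ over $\Gamma$ and $(\Hs,w^{\star})\in\Gamma$.
\item The third bracket is at most $\varepsilon_N$ by~\eqref{eq:uniform-dev}, evaluated at $(\Hs,w^{\star})$.
\end{itemize}
Summing gives~\eqref{eq:regret} on $E_N$, hence with probability at least $1-\delta$.

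The only step that needs care, and the one I would state explicitly, is the well-posedness of the objects the statement takes as given. The statement presupposes that both minimizers exist, which I take as part of the hypotheses; Theorem~\ref{thm:existence} supplies existence of the population minimizer under compactness. I will also need $J(w^{\star};\Hs)$ to be finite, so that the differences are not of the form $\infty-\infty$; this follows because~\eqref{eq:uniform-dev} forces $J$ and $\widehat J_N$ to be finite together wherever either is finite.

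If exact minimizers are not available, the same argument goes through with an $\eta$-approximate empirical minimizer, giving the bound $2\varepsilon_N+\eta$. I would mention this as a remark rather than prove it separately. The case of an estimated admissible set $\widehat\calA_N$ is deliberately excluded here, since false passes would place $(\widehat H,\hat w)$ outside $\Gamma$ and break the first bracket, and false fails could remove $(\Hs,w^{\star})$ and break the middle bracket.
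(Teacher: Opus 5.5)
Your proposal is correct and is the paper's own argument. The paper writes the same three steps as the chain $J(\hat w;\widehat H)\le\widehat J_N(\hat w;\widehat H)+\varepsilon_N\le\widehat J_N(w^{\star};\Hs)+\varepsilon_N\le J(w^{\star};\Hs)+2\varepsilon_N$: it applies the uniform deviation at the two pairs and uses the empirical optimality of $(\widehat H,\hat w)$ in between.
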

\noindent The proof is in Appendix~\ref{app:proofs}.

Representation choice raises the statistical burden because the uniform deviation must hold over the joint class, not over the weights of one model.

If the admissible class is itself estimated by a certification procedure independent of the selection sample, and on an event of probability at least $1-\alpha$ that procedure includes no inadmissible representation and retains a population-optimal class, then~\eqref{eq:regret} holds over the certified class with probability at least $1-\alpha-\delta$ by the same chain of inequalities and the union bound; outside that event no admissibility or causal conclusion follows from decision performance.

Hence the sample is split, so that certification decides eligibility on one block and only certified representations are ranked on another; reusing the same outcomes to relax the admissibility conditions destroys the interpretation of admissibility. The generic bound does not explain when a stronger admissibility condition can help statistically. The bound reflects the approximation--estimation tradeoff of oracle inequalities for model selection \cite{Bartlett2008}, here for a claim-dependent admissible class.

\subsubsection{Oracle inequality}
\begin{theorem}[Finite-class oracle inequality]\label{thm:oracle}
Let $\calA_r=\{H_1,\dots,H_{M_r}\}$ be the admissible class for a conditional-law ($r=\mathrm{c}$) or interventional ($r=\mathrm{i}$) contract, with $\calA_t(\mathrm{i})\subseteq\calA_t(\mathrm{c})$. Suppose the inner optimized empirical values $\widehat V_N(H)$ are computed on data independent of condition certification and satisfy, for every fixed $H\in\calA_t(\mathrm{c})$ and every $x>0$,
\begin{equation}\label{eq:subgauss}
\Prob\{|\widehat V_N(H)-V(H)|>x\}\le2\exp\{-Nx^{2}/(2\sigma^{2})\}.
\end{equation}
Let $\widehat H_r\in\argmin_{H\in\calA_r}\widehat V_N(H)$. Then for every $\delta\in(0,1)$, with probability at least $1-\delta$,
\begin{equation}\label{eq:oracle}
V(\widehat H_r)-\min_{H\in\calA_r}V(H)\le2\sigma\sqrt{\frac{2\log(2M_r/\delta)}{N}},
\end{equation}
and, relative to the conditional-law oracle,
\begin{equation}\label{eq:oracle-c2}
V(\widehat H_{\mathrm{i}})-V^{\star}_{\mathrm{c}}\le\Delta_{\mathrm{ci}}+2\sigma\sqrt{\frac{2\log(2M_{\mathrm{i}}/\delta)}{N}} .
\end{equation}
\end{theorem}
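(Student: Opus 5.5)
The argument is the standard finite-class union bound followed by the comparison step used for Proposition~\ref{thm:regret}. It has three steps.

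\emph{Uniform deviation.} Fix $r$ and set
\[
x_N:=\sigma\sqrt{2\log(2M_r/\delta)/N}.
\]
Applying~\eqref{eq:subgauss} to each of the $M_r$ elements of $\calA_r\subseteq\calA_t(\mathrm{c})$ and taking a union bound gives
\[
\Prob\Big\{\max_{H\in\calA_r}|\widehat V_N(H)-V(H)|>x_N\Big\}\le M_r\cdot 2\exp\{-Nx_N^2/(2\sigma^2)\}=\delta.
\]
The class $\calA_r$ is fixed before the selection sample is used, because certification runs on independent data. The tail bound therefore applies to each fixed $H$ with no selection effect, and conditioning on the certification block legitimises this step. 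Call the resulting event $E$; it has probability at least $1-\delta$.

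\emph{Comparison on $E$.} Let $H^{\circ}_r\in\argmin_{H\in\calA_r}V(H)$; it exists because the class is finite. Since $\widehat H_r$ minimises $\widehat V_N$ over $\calA_r$, on $E$
\[
V(\widehat H_r)\le\widehat V_N(\widehat H_r)+x_N\le\widehat V_N(H^{\circ}_r)+x_N\le V(H^{\circ}_r)+2x_N.
\]
Because $2x_N=2\sigma\sqrt{2\log(2M_r/\delta)/N}$, this is exactly~\eqref{eq:oracle}.

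\emph{Relative to the conditional-law oracle.} Take $r=\mathrm{i}$. The definition~\eqref{eq:causal-apo} together with the finiteness of the class gives $\min_{H\in\calA_{\mathrm{i}}}V(H)=V^{\star}_{\mathrm{i}}$. By~\eqref{eq:price}, $V^{\star}_{\mathrm{i}}=V^{\star}_{\mathrm{c}}+\Delta_{\mathrm{ci}}$. Substituting into~\eqref{eq:oracle} for $r=\mathrm{i}$ yields~\eqref{eq:oracle-c2}. The same event $E$ serves here, so no additional probability is lost.

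The only delicate point is the first step, namely that the per-$H$ tail bound may be used under a union bound. This requires $\calA_r$ to be non-random relative to the data on which $\widehat V_N$ is computed, which is exactly the independence hypothesis of the theorem. I would state it explicitly by working conditionally on the certification block. The inner infimum over $w$ is already absorbed into $\widehat V_N(H)$ and $V(H)$, so the inner optimisation needs no separate uniform control beyond~\eqref{eq:subgauss}. Finally, if $\calA_{\mathrm{i}}$ were empty the statement would be vacuous, so I would note that nonemptiness is implicit in $\widehat H_{\mathrm{i}}$ being defined.
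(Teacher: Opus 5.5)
Your proof is correct and matches the paper's argument step for step. Both use a union bound with $x_r=\sigma\sqrt{2\log(2M_r/\delta)/N}$, chosen so that each tail probability is $\delta/M_r$, then the three-inequality comparison chain through a population minimizer $H^{\circ}$, and finally the split of $V(\widehat H_{\mathrm i})-V^{\star}_{\mathrm c}$ into the estimation term $V(\widehat H_{\mathrm i})-V^{\star}_{\mathrm i}$ plus $\Delta_{\mathrm{ci}}$. Your remark about conditioning on the certification block does no harm but is not needed, because $\calA_r$ in the statement is the population admissible class and is therefore already non-random.
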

\noindent The proof is in Appendix~\ref{app:proofs}.

\begin{corollary}[Causal admissibility as regularization]\label{cor:regularization}
Suppose $M_{\mathrm{i}}<M_{\mathrm{c}}$. The interventional class has larger oracle loss by $\Delta_{\mathrm{ci}}$ and a smaller search term, and the upper bound~\eqref{eq:oracle-c2} is tighter than the conditional-law bound~\eqref{eq:oracle} with $r=\mathrm{c}$ whenever
\begin{equation}\label{eq:regularization}
\Delta_{\mathrm{ci}}<2\sigma\sqrt{\frac{2}{N}}\Big[\sqrt{\log(2M_{\mathrm{c}}/\delta)}-\sqrt{\log(2M_{\mathrm{i}}/\delta)}\Big].
\end{equation}
\end{corollary}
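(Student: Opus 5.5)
The corollary is a direct comparison of the two upper bounds supplied by Theorem~\ref{thm:oracle}, so the plan is to put both on a common footing and subtract. The comparison is taken with respect to one oracle value, $V^{\star}_{\mathrm{c}}=\min_{H\in\calA_t(\mathrm{c})}V(H)$.

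For the conditional-law selector $\widehat H_{\mathrm{c}}$, I will apply~\eqref{eq:oracle} with $r=\mathrm{c}$. Since $\min_{H\in\calA_{\mathrm{c}}}V(H)=V^{\star}_{\mathrm{c}}$, this gives an excess over $V^{\star}_{\mathrm{c}}$ of at most
\[
b_{\mathrm{c}}:=2\sigma\sqrt{2\log(2M_{\mathrm{c}}/\delta)/N},
\]
with zero approximation term. For the interventional selector I will use~\eqref{eq:oracle-c2}, which bounds the excess over the same oracle by
\[
b_{\mathrm{i}}:=\Delta_{\mathrm{ci}}+2\sigma\sqrt{2\log(2M_{\mathrm{i}}/\delta)/N}.
\]
Here $\Delta_{\mathrm{ci}}\ge0$ is the approximation term, by the nesting $\calA_t(\mathrm{i})\subseteq\calA_t(\mathrm{c})$ discussed after~\eqref{eq:price}. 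This already states the first sentence of the corollary: the interventional class pays $\Delta_{\mathrm{ci}}$ in oracle loss. It has the smaller search term because $M_{\mathrm{i}}<M_{\mathrm{c}}$ and $\log$ is increasing.

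Next, $b_{\mathrm{i}}<b_{\mathrm{c}}$ is equivalent to
\[
\Delta_{\mathrm{ci}}<2\sigma\sqrt{2/N}\,\bigl[\sqrt{\log(2M_{\mathrm{c}}/\delta)}-\sqrt{\log(2M_{\mathrm{i}}/\delta)}\bigr].
\]
This is obtained by factoring $2\sigma\sqrt{2/N}$ out of both square roots, which is exactly~\eqref{eq:regularization}. Two checks are needed for the expressions to make sense. First, $\log(2M/\delta)>0$ because $M\ge1$ and $\delta<1$, so both square roots are real. Second, the bracket is strictly positive because $M_{\mathrm{i}}<M_{\mathrm{c}}$, so the condition can hold for small enough price.

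The only point needing care is that both bounds must be read as holding at the same confidence level $1-\delta$ and relative to the same oracle $V^{\star}_{\mathrm{c}}$. Otherwise "tighter" would compare incomparable quantities. I will remark that the statement concerns the bounds, not the realized regrets, so no union bound over the two events is needed. If one wants a simultaneous statement about both selectors, both bounds hold together with probability at least $1-2\delta$ by the union bound. No other obstacle arises; the argument is elementary algebra on the two displayed inequalities.
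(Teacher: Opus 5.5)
Your proposal is correct and is the paper's argument: both bounds measure excess loss over the common oracle $V^{\star}_{\mathrm{c}}$, namely $2x_{\mathrm{c}}$ and $\Delta_{\mathrm{ci}}+2x_{\mathrm{i}}$ with $x_r=\sigma\sqrt{2\log(2M_r/\delta)/N}$, and~\eqref{eq:regularization} is the inequality $\Delta_{\mathrm{ci}}+2x_{\mathrm{i}}<2x_{\mathrm{c}}$ rearranged. Your further checks are correct refinements that the paper leaves implicit: the logarithms are positive, the bracket is strictly positive, and a simultaneous statement about both selectors holds with probability $1-2\delta$.
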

\noindent The proof is in Appendix~\ref{app:proofs}.

This is structural risk minimization \cite{Vapnik1998} over a nesting fixed by the claim rather than tuned to the data, with the causal price as approximation term and the size of the eligible class as complexity term; it is a regularization statement, not a claim that causal representations predict better. For infinite classes $\log M_r$ is replaced by a complexity measure and the deviation assumption by a uniform empirical-process bound \cite{BLM2013}; when candidates are nested a chaining argument sharpens the logarithmic term.

Consider a sequence of decision nodes $t=1,\dots,T$, each with its own contract, candidate set and admissible class, and a portfolio process $(w_t)$ that may be constrained intertemporally, for instance by transaction costs $c(w_t-w_{t-1})$. The multi-period problem is to choose an admissible pair at every node.

When the admissibility conditions and quantities at node $t$ depend only on node-$t$ information, the multi-period Stage-I problem is the product of the one-node problems, whatever the intertemporal constraints on portfolios, because nothing in the order at node $t$ involves a portfolio; given the selected classes, Stage~II is a dynamic program coupled only through the cost $c(w_t-w_{t-1})$.\label{prop:separable} What remains is how the \emph{estimated} class should change between nodes.

\subsubsection{Reselection across decision nodes}
When the problem is solved at successive nodes $t=1,2,\dots$, Stage-I reselection is itself a source of turnover. Let $\hat s_t$ be the estimated score of Proposition~\ref{prop:stage1-consistency} and $\widehat H_{t-1}$ the incumbent class. \label{def:hysteresis}The rule proposed here is a hysteresis rule: for a predeclared margin $\eta>0$, set $\widehat H_t=\widehat H_{t-1}$ if $\hat s_t(\widehat H_{t-1})\le\min_{\widehat\calA^{0}_t}\hat s_t+\eta$, and $\widehat H_t\in\argmin\hat s_t$ otherwise.

\begin{proposition}[Reselection only on population change]\label{prop:hysteresis}
Suppose $\max_{H}|\hat s_t(H)-s_t(H)|\le\delta_t$ at every node.

\smallskip\noindent\textit{Part (a).}\quad A reselection at node $t$ implies $s_t(\widehat H_{t-1})-\min s_t>\eta-2\delta_t$.

\smallskip\noindent\textit{Part (b).}\quad If the four population quantities are constant in $t$, the incumbent is population-optimal and $\delta_t<\eta/2$ for all $t$, no reselection occurs.

\smallskip\noindent\textit{Part (c).}\quad Every reselection satisfies $s_t(\widehat H_t)\le\min s_t+2\delta_t$.
\end{proposition}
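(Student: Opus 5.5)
The plan is to derive all three parts from the uniform score bound $\max_H|\hat s_t(H)-s_t(H)|\le\delta_t$ together with the definition of the hysteresis rule. Throughout, write $m_t:=\min_{\widehat\calA^{0}_t}s_t$ and $\hat m_t:=\min_{\widehat\calA^{0}_t}\hat s_t$, both taken over the same finite certified set, with all minima attained. The first step is the elementary comparison $|\hat m_t-m_t|\le\delta_t$. If $H_0$ attains $m_t$, then $\hat m_t\le\hat s_t(H_0)\le m_t+\delta_t$. Symmetrically, if $\hat H$ attains $\hat m_t$, then $m_t\le s_t(\hat H)\le\hat m_t+\delta_t$. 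Every part rests on this fact.

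For part (a), a reselection at node $t$ means the hysteresis test failed, so $\hat s_t(\widehat H_{t-1})>\hat m_t+\eta$. The score bound gives $s_t(\widehat H_{t-1})\ge\hat s_t(\widehat H_{t-1})-\delta_t$. Combining this with the test and with $\hat m_t\ge m_t-\delta_t$ yields $s_t(\widehat H_{t-1})>m_t+\eta-2\delta_t$, which is the claim.

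For part (b), the four population quantities are constant in $t$, so $s_t$ is constant in $t$ (as is the population admissible class); one should confirm that the score representation is chosen identically across nodes. If the incumbent is population-optimal, then $s_t(\widehat H_{t-1})=m_t$. Part (a) then says any reselection would need $0>\eta-2\delta_t$, which contradicts $\delta_t<\eta/2$. So no reselection occurs at the first node. An induction then keeps the incumbent fixed, and hence population-optimal, at every later node.

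For part (c), a reselection sets $\widehat H_t\in\argmin\hat s_t$. Therefore $s_t(\widehat H_t)\le\hat s_t(\widehat H_t)+\delta_t=\hat m_t+\delta_t\le m_t+2\delta_t$. The only point needing care, and the main potential obstacle, is whether the minima are over the certified set $\widehat\calA^{0}_t$ or over the population class. I would state the result with $\min s_t$ over $\widehat\calA^{0}_t$, and remark that on the event where certification equals the admissible class (Proposition~\ref{prop:stage1-consistency}) the two minima coincide.
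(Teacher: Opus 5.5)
Your proof is correct and follows the paper's argument. The paper also derives (a) from the failed hysteresis test together with $\hat s_t\le s_t+\delta_t$ and $\min\hat s_t\ge\min s_t-\delta_t$; it gets (b) by noting that the left side of (a) vanishes for a population-optimal incumbent; and it gets (c) from the chain $s_t(\widehat H_t)\le\hat s_t(\widehat H_t)+\delta_t\le\hat s_t(H^{\star})+\delta_t\le s_t(H^{\star})+2\delta_t$. Your induction in (b), and your requirement that both minima be taken over the same certified set, make explicit two points the paper leaves implicit without changing the argument.
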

\noindent The proof is in Appendix~\ref{app:proofs}.

The margin $\eta$ is in the units of the score and must exceed twice the estimation error for part (b); under a rolling protocol the incumbent is certified only on observations that post-date its fixing. A larger margin suppresses reselection driven by noise at the cost of a longer lag when the population changes, and because it is declared before any portfolio is evaluated it does not reintroduce performance into Stage~I.

\label{sec:evidence-sec}

The finite-sample results of Section~\ref{sec:finite} rest on one population fact about the quadratic Stage-II problem, stated here because it involves no sampling: the regret of a portfolio computed from perturbed conditional moments is second order in the perturbation.

\subsubsection{Perturbation bound for the conditional portfolio}
\begin{proposition}[Regret is second order in estimation error]\label{prop:regret2}
Let $J(w)=\tfrac{\gamma}{2}w^{\Tr}\Sigma w-\mu^{\Tr}w$ and $\widehat J(w)=\tfrac{\gamma}{2}w^{\Tr}\widehat\Sigma w-\hat\mu^{\Tr}w$ on a convex feasible set $\calW$ with $\sup_{\calW}\|w\|_2\le\bar r$, and suppose $\Sigma\succeq\lambda I$ with $\lambda>0$. Let $w^{\star}$ minimize $J$ and $\hat w$ minimize $\widehat J$ over $\calW$. Then
\begin{equation}\label{eq:regret2}
J(\hat w)-J(w^{\star})\ \le\ \frac{2}{\gamma\lambda}\Big(\gamma\bar r\,\|\widehat\Sigma-\Sigma\|_{\mathrm{op}}+\|\hat\mu-\mu\|_2\Big)^{2}.
\end{equation}
In particular, if $\hat\mu$ and $\widehat\Sigma$ are $\sqrt N$-consistent, the regret of the estimated portfolio is $O_p(N^{-1})$.
\end{proposition}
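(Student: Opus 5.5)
The plan is the standard strong-convexity argument for perturbed quadratic programs, in three steps: compare $J$ and $\widehat J$ uniformly on $\calW$, turn the resulting bound on the regret into a bound on $\|\hat w-w^{\star}\|_2$ by strong convexity, and then recombine the two.

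\emph{Step 1: a first-order difference bound.} Write $\Delta(w):=\widehat J(w)-J(w)=\tfrac{\gamma}{2}w^{\Tr}(\widehat\Sigma-\Sigma)w-(\hat\mu-\mu)^{\Tr}w$. Using optimality of $\hat w$ for $\widehat J$,
\[
J(\hat w)-J(w^{\star})=\big[\widehat J(\hat w)-\widehat J(w^{\star})\big]-\big[\Delta(\hat w)-\Delta(w^{\star})\big]\le\Delta(w^{\star})-\Delta(\hat w).
\]
The right side is a difference of one function at two points, so it should be bounded by a Lipschitz constant times $\|\hat w-w^{\star}\|_2$. On $\calW$ the gradient of $\Delta$ is $\gamma(\widehat\Sigma-\Sigma)w-(\hat\mu-\mu)$. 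By convexity of $\calW$ and the mean value theorem along the segment, this gives
\[
|\Delta(w^{\star})-\Delta(\hat w)|\le\big(\gamma\bar r\|\widehat\Sigma-\Sigma\|_{\mathrm{op}}+\|\hat\mu-\mu\|_2\big)\|\hat w-w^{\star}\|_2=:E\,\|\hat w-w^{\star}\|_2 .
\]

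\emph{Step 2: strong convexity bounds the distance.} Since $\Sigma\succeq\lambda I$, $J$ is $\gamma\lambda$-strongly convex. Because $w^{\star}$ minimizes $J$ over the convex set $\calW$, the first-order optimality condition $\nabla J(w^{\star})^{\Tr}(w-w^{\star})\ge0$ yields the quadratic growth inequality
\[
J(\hat w)-J(w^{\star})\ge\tfrac{\gamma\lambda}{2}\|\hat w-w^{\star}\|_2^{2}.
\]
Combining this with Step 1 gives $\tfrac{\gamma\lambda}{2}\|\hat w-w^{\star}\|^2\le E\|\hat w-w^{\star}\|$, hence $\|\hat w-w^{\star}\|_2\le 2E/(\gamma\lambda)$.

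\emph{Step 3: recombine.} Substituting the distance bound into Step 1 gives
\[
J(\hat w)-J(w^{\star})\le E\cdot 2E/(\gamma\lambda)=\tfrac{2}{\gamma\lambda}E^{2},
\]
which is exactly \eqref{eq:regret2}. For the final claim, if $\hat\mu$ and $\widehat\Sigma$ are $\sqrt N$-consistent then $E=O_p(N^{-1/2})$, so the regret is $O_p(N^{-1})$.

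No step needs strong convexity of $\widehat J$, so $\widehat\Sigma$ may be indefinite, provided a minimizer $\hat w$ exists as the statement presumes. The step requiring care is the Lipschitz constant in Step 1. It uses $\|w\|_2\le\bar r$ at every point of the segment between $\hat w$ and $w^{\star}$, which holds because $\calW$ is convex and so contains the segment. It is also essential to control $\Delta(w^{\star})-\Delta(\hat w)$ as a difference rather than bounding each term separately, since separate bounds would give only a first-order rate.
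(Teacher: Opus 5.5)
Your proof is correct and follows the paper's argument: the same optimality comparison giving $J(\hat w)-J(w^{\star})\le E\,\|\hat w-w^{\star}\|_2$, the same quadratic-growth bound from strong convexity of $J$, and the same closing step, with the same constant $2/(\gamma\lambda)$. The differences are only in presentation. The paper gets the Lipschitz bound by factoring the quadratic difference as $\tfrac{\gamma}{2}(\hat w+w^{\star})^{\Tr}(\Sigma-\widehat\Sigma)(\hat w-w^{\star})$ with $\|\hat w+w^{\star}\|_2\le2\bar r$, which is your mean-value step evaluated at the midpoint. It also eliminates the distance through $r\le E\sqrt{2r/(\gamma\lambda)}$ rather than bounding $\|\hat w-w^{\star}\|_2$ first.
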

\noindent The proof is in Appendix~\ref{app:proofs}.

Figure~\ref{fig:firstsecond} draws the two orders of magnitude on the reference system of Section~\ref{sec:order}: the representation $\{P,Z_2\}$, which fails separation with a maximal residual correlation of $0.68$, is given the diagonal-residual covariance it would report, its frontier is compared with the true frontier, and the portfolios it chooses are evaluated at their true risk.

\begin{figure}[!htb]
\centering
\begin{tikzpicture}
\begin{axis}[width=10.5cm,height=6.6cm,xlabel={risk $\sigma$},ylabel={expected return $\mu$},xmin=0.24,xmax=0.40,ymin=0.030,ymax=0.090,yticklabel style={/pgf/number format/fixed,/pgf/number format/precision=3},xticklabel style={/pgf/number format/fixed,/pgf/number format/precision=2},tick label style={font=\footnotesize},label style={font=\small},legend style={font=\footnotesize},legend cell align=left,scaled y ticks=false]
\addplot[thick,black] coordinates {(0.5302,0.0253) (0.5107,0.0270) (0.4915,0.0287) (0.4726,0.0303) (0.4540,0.0320) (0.4359,0.0337) (0.4182,0.0353) (0.4010,0.0370) (0.3844,0.0387) (0.3685,0.0403) (0.3533,0.0420) (0.3389,0.0437) (0.3256,0.0453) (0.3133,0.0470) (0.3022,0.0486) (0.2925,0.0503) (0.2843,0.0520) (0.2777,0.0536) (0.2729,0.0553) (0.2699,0.0570) (0.2688,0.0586) (0.2697,0.0603) (0.2724,0.0620) (0.2770,0.0636) (0.2834,0.0653) (0.2914,0.0670) (0.3009,0.0686) (0.3118,0.0703) (0.3240,0.0719) (0.3372,0.0736) (0.3514,0.0753) (0.3665,0.0769) (0.3824,0.0786) (0.3989,0.0803) (0.4160,0.0819) (0.4336,0.0836) (0.4517,0.0853) (0.4702,0.0869) (0.4891,0.0886) (0.5083,0.0903)}; \addlegendentry{true frontier (full residual)}
\addplot[thick,blue!60!black,dashed] coordinates {(0.5602,0.0253) (0.5401,0.0270) (0.5203,0.0287) (0.5009,0.0303) (0.4819,0.0320) (0.4632,0.0337) (0.4451,0.0353) (0.4274,0.0370) (0.4104,0.0387) (0.3940,0.0403) (0.3785,0.0420) (0.3637,0.0437) (0.3500,0.0453) (0.3373,0.0470) (0.3259,0.0486) (0.3158,0.0503) (0.3073,0.0520) (0.3003,0.0536) (0.2950,0.0553) (0.2916,0.0570) (0.2901,0.0586) (0.2905,0.0603) (0.2928,0.0620) (0.2969,0.0636) (0.3029,0.0653) (0.3105,0.0670) (0.3197,0.0686) (0.3303,0.0703) (0.3423,0.0719) (0.3554,0.0736) (0.3695,0.0753) (0.3846,0.0769) (0.4005,0.0786) (0.4171,0.0803) (0.4344,0.0819) (0.4522,0.0836) (0.4706,0.0853) (0.4894,0.0869) (0.5086,0.0886) (0.5282,0.0903)}; \addlegendentry{promised frontier (diagonal residual)}
\addplot[only marks,mark=o,mark size=1.6pt,red!70!black] coordinates {(0.5332,0.0253) (0.5137,0.0270) (0.4945,0.0287) (0.4756,0.0303) (0.4570,0.0320) (0.4389,0.0337) (0.4212,0.0353) (0.4040,0.0370) (0.3874,0.0387) (0.3715,0.0403) (0.3563,0.0420) (0.3420,0.0437) (0.3286,0.0453) (0.3163,0.0470) (0.3052,0.0486) (0.2955,0.0503) (0.2872,0.0520) (0.2806,0.0536) (0.2757,0.0553) (0.2727,0.0570) (0.2715,0.0586) (0.2722,0.0603) (0.2749,0.0620) (0.2794,0.0636) (0.2856,0.0653) (0.2935,0.0670) (0.3029,0.0686) (0.3137,0.0703) (0.3257,0.0719) (0.3388,0.0736) (0.3529,0.0753) (0.3679,0.0769) (0.3837,0.0786) (0.4002,0.0803) (0.4172,0.0819) (0.4348,0.0836) (0.4528,0.0853) (0.4713,0.0869) (0.4901,0.0886) (0.5093,0.0903)}; \addlegendentry{portfolios chosen on the promise, at true risk}
\draw[|<->|,thick,red!70!black] (axis cs:0.2715,0.0646) -- (axis cs:0.2901,0.0646) node[midway,above=8pt,font=\scriptsize] {first order: promise vs truth ($7\%$)};
\draw[|<->|,thick,black] (axis cs:0.2688,0.0526) -- (axis cs:0.2715,0.0526) node[right=22pt,font=\scriptsize] {second order: chosen vs optimal ($1\%$)};
\addplot[only marks,mark=*,mark size=2pt,black] coordinates {(0.2688,0.0586)};
\addplot[only marks,mark=*,mark size=2pt,blue!60!black] coordinates {(0.2901,0.0586)};
\addplot[only marks,mark=*,mark size=2pt,red!70!black] coordinates {(0.2715,0.0586)};
\end{axis}
\end{tikzpicture}
\caption{Why the information decision shows in reported risk and not in realized risk. Population moments of the reference system with the proxy pair $\{P,Z_2\}$, which fails the separator condition (maximal residual correlation $0.68$). The dashed frontier is what a model with diagonal residual promises; the solid frontier is the truth. The portfolios chosen on the promise (circles), evaluated at their true risk, lie within $1\%$ of the true frontier, the second-order effect of Proposition~\ref{prop:regret2}, while the promise itself is off by $7\%$ at the same target, the first-order effect. In this system the discarded term is negative and the promise overstates risk; on the equity panel it is positive and the promise understates it; the proposition allows either sign, and in both cases the allocation is nearly right and the reported risk is not.}
\label{fig:firstsecond}
\end{figure}
 The frontier is first order in the moments and inherits their $N^{-1/2}$ error; the regret at a declared risk aversion is second order and converges at $N^{-1}$, which is why the sensitivity of the frontier to errors in means does not carry over to the decision.


\subsection{Implementation}\label{sec:implementation}\label{sec:practice}
This subsection states the procedure a practitioner runs, the comparison it supports with mean--variance practice, and what should and should not be expected from it. The procedure, of which the market experiments of Section~\ref{sec:market} are worked instances, is the following.

\begin{enumerate}[label=(\arabic*),leftmargin=*]
\item \emph{Declare the contract.} Fix the asset universe, horizon, rebalancing calendar, portfolio constraints, claim level with its regime set, candidate driver universe with its release calendar, tolerance $\bar\varepsilon$ and tie-break quantities, before looking at returns; nothing in later steps may alter them.
\item \emph{Split the sample.} For each window, three disjoint blocks, certification, estimation and evaluation; the split is the post-selection device of Section~\ref{sec:finite}.
\item \emph{Certify.} On the certification block apply the required conditions to each candidate: chronology and publicity from the release calendar, separation by the maximal absolute residual correlation after conditioning, causality by invariance across the declared regimes; record the four quantities of Definition~\ref{def:adm-order} for every candidate that passes.
\item \emph{Select.} Rank the passing candidates by the order of Definition~\ref{def:adm-order}(b) and fix the minimal class; if none passes, the admissible class is empty and the output is that the declared universe does not support the claim at the declared tolerance. Across rebalancing dates apply the hysteresis rule.
\item \emph{Estimate and optimize.} On the estimation block regress returns on the selected drivers to obtain $\widehat B$, $\widehat V$ and $\widehat\Omega$, form $\widehat B\widehat V\widehat B^{\Tr}+\operatorname{diag}(\widehat\Omega)$ and the fitted conditional mean at the decision-time state, and solve the classical problem inside the selected class only.
\item \emph{Evaluate once.} On the evaluation block report realized risk, the realized-to-predicted variance ratio, out-of-sample residual dependence, turnover and reselections, without returning to step 4.
\end{enumerate}

What a practitioner takes from the framework is listed in Section~\ref{sec:contrib}; the numbers that support each item are those of Section~\ref{sec:market} and the sections that follow it, and the limits stated there hold with the same numbers.

\subsection{Controlled experiments}\label{sec:controlled}\label{sec:evidence}
Experiments~\ref{exp:E1} to~\ref{exp:E4} are controlled, with a known data-generating structure against which every claim can be checked. Figure~\ref{fig:backtest} shows the fact they share. On a single sample the frontiers that dominate in the backtest are those of candidates containing look-ahead information, which do not exist at the decision time. They are implemented as specified in Appendix~\ref{app:config}. The market experiment on public data is reported in Section~\ref{sec:market}.

The designs are the following. Experiment~\ref{exp:E1} verifies exact recovery: on the reference system of Section~\ref{sec:order} the admissibility order must return the exact separators $\{Z_1,Z_2\}$ and their re-codings and must exclude every candidate containing $X^{+}$ even though those candidates have the lowest decision loss. The look-ahead variable is constructed as data-level look-ahead would never be: $X^{+}$ is a sum of the idiosyncratic shocks realized over the holding horizon, a quantity that becomes known only at $t+h$, and it is offered to every selection rule as if it were an ordinary candidate at $t$. A rule that ranks candidates by in-sample decision loss, computed on the certification block with $X^{+}$ in the information set, prefers the candidates that contain it in every window; the admissibility order excludes them before any loss is computed because they fail the chronological condition, which is checked on the variable's measurability and not on its performance.

Experiment~\ref{exp:E2} draws samples of size $N\in\{250,\dots,8000\}$ from the system, $300$ replications each, splits them 50/25/25 into certification, estimation and evaluation blocks, runs the two stages and records whether Stage~I returned the population class, the uniform error of the estimated frontier against the population frontier, and which candidate minimized in-sample loss. Experiment~\ref{exp:E3} rescales the standard deviation of $Z_1$ by $1.5$ and recomputes the diversifiable component under the admissible and under a proxy representation. Experiment~\ref{exp:E4} runs the hysteresis rule across a sequence of nodes with the population drifting and records reselections against a fixed margin. 

Two facts from these runs are worth seeing before the detail. Figure~\ref{fig:e2} shows the Stage-I error rates and the Stage-II frontier error falling with the sample size at the rates the theory gives, with the look-ahead candidate winning on in-sample loss in every replication, which is the theorem and its corollary verified exactly. Figure~\ref{fig:backtest} shows the second, computed on a single sample of $N=2000$ from the same system: each candidate's frontier is estimated on the certification and estimation blocks and then evaluated on the evaluation block with the information available at the decision time; on that sample, every frontier that dominates in the backtest belongs to a candidate containing look-ahead information, and every one of them lies to the right of the admissible frontier once evaluated with the information available at the decision time.

\refstepcounter{experiment}\label{exp:E1}\subsubsection{Experiment~\theexperiment: exact recovery of the admissible class and Markowitz nesting}
A Gaussian structural system with eight assets and two common states generates six candidate variables: the states $Z_1,Z_2$, an affine recoding $\phi(Z_1)=3Z_1-2$, a noisy proxy $P=Z_1+0.7e$, an independent noise $\nu$ and a look-ahead variable $X^{+}$ equal to a sum of horizon idiosyncratic shocks. The candidate set contains the 41 nonempty subsets of size at most three; the chronological admissibility condition removes the 16 containing $X^{+}$. The population screening defect $\varepsilon_t(H)$ is the maximal absolute conditional residual correlation, computed analytically. The four identities hold. The look-ahead candidate has $\varepsilon_t=0.435$ and is removed before ranking irrespective of its predictive value. The sets $\{Z_1,Z_2\}$ and $\{\phi(Z_1),Z_2\}$ have identical conditional covariances (maximal entrywise difference $1.1\times10^{-16}$) and both attain $\varepsilon_t=0$ at dimension two, so Stage~I returns their common class; every other admissible candidate has $\varepsilon_t\ge0.299$ (the proxy pair $\{P,Z_2\}$) or dimension three. The Stage-II frontier computed from the fixed class coincides with Markowitz computed from the same conditional moments to machine precision (maximal difference $0$ on a 41-point target grid).

\begin{figure}[!htb]
\centering

\caption{Backtest versus reality on the reference system, one sample of $2{,}000$ observations. Left: the in-sample frontier of every one of the 41 candidate conditioning sets, at the mean state. The frontiers that dominate are those containing the look-ahead variable $X^{+}$ (dashed); the admissibility-optimal class (thick) is not the best backtest. Right: the same portfolios evaluated under the information actually available at the decision time, so that $X^{+}$ is replaced by what is known of it, namely nothing. The look-ahead frontiers move to the right of the admissible frontier, whose promised and realized risks nearly coincide. Selecting information on backtest performance would have chosen the frontiers that do not exist; the chronological admissibility condition removes them before any frontier is drawn.}
\label{fig:backtest}
\end{figure}

\refstepcounter{experiment}\label{exp:E2}\subsubsection{Experiment~\theexperiment: finite-sample selection and estimation}
The same system is estimated from $N$ observations split $50/25/25$ into discovery, certification and evaluation blocks, with $\hat\varepsilon_t$ the maximal absolute residual correlation on the certification block, tolerance $0.10+2/\sqrt{N_{\mathrm{cert}}}$ (the declared level plus the $97.5\%$ Fisher bound for a correlation; the size-controlled multiplier-bootstrap calibration of the maximum \cite{Chernozhukov2013} is used alongside it below and on the market panels), ties broken by dimension and then by $\hat\varepsilon_t$, 300 replications. Table~\ref{tab:e2} and Figure~\ref{fig:e2} report the Stage-I error rates and the Stage-II frontier error $\sup_m|\hat\rho^{\star}(m)-\rho^{\star}(m)|$ over the central 60\% of the target range. Class recovery reaches $0.99$ at $N=1{,}000$ and $1$ from $N=2{,}000$ (Proposition~\ref{prop:stage1-consistency}).

The frontier error with estimated means decreases slowly, the mean-error sensitivity of \cite{ChopraZiemba1993}; with the mean known it halves for each fourfold increase in $N$. The last column shows that the look-ahead representation attains lower in-sample decision loss than the true class in essentially every replication. The decision-relevant quantity is the regret $J(\hat w;\Hst)-J(w^{\star};\Hst)$ at the declared risk aversion under the true moments, and Table~\ref{tab:regret} shows it decaying as $N^{-1}$, the rate of Proposition~\ref{prop:regret2}, from 0.035 at $N=250$ to 0.0008 at $N=8{,}000$, with Jorion-type shrinkage \cite{Jorion1986} of the conditional mean reducing it by 35\% at $N=250$ and 12\% at $N=8{,}000$. The global-minimum-variance error, a covariance-only quantity, decays at the parametric rate (0.0124 to 0.0013), and the multiplier-bootstrap tolerance reaches class recovery one from $N=1{,}000$; both are in Appendix~\ref{app:config}.

\begin{figure}[!htb]
\centering
\begin{tikzpicture}
\begin{groupplot}[group style={group size=2 by 1,horizontal sep=1.7cm},width=7.2cm,height=5.0cm,tick label style={font=\footnotesize},label style={font=\small},legend style={font=\footnotesize},xmode=log,log ticks with fixed point,xtick={250,1000,4000},xlabel={sample size $N$}]
\nextgroupplot[ylabel={rate},ymin=-0.02,ymax=1.05,]
\addplot[thick,mark=*,mark size=1.6pt] coordinates {(250,0.7533) (500,0.9233) (1000,0.9933) (2000,1.0000) (4000,1.0000) (8000,1.0000)}; \addlegendentry{class recovery}
\addplot[thick,mark=square*,mark size=1.6pt,red!70!black] coordinates {(250,0.1733) (500,0.0467) (1000,0.0033) (2000,0.0000) (4000,0.0000) (8000,0.0000)}; \addlegendentry{false admission}
\addplot[thick,mark=triangle*,mark size=2pt,blue!60!black] coordinates {(250,0.1133) (500,0.0533) (1000,0.0067) (2000,0.0000) (4000,0.0000) (8000,0.0000)}; \addlegendentry{false exclusion}
\nextgroupplot[ylabel={frontier error},ymode=log]
\addplot[thick,mark=*,mark size=1.6pt] coordinates {(250,0.0935) (500,0.0904) (1000,0.0784) (2000,0.0730) (4000,0.0648) (8000,0.0540)}; \addlegendentry{estimated mean}
\addplot[thick,mark=square*,mark size=1.6pt,blue!60!black] coordinates {(250,0.0336) (500,0.0243) (1000,0.0158) (2000,0.0103) (4000,0.0073) (8000,0.0050)}; \addlegendentry{known mean}
\addplot[dashed,gray] coordinates {(250,0.0336) (500,0.0238) (1000,0.0168) (2000,0.0119) (4000,0.0084) (8000,0.0059)}; \addlegendentry{$N^{-1/2}$ reference}
\end{groupplot}
\end{tikzpicture}
\caption{Experiment~\ref{exp:E2}. Left: Stage-I error rates against sample size (300 replications). Right: median Stage-II frontier error $\sup_m|\hat\rho^{\star}(m)-\rho^{\star}(m)|$ with the conditional mean estimated and with the mean known; the dashed line is proportional to $N^{-1/2}$. Selection is consistent, the covariance-driven part of the frontier error decays at the parametric rate, and the mean-driven part is the classical sensitivity to errors in means.}
\label{fig:e2}
\end{figure}

\begin{table}[!htb]\centering\footnotesize
\caption{Experiment~\ref{exp:E2}: Stage-I selection and Stage-II frontier error against sample size (300 replications; medians for frontier errors). Rates carry a binomial Monte Carlo standard error of at most $0.03$.}
\label{tab:e2}
\begin{tabular}{@{}rcccccc@{}}
\toprule
$N$ & \shortstack{false\\admission} & \shortstack{false\\exclusion} & \shortstack{class\\recovery} & \shortstack{frontier\\error} & \shortstack{frontier error,\\known mean} & \shortstack{look-ahead wins\\on loss}\\
\midrule
250 & 0.173 & 0.113 & 0.753 & 0.094 & 0.0336 & 0.997\\
500 & 0.047 & 0.053 & 0.923 & 0.090 & 0.0243 & 1.000\\
1,000 & 0.003 & 0.007 & 0.993 & 0.078 & 0.0158 & 1.000\\
2,000 & 0.000 & 0.000 & 1.000 & 0.073 & 0.0103 & 1.000\\
4,000 & 0.000 & 0.000 & 1.000 & 0.065 & 0.0073 & 1.000\\
8,000 & 0.000 & 0.000 & 1.000 & 0.054 & 0.0050 & 1.000\\
\bottomrule
\end{tabular}
\end{table}

\begin{table}[!htb]\centering\small
\caption{Experiment~\ref{exp:E2}: median decision regret $J(\hat w;\Hst)-J(w^{\star};\Hst)$ at $\gamma=5$ under the true conditional moments, with the sample mean and with Jorion-type shrinkage toward the grand mean (300 replications).}
\label{tab:regret}
\begin{tabular}{@{}rcc@{}}
\toprule
$N$ & sample mean & shrunk mean\\
\midrule
250 & 0.0350 & 0.0228\\
500 & 0.0159 & 0.0115\\
1,000 & 0.0076 & 0.0055\\
2,000 & 0.0038 & 0.0028\\
4,000 & 0.0017 & 0.0014\\
8,000 & 0.0008 & 0.0007\\
\bottomrule
\end{tabular}
\end{table}

\refstepcounter{experiment}\label{exp:E3}\subsubsection{Experiment~\theexperiment: causal admissibility under an intervention}
A one-state system contains the structural state $Z$ and a proxy $P=Z+0.05e$ calibrated so that both have screening defect $0.036$ before intervention, indistinguishable at $N=4{,}000$. The intervention $\mathrm{do}(P:=\tilde P)$ replaces the proxy by an independent variable with the same marginal and leaves the mechanism from $Z$ to returns unchanged. The interventional admissibility condition is implemented as invariance of the conditional regression across the two regimes (Definition~\ref{def:claim}); under the null of two observational regimes the same test admits both $Z$ and $P$ in every one of 300 replications, so its size at the declared tolerances is below $1/300$, and the rejections of $P$ reported below are not false rejections. Portfolios are the unconstrained conditional mean--variance weights normalized to full investment, a deliberate choice that isolates the effect of the information from that of constraints. Table~\ref{tab:e3} and Figure~\ref{fig:e3} report 200 replications. Under a conditional-law contract Stage~I retains the proxy in $47\%$ of replications, the two candidates being tied; the interventional admissibility condition admits $Z$ in every replication and $P$ in none. After the intervention the proxy's screening defect rises to $0.563$ and its response error to $1.25$, while both quantities for $Z$ are unchanged. The portfolio fixed on proxy-conditioned moments incurs a post-intervention loss of $0.157$ against $0.136$ for the state-conditioned portfolio; in the loss units of~\eqref{eq:markowitz} with $\gamma=5$ this is a certainty-equivalent return gap of $0.021$, about one fifth of the average conditional mean return in the design. The experiment measures the statement of Proposition~\ref{prop:diversification}(c), that the split of risk survives the intervention only for the interventionally admissible representation.

\begin{table}[!htb]\centering\small
\caption{Experiment~\ref{exp:E3}: structural state versus decision-equivalent proxy before and after an intervention severing the proxy channel (means, standard deviations in parentheses, 200 replications).}
\label{tab:e3}
\begin{tabular}{@{}lcc@{}}
\toprule
 & structural state $Z$ & proxy $P$\\
\midrule
screening defect $\hat\varepsilon_t$, before intervention & 0.036 (0.007) & 0.037 (0.007)\\
screening defect $\hat\varepsilon_t$, after intervention & 0.036 (0.007) & 0.563 (0.011)\\
response error $\|\widehat B-B\|_2$, after intervention & 0.035 (0.009) & 1.248 (0.022)\\
retained by conditional-law Stage~I (fraction) & 0.530 & 0.470\\
admitted by interventional admissibility condition (fraction) & 1.000 & 0.000\\
post-intervention loss of the fixed portfolio & 0.136 (0.010) & 0.157 (0.024)\\
\bottomrule
\end{tabular}
\end{table}

\begin{figure}[!htb]
\centering
\begin{tikzpicture}
\begin{groupplot}[group style={group size=2 by 1,horizontal sep=1.9cm},width=7.0cm,height=4.9cm,ybar,/pgf/bar width=9pt,tick label style={font=\footnotesize},label style={font=\small},legend style={font=\footnotesize},symbolic x coords={state $Z$,proxy $P$},xtick=data,enlarge x limits=0.5,ymin=0]
\nextgroupplot[ylabel={screening defect $\hat\varepsilon_t$},legend style={legend columns=2}]
\addplot[fill=gray!40] coordinates {(state $Z$,0.036) (proxy $P$,0.037)}; \addlegendentry{before intervention}
\addplot[fill=red!60!black] coordinates {(state $Z$,0.036) (proxy $P$,0.563)}; \addlegendentry{after intervention}
\nextgroupplot[ylabel={after intervention},legend style={legend columns=2}]
\addplot[fill=gray!40] coordinates {(state $Z$,0.035) (proxy $P$,1.248)}; \addlegendentry{response error $\|\widehat B-B\|_2$}
\addplot[fill=blue!50!black] coordinates {(state $Z$,0.136) (proxy $P$,0.157)}; \addlegendentry{loss of fixed portfolio}
\end{groupplot}
\end{tikzpicture}
\caption{E3. Left: screening defect of the structural state and of the decision-equivalent proxy before and after the intervention severing the proxy channel; indistinguishable before, separated after. Right: post-intervention response error and loss of the portfolio fixed on each representation's pre-intervention moments. Means over 200 replications.}
\label{fig:e3}
\end{figure}

\refstepcounter{experiment}\label{exp:E4}\subsubsection{Experiment~\theexperiment: reselection across decision nodes}
Forty decision nodes are simulated, each with a fresh sample of $200$ observations of which $100$ certify; at node 20 the second common state changes from $Z_2$ to an independent $Z_3$, so the population-optimal class changes once. Candidates are the subsets of size at most two of $\{Z_1,Z_2,Z_3,P,\nu\}$, scored by $\hat\varepsilon_t+0.02\,\kappa_t$, a real representation of the admissibility order at the tolerance scale. The hysteresis rule of Section~\ref{sec:guarantees} is applied with margins $\eta\in\{0,0.02,0.05,0.10,0.20\}$; $\eta=0$ is reselection by argmin at every node. Figure~\ref{fig:e4} reports, over 100 replications, the mean number of reselections, the fraction of nodes at which the incumbent is the population-optimal class, and the mean lag in nodes before the new optimal class is adopted after the change. Without hysteresis the incumbent changes 5.7 times on average, all but one driven by estimation noise; a margin of $0.10$ reduces this to 1.16 with the incumbent correct at 99.2\% of nodes and a lag of 0.12 nodes; a margin of $0.20$ suppresses reselection almost entirely but delays adoption of the new class by 0.62 nodes on average. The pattern is that of Proposition~\ref{prop:hysteresis}. Reselections not driven by a population change vanish once $\eta$ exceeds twice the estimation error, at the cost of a lag when the population does change.

\begin{figure}[!htb]
\centering
\begin{tikzpicture}
\begin{groupplot}[group style={group size=3 by 1,horizontal sep=1.5cm},width=5.4cm,height=4.8cm,tick label style={font=\footnotesize},label style={font=\small},xlabel={margin $\eta$},xtick={0,0.1,0.2},xticklabels={$0$,$0.1$,$0.2$}]
\nextgroupplot[ylabel={reselections (mean)},ymin=0]
\addplot[thick,mark=*,mark size=1.6pt] coordinates {(0.0,5.70) (0.02,3.34) (0.05,1.78) (0.1,1.16) (0.2,1.07)};
\nextgroupplot[ylabel={nodes on optimal class},ymin=0.9,ymax=1.0]
\addplot[thick,mark=*,mark size=1.6pt] coordinates {(0.0,0.936) (0.02,0.966) (0.05,0.986) (0.1,0.992) (0.2,0.975)};
\nextgroupplot[ylabel={lag after change (nodes)},ymin=0]
\addplot[thick,mark=*,mark size=1.6pt] coordinates {(0.0,0.07) (0.02,0.06) (0.05,0.09) (0.1,0.12) (0.2,0.62)};
\end{groupplot}
\end{tikzpicture}
\caption{E4. Reselection of the Stage-I class across forty decision nodes with one population change, under the hysteresis rule with margin $\eta$ (100 replications; $\eta=0$ is argmin reselection). Left: mean number of reselections. Centre: fraction of nodes at which the incumbent is the population-optimal class. Right: mean lag before the new optimal class is adopted after the change.}
\label{fig:e4}
\end{figure}

\subsection{Attainability of the separator condition on market data}\label{sec:market}

Two outputs of the procedure must be kept apart in everything that follows. When the admissible class is non-empty, the framework's output is a portfolio of the admissible class. When it is empty, the framework's output is the statement that the declared universe does not support the claim, and no admissible portfolio exists; what the paper then reports, in order to measure what the failure costs, is built on the \emph{least-defect representation} $H^{\flat}_t$, the minimizer of the admissibility order with the tolerance removed, and is labelled accordingly. A least-defect portfolio is a diagnostic object, not an admissible one, and the labels ``least-defect'' in the tables and figures below mark exactly the windows in which the condition was not met.

The primary panel has the size of a managed book. It consists of all $147$ S\&P~500 constituents with complete daily prices from 2010 to 2024, $3,312$ aligned sessions, against a candidate universe of 180 series, the 127 economic and financial series used throughout the paper together with the 48 industry portfolios and the five factors of Kenneth French's library \cite{French2024}, which are public and point-in-time and therefore admissible, with every subset eligible and the search by greedy forward selection up to twelve. The 127-series universe alone, which is the right candidate set for the small panels, is reported for this panel alongside the primary results below and in Section~\ref{sec:market-cost}, as the case of a candidate universe that does not match the asset class. Windows stepped by 63 sessions have a certification block of 378 sessions, an estimation block of 189 or 378 sessions, and an evaluation block of 189, giving $41$ and $38$ windows; portfolios are long-only minimum-variance, the rule a book of this size would run. Two further panels are used where they answer a question the primary one cannot: a cross-asset universe of fifty equities and fifty Spanish bonds over 2014--2018, and a twelve-equity panel over 2001--2023, one name per sector, which is small enough for the subset structure of the admissibility order to be visible and long enough for a twenty-year strategy path. Series that are strictly positive enter as log differences, rates as first differences.

The following properties of the design hold for every panel in this section, and the first two state what is and is not computed at each rebalance. First, the field experiments run the conditional-law contract: Stage~I certifies chronology and publicity from the release calendar and the separator condition by forward selection on the maximal residual correlation, the causal-response defect is identically zero under this contract, and no causal discovery algorithm runs in the loop. The interventional contract is run separately in two forms, with calendar regimes as environments later in this subsection and, inside the rebalance loop, with the three consecutive thirds of the certification block as environments in the sense of invariant causal prediction \cite{Peters2016}, a driver being retained when its loadings on every asset are stable across the three within pooled standard errors given the drivers already selected. Second, Stage~II uses $\widehat B\widehat V\widehat B^{\Tr}+\operatorname{diag}(\widehat\Omega)$ with least-squares loadings, the represented-response term of the tower decomposition, which coincides with the factor-model construction to machine precision; the nonlinear response the decomposition allows is not used on market data. Third, the certification regresses returns on drivers of the same session, so the representation it selects is a contemporaneous decomposition of risk, the object a factor risk model estimates, and not a predictive signal. With the drivers lagged by one session the reduction of the defect falls, on the primary panel, from $8\%$ to $1\%$, and on the twelve-equity panel from $50\%$ to $4\%$, and it does not recover at two or five sessions. This is what the framework's two components require, the horizon-closed component certifies the separator and only estimated moments enter the portfolio, and it locates the contribution, which is not a better factor model but the test of whether a factor set separates, the exclusion of look-ahead and non-public information, and an order under which portfolio loss cannot select the information. Fourth, forward selection is a heuristic for the minimum over subsets; where that minimum can be enumerated it is attained in every trial (Appendix~\ref{app:config}), so the defects reported are tight upper bounds. Fifth, the primary panel is the set of names with complete coverage over 2010--2024 and therefore carries survivorship; the statements below concern dependence and covariance, on which the bias is of second order, and the sorted-portfolio universes do not share it.

On the primary panel the separator condition is not met, and at this scale the screening statistic is not the quantity that shows it. The maximal absolute residual correlation over the $10,731$ pairs is $0.90$ before conditioning and $0.71$ after twelve drivers, against a pair-calibrated tolerance of $0.34$, and no window is admissible: with ten thousand pairs some two names are always nearly collinear, two share classes, two refiners, two regional banks, and a maximum sits near one whatever the representation. The statistic certifies the separator property of the matrix, which a book of this size does not have; the question that matters at this scale is the one the consequence statistic answers in Section~\ref{sec:market-cost}. The selection itself is economically legible. Of the twelve drivers chosen, a median of $6$ are industry portfolios, the value factor appears in most windows, and the volatility index remains; with the 127 macro series alone the selection falls back on the volatility index, equity indices and currencies and lowers the maximal residual correlation by $8\%$ against $21\%$ here, the reduction measured on the certification block relative to its value before any conditioning.

Whether the condition can be met at all depends on the asset universe, and in a way the theory anticipates. Table~\ref{tab:grid} repeats the exercise on five universes against the same 127 candidates, with the tolerance calibrated to the number of asset pairs by a \v{S}id\'ak correction so that the defect is comparable across universes of different size. On the twelve single names the selected representation removes half of the maximal residual correlation and clears the tolerance in $17\%$ of windows, the first universe on which the admissible class is non-empty on market data. On the 48 industry portfolios the reduction falls to $18\%$ and on the 25 size and value portfolios to $5\%$, with no admissible window in either; the European and Developed size and momentum portfolios lie in between. The ordering follows the number of names inside each asset rather than the number of assets, and the design below shows it directly. A sorted portfolio is an average of hundreds of securities whose idiosyncratic components have already cancelled, so what remains is common variation, and a representation cannot screen off a dependence that is the common factor itself. Proposition~\ref{prop:diversification} describes this from the other side, and the empirical consequence is that the separator condition is a property of the asset universe as much as of the candidate set. It is attainable on individual securities and, at the tolerances a practitioner would declare, not attainable on pre-diversified portfolios.

\begin{table}[!htb]\centering\small
\caption{Statistical admissibility condition across asset universes, 127 candidate drivers, daily 2001--2023, rolling 756-session windows. The tolerance is $0.10$ plus a \v{S}id\'ak-calibrated bound for the maximum over the $n(n-1)/2$ asset pairs. Defects are medians over windows of the maximal absolute residual correlation before conditioning and after greedy forward selection of at most eight drivers.}
\label{tab:grid}
\begin{tabular}{@{}lrrrrrrr@{}}
\toprule
universe & $n$ & windows & tolerance & before & after & reduction & admissible\\
\midrule
12 US equities & 12 & 82 & 0.27 & 0.67 & 0.34 & 49\% & 17\% \\
48 industries & 48 & 78 & 0.31 & 0.92 & 0.76 & 18\% & 0\% \\
25 size/value (US) & 25 & 78 & 0.29 & 0.98 & 0.93 & 5\% & 0\% \\
25 size/momentum (Europe) & 25 & 82 & 0.29 & 0.96 & 0.81 & 16\% & 0\% \\
25 size/momentum (Developed) & 25 & 82 & 0.29 & 0.97 & 0.90 & 7\% & 0\% \\
\bottomrule
\end{tabular}
\end{table}

The ordering across universes could reflect anything that distinguishes single names from sorted portfolios. A design that holds everything else fixed isolates the mechanism. From 147 U.S.\ equities with complete coverage between 2010 and 2024 we form universes of twelve assets in which each asset is an equally weighted basket of $k$ names, for $k=1,2,4,8,12$: the same names and the same 127 candidates, aggregated differently, in six random groupings per $k$ and the same 756-session windows stepped by 189 sessions, so the only quantity that varies is how much idiosyncratic risk has already been diversified away inside each asset. The defect before conditioning rises monotonically with $k$, from $0.67$ to $0.88$, and the defect attainable after selecting eight drivers rises with it beyond the smallest baskets, from $0.39$ at two names per asset to $0.43$, $0.55$ and $0.64$ at twelve, while the fraction of certified windows falls from $10\%$ and $8\%$ to zero (Figure~\ref{fig:ea}). 

Aggregation is therefore the operative variable, not the identity of the assets, and Proposition~\ref{prop:aggregation} gives the mechanism: a common residual component that the representation does not explain survives averaging while the idiosyncratic part does not, so the basket correlation rises with the number of names and the condition can hold only below a basket size set by the ratio of idiosyncratic to common residual variance. Each doubling of the number of names inside an asset removes idiosyncratic variation that the drivers were not needed to explain and leaves common variation that they cannot explain. The single-name design is the one exception to monotonicity after conditioning, and the reason is statistical rather than structural, since the maximum over pairs of an estimated correlation matrix is biased upward and the bias is largest for the noisiest residuals, which are those of individual names.

\begin{figure}[!htb]
\centering
\begin{tikzpicture}
\begin{axis}[width=10.5cm,height=6cm,xmode=log,log basis x=2,xtick={1,2,4,8,12},xticklabels={1,2,4,8,12},xlabel={names per asset (twelve assets throughout)},ylabel={maximal residual correlation},ymin=0.2,ymax=0.95,tick label style={font=\footnotesize},label style={font=\small},legend style={font=\footnotesize,legend columns=3},legend cell align=left]
\addplot[name path=hi,draw=none,forget plot] coordinates {(1,0.546) (2,0.447) (4,0.522) (8,0.644) (12,0.695)};
\addplot[name path=lo,draw=none,forget plot] coordinates {(1,0.332) (2,0.314) (4,0.395) (8,0.493) (12,0.587)};
\addplot[gray!22,forget plot] fill between[of=hi and lo];
\addplot[thick,mark=o,mark size=2pt,gray!60!black,forget plot] coordinates {(1,0.668) (2,0.708) (4,0.746) (8,0.841) (12,0.878)};
\addplot[thick,mark=*,mark size=2pt,black,forget plot] coordinates {(1,0.428) (2,0.391) (4,0.434) (8,0.552) (12,0.639)};
\addplot[thick,red!70!black,dashed,domain=1:12,forget plot] {0.273};
\addlegendimage{thick,mark=o,gray!60!black}\addlegendentry{before conditioning}
\addlegendimage{thick,mark=*,black}\addlegendentry{after selecting eight drivers}
\addlegendimage{thick,red!70!black,dashed}\addlegendentry{tolerance}
\end{axis}
\end{tikzpicture}
\caption{Prior diversification and the attainable defect, with the asset universe held fixed. Twelve assets, each an equally weighted basket of $k$ names drawn from the same 147 equities, six draws per design, median over windows with the interquartile band of the conditioned defect. Both the dependence to be removed and the dependence that survives conditioning increase with the number of names inside each asset.}
\label{fig:ea}
\end{figure}

Whether a larger candidate universe would close the gap, whether the greedy search attains the admissibility optimum, how the optimum depends on the number of assets, and how a cross-asset universe of equities and bonds behaves are reported in Section~\ref{sec:more}. The failure of the separator condition is therefore of two different kinds. For individual securities it is a shortage of candidates, and the framework says how many more would be needed. For pre-diversified portfolios it is structural within the declared family of candidates, since the residual dependence is the common factor that the sorting has already isolated, and no set of observable drivers of this type removes it.

\subsection{Robustness of attainability}\label{sec:more}

This subsection reports the results that bear on the attainability of the separator condition and were not needed to state the answer of Section~\ref{sec:market}: the optimality of the selection and its dependence on the number of assets, the attainable defect against the size of the candidate search, the cross-asset design, and the large-universe comparison with loss-based representation selection on the 127-series candidate universe.

\subsubsection{Optimality of the selection and dependence on the number of assets}
Two questions about the selection itself remain: whether the set it returns is the admissibility optimum, and how that optimum depends on the number of assets. The first cannot be settled by enumeration, since there are of the order of $10^{12}$ subsets of eight drivers among 127, but it can be settled to the precision that matters. On the equity panel a beam search of width four, backward elimination from the forty best single drivers and a swap local search with random restarts reach medians of $0.328$, $0.355$ and $0.332$ against $0.336$ for forward selection: the best of them improves on it by $0.006$ at the median and by at most $0.013$, and shares five of every eight drivers with it. The reason is the shape of the objective along the selection path, $0.65$ before conditioning, $0.45$ after one driver, $0.39$ after two, $0.36$ after three and $0.34$ after eight: the first driver does two thirds of the work, the second and third most of the rest, and beyond three the gains are within the sampling error of the statistic, so many sets are near-optimal and differ only in which series fill the last places. The admissibility order is designed for this situation. At a declared tolerance the order selects the set of least dimension that meets it, and what the independent selections return is stable in its first coordinates, the volatility index in every window followed by an equity index and a government yield, while the identity of the marginal drivers, which the data cannot determine, does not enter the decision. The second question has a monotone answer. On universes of $6$, $12$, $24$ and $48$ random S\&P~500 names the reduction attainable with eight drivers is $49\%$, $40\%$, $30\%$ and $20\%$ and the certified fraction $30\%$, $5\%$, $2\%$ and $0\%$; the selection is run afresh in every window and every universe, and the fact that it returns the same leading drivers is a finding about the data, not a constraint of the design. A maximum over $n(n-1)/2$ pairs is a harder statistic as $n$ grows even after calibration, and a fixed number of drivers explains a smaller share of a larger cross-section; for a large portfolio the framework therefore certifies sub-universes, or relies on the consequence statistic of Proposition~\ref{prop:diversification}(d), which is defined at the portfolio's weights and does not grow with the number of pairs.

\subsubsection{Size of the candidate search}
Whether a larger candidate universe would close the gap has a different answer in each case, and the difference is measurable. Repeating the certification with candidate sets of $4$, $8$, $16$, $32$, $64$ and $127$ drivers drawn at random from the declared universe traces the attainable defect against the size of the search (Figure~\ref{fig:eb}). On the twelve equities the defect falls steadily, from $0.56$ with four candidates to $0.34$ with all 127, at about $0.07$ per doubling of the search; extrapolating the log-linear trend, of the order of three to four hundred candidates of the same kind would reach the tolerance. On the size and value portfolios the same exercise moves the defect from $0.97$ to $0.93$, about $0.010$ per doubling, and the same extrapolation puts the required search beyond any conceivable universe. A third universe, twelve bonds drawn from 1{,}336 Spanish government and corporate issues with daily prices from 2014 to 2018, gives a reduction of $17\%$ and no admissible window against the same drivers, which are macroeconomic and equity series rather than the sovereign curve that prices those bonds. The candidate universe must match the asset class, and a declared universe that does not is reported as such rather than repaired after the fact. 
\subsubsection{A cross-asset universe}
A cross-asset design completes the picture. From 147 S\&P~500 constituents with complete daily prices and 1{,}336 Spanish government and corporate bonds with daily prices from 2014 to 2018, we form universes of twelve individual equities, twelve individual bonds, and six of each, against the same 127 candidates, over the period the two sources share. Two features of the bond data determine the design. Bond prices carry stale marks (five to fifteen per cent of zero daily returns) and extreme kurtosis, so dependence is measured here by rank correlation, which leaves the equity results unchanged; and the bond file is dated one session later than the market data, a convention visible in the cross-correlation of bond returns with the euro curve, which peaks at a lag of one session for every curve variable, including the U.S.\ one, vanishes at two, and is absent between bonds themselves, so the bond dates are shifted back one session before alignment. Table~\ref{tab:cross} reports the attainable defect with same-session drivers and with drivers lagged one session. All three universes behave alike. Same-session drivers remove a substantial part of the dependence, $44\%$ on equities, $22\%$ on bonds, where the selection picks the German five-year bund and the euro five-year swap, and $38\%$ on the mixed universe, and little survives a one-session lag. Only the equity universe is certified, in $42\%$ of windows; on bonds and on the mixed universe the maximum over pairs is taken by bond pairs that share a sovereign curve the candidate set does not contain, and no window is admissible.

\begin{figure}[!htb]
\centering
\begin{tikzpicture}
\begin{axis}[width=10.5cm,height=6.2cm,xmode=log,log basis x=2,xlabel={number of candidate drivers in the search},ylabel={attainable defect},xtick={4,8,16,32,64,128},xticklabels={4,8,16,32,64,127},ymin=0.2,ymax=1.02,tick label style={font=\footnotesize},label style={font=\small},legend style={font=\footnotesize,legend columns=3},legend cell align=left]
\addplot[name path=eqhi,draw=none] coordinates {(4,0.608) (8,0.579) (16,0.548) (32,0.490) (64,0.447) (127,0.373)};
\addplot[name path=eqlo,draw=none] coordinates {(4,0.493) (8,0.475) (16,0.426) (32,0.356) (64,0.329) (127,0.285)};
\addplot[gray!25,forget plot] fill between[of=eqhi and eqlo];
\addplot[thick,mark=*,mark size=1.8pt,black,forget plot] coordinates {(4,0.559) (8,0.536) (16,0.503) (32,0.426) (64,0.386) (127,0.336)}; 
\addplot[name path=svhi,draw=none] coordinates {(4,0.974) (8,0.971) (16,0.966) (32,0.964) (64,0.953) (127,0.948)};
\addplot[name path=svlo,draw=none] coordinates {(4,0.959) (8,0.955) (16,0.948) (32,0.944) (64,0.929) (127,0.917)};
\addplot[blue!12,forget plot] fill between[of=svhi and svlo];
\addplot[thick,mark=square*,mark size=1.6pt,blue!60!black,forget plot] coordinates {(4,0.969) (8,0.966) (16,0.958) (32,0.954) (64,0.945) (127,0.931)}; 
\addplot[thick,red!70!black,dashed,domain=4:128,forget plot] {0.273};
\addlegendimage{thick,mark=*,black}\addlegendentry{12 US equities}
\addlegendimage{thick,mark=square*,blue!60!black}\addlegendentry{25 size/value portfolios}
\addlegendimage{thick,red!70!black,dashed}\addlegendentry{tolerance}
\end{axis}
\end{tikzpicture}
\caption{Attainable defect against the size of the candidate search, median over windows and over eight random draws of the candidate set at each size, with the interquartile band. On individual securities the defect declines steadily with the size of the search and the extrapolated trend reaches the tolerance at a few hundred candidates; on pre-diversified portfolios it is almost flat, and no candidate universe of a realistic size would certify them.}
\label{fig:eb}
\end{figure}

\begin{table}[!htb]\centering\small
\caption{Cross-asset design, rank-correlation defect, 2014--2018, 504-session windows, bond dates realigned, tolerance calibrated to the number of pairs. Attainable defect after forward selection of at most eight drivers, with same-session drivers and with drivers lagged one session.}
\label{tab:cross}
\begin{tabular}{@{}lrrrrrr@{}}
\toprule
 & \multicolumn{4}{c}{same session} & \multicolumn{2}{c}{lagged one session}\\
\cmidrule(lr){2-5}\cmidrule(lr){6-7}
universe & before & after & reduction & admissible & after & reduction\\
\midrule
Individual equities (S\&P names) & 0.64 & 0.36 & 44\% & 42\% & 0.62 & 9\% \\
Individual bonds (Spanish issues) & 0.92 & 0.72 & 22\% & 0\% & 0.85 & 9\% \\
Cross-asset, six equities and six bonds & 0.90 & 0.55 & 38\% & 0\% & 0.74 & 14\% \\
\bottomrule
\end{tabular}
\end{table}

\subsection{The cost of failed separation}\label{sec:market-cost}

The condition fails on every panel of realistic size; this subsection measures what the failure costs the portfolio, first through the risk the model reports and then through the price of demanding invariance across regimes. The consequence statistic of Proposition~\ref{prop:diversification}(d) is evaluated at the long-only minimum-variance weights the model itself produces, on the certification block, before any portfolio is held.

On the primary panel with the 180-series candidate universe the diagonal-residual restriction understates the variance of its own portfolio by $14\%$ at both estimation lengths: the variance implied by the full residual is $1.14$ times the variance the diagonal model predicts. With the 127 macro series alone the factor is $2.39$ at the median, above one in every window, with an interquartile range of $1.71$ to $3.83$; in volatility terms the model predicts about two thirds of the risk it carries. The difference is the candidate universe. Eight macro drivers leave the industry and style structure of a cross-section of 147 names in the residual, and a model that treats that remainder as idiosyncratic understates the risk by more than half; industry and style candidates, which the framework admits as readily as macro series, remove most of it. The evaluation block confirms both readings. Figure~\ref{fig:field} shows the defect and the bias statistics window by window for the macro candidate universe, where the restriction is largest. The Barra bias statistic, the ratio of realized to predicted volatility of the held portfolio, is $1.50$ with the full residual and $1.76$ with the diagonal restriction at 189 estimation sessions and $1.22$ and $1.34$ at 378, against $1.39$ and $2.03$ with the macro universe; the excess over one with the full residual is the non-stationarity every estimator faces, and the gap between the two ratios is the restriction, which the consequence statistic announced from the certification block. The two random halves of the universe and the cross-asset universe, reported in Section~\ref{sec:more}, repeat the pattern.

\begin{figure}[!htb]
\centering
\begin{tikzpicture}
\begin{groupplot}[group style={group size=2 by 1,horizontal sep=1.8cm,vertical sep=2.4cm},height=5.6cm,xmin=2010.3,xmax=2021.5,xtick={2011,2013,2015,2017,2019,2021},xticklabel style={/pgf/number format/1000 sep=}]
\nextgroupplot[width=7.6cm,ylabel={maximal residual correlation},ymin=0.6,ymax=1.0]
\addplot[only marks,mark=o,mark size=1.1pt,gray] coordinates {(2010.33,0.888) (2010.58,0.873) (2010.83,0.889) (2011.08,0.896) (2011.33,0.901) (2011.58,0.887) (2011.75,0.841) (2012.08,0.931) (2012.33,0.929) (2012.58,0.936) (2012.83,0.931) (2013.08,0.928) (2013.33,0.934) (2013.58,0.817) (2013.83,0.844) (2014.08,0.843) (2014.33,0.880) (2014.58,0.892) (2014.83,0.914) (2015.08,0.920) (2015.33,0.935) (2015.58,0.921) (2015.83,0.911) (2016.08,0.898) (2016.33,0.880) (2016.58,0.877) (2016.83,0.857) (2017.08,0.865) (2017.33,0.871) (2017.58,0.895) (2017.83,0.896) (2018.08,0.899) (2018.33,0.884) (2018.58,0.871) (2018.83,0.931) (2019.08,0.934) (2019.33,0.927) (2019.58,0.927) (2019.83,0.925) (2020.08,0.932) (2020.33,0.911)}; \addlegendentry{before conditioning}
\addplot[only marks,mark=*,mark size=1.1pt,black] coordinates {(2010.33,0.750) (2010.58,0.759) (2010.83,0.795) (2011.08,0.804) (2011.33,0.806) (2011.58,0.790) (2011.75,0.763) (2012.08,0.927) (2012.33,0.924) (2012.58,0.930) (2012.83,0.924) (2013.08,0.920) (2013.33,0.928) (2013.58,0.750) (2013.83,0.785) (2014.08,0.787) (2014.33,0.828) (2014.58,0.794) (2014.83,0.793) (2015.08,0.797) (2015.33,0.814) (2015.58,0.808) (2015.83,0.820) (2016.08,0.819) (2016.33,0.831) (2016.58,0.830) (2016.83,0.810) (2017.08,0.827) (2017.33,0.845) (2017.58,0.875) (2017.83,0.875) (2018.08,0.883) (2018.33,0.867) (2018.58,0.846) (2018.83,0.872) (2019.08,0.863) (2019.33,0.851) (2019.58,0.842) (2019.83,0.844) (2020.08,0.846) (2020.33,0.842)}; \addlegendentry{after selection}
\addplot[thick,red!70!black,dashed,domain=2010.3:2021.5] {0.336};
\nextgroupplot[width=7.6cm,ylabel={realized / predicted volatility},ymin=0.5,ymax=4.5]
\addplot[only marks,mark=*,mark size=1.1pt,black] coordinates {(2010.33,1.153) (2010.58,1.842) (2010.83,1.815) (2011.08,2.025) (2011.33,1.254) (2011.58,1.395) (2011.75,2.238) (2012.08,2.776) (2012.33,2.891) (2012.58,3.196) (2012.83,2.903) (2013.08,3.375) (2013.33,3.175) (2013.58,1.256) (2013.83,1.119) (2014.08,1.108) (2014.33,1.215) (2014.58,1.052) (2014.83,1.051) (2015.08,1.671) (2015.33,3.378) (2015.58,3.975) (2015.83,2.705) (2016.08,2.138) (2016.33,2.323) (2016.58,2.979) (2016.83,1.865) (2017.08,1.623) (2017.33,10.457) (2017.58,6.458) (2017.83,6.591) (2018.08,0.890) (2018.33,0.885) (2018.58,0.677) (2018.83,1.631) (2019.08,1.820) (2019.33,3.218) (2019.58,3.035) (2019.83,3.064) (2020.08,2.259) (2020.33,1.762)}; \addlegendentry{diagonal residual}
\addplot[only marks,mark=o,mark size=1.1pt,gray] coordinates {(2010.33,0.971) (2010.58,1.421) (2010.83,1.388) (2011.08,1.602) (2011.33,1.006) (2011.58,1.211) (2011.75,1.903) (2012.08,1.130) (2012.33,1.402) (2012.58,1.408) (2012.83,1.552) (2013.08,1.473) (2013.33,1.445) (2013.58,1.009) (2013.83,0.885) (2014.08,0.861) (2014.33,0.871) (2014.58,0.815) (2014.83,0.723) (2015.08,1.109) (2015.33,1.726) (2015.58,2.029) (2015.83,1.752) (2016.08,1.402) (2016.33,1.587) (2016.58,1.629) (2016.83,0.817) (2017.08,0.854) (2017.33,3.279) (2017.58,4.274) (2017.83,4.257) (2018.08,0.651) (2018.33,0.573) (2018.58,0.514) (2018.83,0.805) (2019.08,0.947) (2019.33,1.416) (2019.58,1.566) (2019.83,1.737) (2020.08,1.294) (2020.33,1.028)}; \addlegendentry{full residual}
\addplot[thick,red!70!black,dashed,domain=2010.3:2021.5] {1};
\end{groupplot}
\end{tikzpicture}
\caption{The primary panel window by window, 147 S\&P~500 names against the 127 macroeconomic and market series, one point per window at its start date. Left: maximal absolute residual correlation before conditioning and after selecting eight drivers, with the pair-calibrated tolerance dashed; the selection removes dependence in every window and reaches the tolerance in none. Right: the bias statistic of the held long-only minimum-variance portfolio, realized over predicted volatility, with the diagonal residual and with the full residual; the gap between the two series is the diagonal-residual restriction, and it is announced before the evaluation block by the consequence statistic of Section~\ref{sec:market-cost}.}
\label{fig:field}
\end{figure}

On the twelve-equity panel the same quantities are smaller and better identified, because eight drivers explain more of twelve names than of a hundred and forty-seven. The discarded term is $15.5\%$ of predicted variance, positive in $78$ of $81$ windows, and the bias statistic moves from $1.06$ to $1.14$. There the sign and the mechanism can be read directly through~\eqref{eq:misallocation}. The residual correlations the representation fails to remove are predominantly positive, with a portfolio-weighted average of $0.12$, so the discarded term is positive for any portfolio whose weighted exposures are mostly of one sign; the proposition permits either sign, and the sign observed is a property of equity residuals. The effect is not an artefact of the optimizer, since equal weights and random long-only portfolios give the same sign in every window, and its size tracks the weighted average residual correlation of the decomposition and not the screening defect; this separates the two questions the framework asks, whether the representation separates, the first answered by the maximum and the second by the quadratic form, and the second question has its own certifiable statistic, declared and certified in advance as the defect is.

The interventional condition also has an instance on these data, with calendar regimes as observational environments; they are not interventions, and what the instance measures is the price of a stability restriction. On the primary panel, with 2010--2014, 2015--2019 and 2020--2024 as three regimes, a driver is retained as invariant if the loadings of every name on it, estimated within each regime, lie within three pooled standard errors of the full-sample loading; stability of the loadings is necessary for invariance of the conditional law of returns given the drivers and not sufficient, since intercepts and the residual covariance would have to be invariant as well, so the retained set is an upper bound on the invariant one and the price reported is a lower bound. Only $4$ of the 127 drivers pass, none of them among the series the conditional-law selection uses; restricting the search to them raises the attained defect from $0.83$ to $0.90$ and the out-of-sample variance of the minimum-variance portfolio by $16\%$ at the median, in $81\%$ of windows. With environments defined by the policy instrument rather than the calendar, the Federal Reserve's target-rate regimes over the sample (zero lower bound to December 2015, tightening to December 2018, easing to March 2022, tightening to July 2023), only $2$ drivers pass, neither of them economically related to equity risk, and the price is $17\%$ of out-of-sample variance in $71\%$ of windows; the environments are the instrument's regimes, which is as close to an intervention as observational data allow, and the answer does not change. On the twelve-equity panel over four regimes spanning 2001--2023 the count is $9$ of 127 at three standard errors and the price $17\%$ of variance, and the count rises to 23 at five standard errors and 59 under sign stability alone, so what invariance costs depends on the criterion and the environments the contract declares. That is the price of an invariance restriction, which equals the causal admissibility price of Theorem~\ref{thm:distance} only under the identifying assumption that the declared environments stand for interventions on the drivers and that the stability of the loadings, a necessary condition for invariance of the conditional law, is also sufficient; under weaker assumptions it is a lower bound on what identification would cost. The representations that explain equity risk are not the ones whose relation to it is stable across regimes, and a decision maker who needs stability pays for it in dispersion.

\subsection{Allocation under the standard minimum-variance rule}\label{sec:market-portfolio}

The framework ends where an allocation algorithm begins. Its output is an information class that satisfies the declared conditions, the covariance that class implies, and the two statements a risk model does not otherwise make, which information it is conditional on and what that information fails to remove. Turning that output into realized performance is a separate design problem. An allocation rule built for a least-defect class can use what the certification establishes, the structure of the residual, the consequence statistic as a constraint or a penalty, the stability of the class across rebalances, in ways the standard minimum-variance program does not, and the choice of that rule is the algorithmic step that follows the framework rather than part of it. The paper does not take that step. It does report, in the remainder of this subsection, what the covariance of the selected class, admissible where the condition holds and least-defect where it does not, delivers when it is simply inserted into the standard rule in place of the sample covariance or a shrinkage estimator, as the reference point against which a purpose-built rule would be measured: on the primary panel it is indistinguishable from the best static or dynamic estimator at each estimation length (Table~\ref{tab:hh}), and on the twelve-equity panel it matches linear shrinkage on volatility at a third less turnover and lower leverage, and the perturbation bound of Proposition~\ref{prop:regret2} says why no more should be expected from that insertion alone. An error in the conditional moments moves the minimum-variance weights at first order but their realized loss only at second order, so two covariances that agree on the systematic part and differ in the residual produce allocations whose realized variance is nearly the same even where the weights differ, while they differ at first order in the risk that allocation is reported to carry. The information decision therefore shows, in the experiments that follow, where the theory places it, in the reported risk, in the stability of the allocation, and in what alternative selection rules deliver. The limit is the quadratic setting, in which the information enters only through two conditional moments; \cite{RD2025necessary} reached the same conclusion computationally, and converting a certified representation into realized performance requires formulations in which the drivers enter through the trajectory of the constituents' sensitivities to them, the sensitivity-based portfolios of \cite{RD2023} and the causal PDE--control model of \cite{RD2025cpcm}, a design step downstream of certification.

\label{sec:alloc}

The remainder of this subsection reports what the covariance of the selected class delivers when inserted, without any purpose-built allocation rule, into the standard minimum-variance program, against the estimators practitioners use and on the measures the literature reports; it is the reference point against which a purpose-built rule would be measured, not a claim of the framework. The portfolio consequences are reported under the protocol of the covariance-estimation literature \cite{DeMiguelGarlappiUppal2009,LedoitWolf2017}: out-of-sample volatility, Sharpe ratio, turnover per rebalance and gross leverage of the global minimum variance portfolio, the same under a no-short-sale constraint \cite{JagannathanMa2003}, and paired differences against the conditional estimator tested by a circular block bootstrap of the sequence of per-window differences, in blocks of four consecutive windows, which covers the dependence that overlapping windows induce since adjacent windows share three quarters of their sessions and evaluation blocks a session apart share none beyond four steps. Table~\ref{tab:ec} gives the equity panel. On volatility the conditional covariance and linear shrinkage are within two hundredths of a point at the median, $14.22\%$ against $14.20\%$, while the paired difference in realized variance, window by window, is lower for the conditional estimator in $69\%$ of windows and significant at $p=0.007$; the two statements are compatible because the medians summarize levels and the paired test summarizes differences across the same windows, which have a long tail in favour of the conditional estimator. In the same sense it is and it is lower than the sample covariance, the matched principal-component estimator, nonlinear shrinkage and equal weights, the last by small margins; the differences in realized variance are significant at the one per cent level and none of the differences in Sharpe ratio is significant at any conventional level. Where it separates from every alternative is stability: turnover of $0.35$ per rebalance against $0.44$ for linear shrinkage, $0.54$ for the sample covariance and $1.02$ for the principal-component estimator, and gross leverage of $1.20$ against $1.31$, $1.42$ and $1.76$. A representation of eight drivers selected once per rebalance moves the portfolio less than any estimator that re-reads the whole covariance from returns. At a one-way cost of $10$ basis points per unit of turnover and four rebalances a year, this is $14$ basis points a year for the conditional estimator against $18$ for linear shrinkage, $22$ for the sample covariance and $41$ for the principal-component estimator; the volatility advantage over the sample covariance ($0.33$ points) and the cost advantage are of the same order. Two results deserve their own reading. Equal weights attain the highest Sharpe ratio, the finding of \cite{DeMiguelGarlappiUppal2009}, but the comparison is between objectives rather than between estimators: minimum-variance portfolios do not use expected returns, and decomposing the difference shows that the advantage of equal weights is an annualised mean higher by $4.8$ percentage points, which is not significant at the five per cent level, against a volatility higher by $3.0$ points, which is significant at $10^{-14}$. On the objective the estimators optimize, the conditional covariance wins; on an objective none of them targets, weighting equally wins, and no covariance estimator can be held responsible for that. Under a no-short-sale constraint the differences narrow sharply, from a range of $1.4$ percentage points of volatility to $0.3$, which is the sense in which constraints substitute for covariance modelling \cite{JagannathanMa2003}, but they do not vanish and the ordering is preserved. What the constraint does not do is repair the misallocation: the discarded off-diagonal term is $15.5\%$ of predicted variance for the unconstrained portfolio and $14.6\%$ for the constrained one, statistically indistinguishable, because the constraint raises the breadth of the portfolio in~\eqref{eq:misallocation} at the same time as it lowers the residual share. A practitioner who imposes long-only positions and believes the specification error has been contained is mistaken: the constraint improves the allocation and leaves the reported risk as wrong as it was.

\begin{table}[!htb]\centering\scriptsize\setlength{\tabcolsep}{4pt}
\caption{Out-of-sample performance of the global minimum variance portfolio on the twelve-equity panel, 81 rolling windows, median across windows. Volatility annualised in per cent, turnover per rebalance, gross leverage $\|w\|_1$, and volatility under a no-short-sale constraint. The last two columns give circular block bootstrap $p$-values for the paired difference against the conditional estimator in realized variance and in Sharpe ratio.}
\label{tab:ec}
\begin{tabular}{@{}lrrrrrrr@{}}
\toprule
estimator & vol. & Sharpe & turnover & leverage & vol. long-only & $p$ var. & $p$ Sharpe\\
\midrule
conditional (least-defect class) & 14.22 & 0.52 & 0.35 & 1.20 & 14.08 & -- & -- \\
sample & 14.55 & 0.59 & 0.54 & 1.42 & 14.33 & 0.009 & 0.699 \\
PCA, matched dimension & 15.58 & 0.39 & 1.02 & 1.76 & 14.07 & 0.003 & 0.428 \\
Ledoit--Wolf linear & 14.20 & 0.49 & 0.44 & 1.31 & 14.01 & 0.007 & 0.105 \\
Ledoit--Wolf nonlinear & 14.47 & 0.53 & 0.47 & 1.34 & 14.10 & 0.007 & 0.912 \\
equal weights & 16.55 & 0.73 & 0.00 & 1.00 & 16.55 & 0.004 & 0.321 \\
\bottomrule
\end{tabular}
\end{table}

The evaluation so far compares estimators window by window. A portfolio manager measures a strategy over time, so the same pipeline is run as one: at each quarter end from November 2003 to March 2023, eighty rebalances, the trailing 756 sessions are split into a certification half, on which the admissible class is selected, and an estimation half, on which the covariance is estimated; the global-minimum-variance portfolio under each estimator is held for the following quarter, and the realized returns are chained net of a one-way cost of ten basis points per unit of turnover. Table~\ref{tab:strategy} reports the twenty-year path, long-only and unconstrained, with the fraction of rolling three- and five-year sub-periods in which the conditional strategy has lower realized volatility than each alternative,. Long-only, the conditional strategy has the lowest volatility ($17.27\%$ against $17.39$ to $17.50\%$ for the optimized alternatives), a return and Sharpe ratio within a tenth of a point of the best of them, the lowest turnover and cost drag, and lower volatility than the sample covariance in $78\%$ of three-year and $89\%$ of five-year sub-periods; unconstrained, its volatility is again the lowest while linear shrinkage earns a higher return. The differences are small, a few tenths of a point of volatility, and the paired protocol of Table~\ref{tab:ec} is the test that has power to establish them; the path shows that they persist over two decades rather than that they are large. Equal weights earn the highest return and Sharpe ratio on this panel with by far the highest volatility and a drawdown of two thirds, as in Table~\ref{tab:ec}. One number requires care. The ratio of realized to predicted volatility along the path is $0.98$ for the conditional strategy and $0.87$ to $0.93$ for the others, which reads as if the conditional estimator were the best calibrated; it is the net of two biases of opposite sign, the understatement from the diagonal residual established in Section~\ref{sec:market-cost} and the overstatement that a trailing window carrying a past high-volatility regime produces for every estimator, and the near-cancellation is a property of this sample period, not a merit of the restriction. On the shorter panel of S\&P~500 names, ten years and thirty-nine rebalances, two random draws of twelve names give one path on which the conditional strategy is the best of the optimized alternatives on volatility, drawdown and turnover and one on which it is the worst on volatility while remaining the lowest in turnover; with twelve names and a decade the dispersion across draws exceeds the differences between estimators, which is the reason the paper's claims rest on the paired tests rather than on any single path.

\begin{table}[!htb]\centering\scriptsize\setlength{\tabcolsep}{3.5pt}
\caption{Quarterly-rebalanced global-minimum-variance strategies on the twelve-equity panel, eighty rebalances, November 2003 to March 2023, net of ten basis points per unit of turnover. Annualized return and volatility in per cent, maximum drawdown in per cent, turnover per rebalance, cost drag in basis points a year, median ratio of realized to predicted volatility across quarters, and the fraction of rolling three- and five-year sub-periods in which the conditional strategy has lower realized volatility than the row's strategy.}
\label{tab:strategy}
\begin{tabular}{@{}lrrrrrrrrr@{}}
\toprule
strategy & return & vol. & Sharpe & max DD & turnover & cost & real./pred. & 3y & 5y\\
\midrule
\multicolumn{10}{@{}l}{\emph{long-only}}\\
conditional (least-defect class) & 5.24 & 17.27 & 0.38 & -50.2 & 0.18 & 7 & 0.98 & -- & -- \\
sample & 5.29 & 17.39 & 0.38 & -48.8 & 0.21 & 9 & 0.92 & 78\% & 89\% \\
Ledoit--Wolf linear & 5.39 & 17.42 & 0.39 & -50.3 & 0.20 & 8 & 0.91 & 72\% & 84\% \\
Ledoit--Wolf nonlinear & 5.17 & 17.39 & 0.38 & -47.9 & 0.22 & 9 & 0.87 & 78\% & 92\% \\
PCA, matched dimension & 4.74 & 17.50 & 0.35 & -49.0 & 0.25 & 10 & 0.93 & 62\% & 56\% \\
equal weights & 7.51 & 22.42 & 0.44 & -67.3 & 0.00 & 0 & -- & 100\% & 100\% \\
\midrule
\multicolumn{10}{@{}l}{\emph{unconstrained}}\\
conditional (least-defect class) & 4.83 & 17.16 & 0.36 & -49.5 & 0.24 & 9 & 0.98 & -- & -- \\
sample & 5.47 & 17.43 & 0.39 & -43.2 & 0.34 & 13 & 0.98 & 84\% & 100\% \\
Ledoit--Wolf linear & 5.68 & 17.36 & 0.41 & -45.8 & 0.30 & 12 & 0.95 & 78\% & 92\% \\
Ledoit--Wolf nonlinear & 5.43 & 17.51 & 0.39 & -43.8 & 0.37 & 15 & 0.93 & 91\% & 100\% \\
PCA, matched dimension & 3.24 & 18.45 & 0.27 & -48.7 & 0.68 & 27 & 1.05 & 100\% & 100\% \\
equal weights & 7.51 & 22.42 & 0.44 & -67.3 & 0.00 & 0 & -- & 100\% & 100\% \\
\bottomrule
\end{tabular}
\end{table}

\subsection{Comparison of information selection rules}\label{sec:sidebyside}

The comparisons above are organized by estimator. This subsection puts the three ways of choosing a portfolio's information next to one another on the same windows, so that what the framework adds can be seen rather than inferred. Mean--variance practice takes the information as given and estimates the covariance from returns: the sample covariance and its shrinkage estimators. Loss-based representation selection chooses the information by the decision loss it produces: here, on the certification block, the eight drivers whose conditional covariance gives the minimum-variance portfolio with the smallest in-sample variance, the analogue on these data of the rule the look-ahead corollary concerns. No future variable is in the candidate set and every series is dated at its availability, so this is the overfitting form of the mechanism of Corollary~\ref{prop:lookahead} rather than its literal form: the rule is free to choose, among correctly dated series, the ones whose fitted covariance makes the in-sample portfolio look least risky, and that is what it does. The procedure is the same forward search as the admissibility selection, with the in-sample variance of the long-only minimum-variance portfolio replacing the screening defect as the criterion at each step, the same block, the same dimension limit and the same Stage~II; the two rules therefore differ only in what they rank by. This is a deliberately unregularized instance of selection by decision loss; the tuned implementations of decision-focused learning in the literature add validation, regularization or nested evaluation, and the comparison is with the principle those implementations share, not with their best engineering. The admissible problem chooses the information by the admissibility order and only then estimates. Figure~\ref{fig:sidebyside} plots, window by window on the primary panel, the volatility each method promised for its long-only minimum-variance portfolio, the model's own prediction $\sqrt{w^{\Tr}\widehat\Sigma w}$ computed on the estimation block, against the volatility that portfolio realized on the evaluation block; a framework whose promise is kept sits on the diagonal.

Loss-based representation selection is the clearest result in the paper on market data, and it is a failure of the kind the theory predicts. On the primary panel the drivers it selects share none with the least-defect selection and leave the residual dependence where it was before any conditioning; the portfolio it promises has volatility $2.6\%$ and the portfolio it delivers $12.0\%$, a ratio of $4.3$; on the cross-asset universe $0.6\%$ against $1.4\%$; on the twelve-equity panel $7.6\%$ against $14.4\%$, within ten per cent of its promise in $4\%$ of windows. Choosing information by the loss it minimizes on the data at hand selects the information that fits those data, not the information that explains the assets' common variation, and the promise it makes is the one most likely to be broken; the look-ahead corollary is the limiting case of this mechanism, with future information in place of overfitted information, and the certification block cannot protect a loss-based rule from it because the rule is what the block is overfitted to. The least-defect selection, run on the same block with no portfolio quantity in its criterion, promises on the primary panel a volatility whose ratio to the realized one is $2.02$, against $1.66$ for shrinkage and $1.53$ for the sample covariance, and it is the only one of the three whose shortfall was announced in advance by the consequence statistic.

\begin{table}[!htb]\centering\scriptsize\setlength{\tabcolsep}{3pt}
\caption{Head-to-head on the primary panel, 147 S\&P~500 names, 180 candidate series, long-only minimum-variance portfolios, medians over windows. Realized volatility in per cent, ratio of realized to promised volatility, fraction of windows with lower realized volatility than the best competitor, and circular block bootstrap $p$-value for the paired difference in realized variance against it. Best competitor: Ledoit--Wolf linear shrinkage at 189 estimation sessions, DCC-NL at 378. $k$ is the number of certified drivers in each window.}
\label{tab:hh}
\resizebox{\textwidth}{!}{\begin{tabular}{@{}lrrrrrrrr@{}}
\toprule
 & \multicolumn{4}{c}{estimation block of 189 sessions (41 windows)} & \multicolumn{4}{c}{estimation block of 378 sessions (38 windows)}\\
\cmidrule(lr){2-5}\cmidrule(lr){6-9}
covariance & vol. & real./prom. & beats best & $p$ & vol. & real./prom. & beats best & $p$\\
\midrule
sample covariance & 12.30 & 1.56 & 37\% & 0.178 & 12.26 & 1.26 & 29\% & 0.584 \\
Ledoit--Wolf linear shrinkage & 11.57 & 1.66 & 0\% & -- & 11.94 & 1.51 & 29\% & 0.003 \\
nonlinear shrinkage (analytical) & 12.10 & 1.55 & 44\% & 0.120 & 12.02 & 1.25 & 34\% & 0.217 \\
DCC-NL & 11.89 & 1.55 & 63\% & 0.669 & 11.61 & 1.28 & 0\% & -- \\
AFM-DCC-NL, market factor & 11.91 & 1.56 & 54\% & 0.419 & 11.76 & 1.28 & 34\% & 0.134 \\
$k$ principal components, diagonal residual & 12.30 & 1.62 & 41\% & 0.037 & 12.33 & 1.29 & 21\% & 0.006 \\
$k$ principal components, thresholded residual & 11.92 & 1.58 & 49\% & 0.113 & 12.34 & 1.28 & 37\% & 0.205 \\
least-defect drivers, diagonal residual & 12.48 & 1.76 & 49\% & 0.755 & 11.86 & 1.34 & 24\% & 0.732 \\
least-defect drivers, thresholded residual & 12.35 & 1.61 & 56\% & 0.765 & 11.59 & 1.31 & 39\% & 0.449 \\
least-defect drivers, robust covariance at $\rho=\bar c$ (Prop.~\ref{prop:robust}) & 12.41 & 1.58 & 49\% & 0.612 & 12.12 & 1.25 & 26\% & 0.709 \\
least-defect drivers, DCC-NL residual & 11.96 & 1.59 & 56\% & 0.606 & 11.74 & 1.27 & 37\% & 0.696 \\
\bottomrule
\end{tabular}}
\end{table}

Against mean--variance practice the comparison is reported in Table~\ref{tab:hh}, with the dynamic estimators of the literature included, DCC-NL \cite{EngleLedoitWolf2019} and the approximate factor model with DCC-NL residual \cite{DeNardLedoitWolf2021}, the same converged solver for every estimator, and the same residual treatment for principal components as for certified drivers. Three statements survive the paired tests at both estimation lengths. The least-defect class beats principal components of the same dimension, which are significantly worse than the best competitor ($p=0.037$ and $0.006$) while the least-defect estimators are not, so the choice of information matters relative to statistical factors at matched complexity. No certified estimator beats the best competitor: with 189 estimation sessions the best is linear shrinkage and the least-defect estimators sit within half a point of it; with 378 the best is DCC-NL, as the literature on large dynamic covariance matrices predicts, linear shrinkage is then significantly worse ($p=0.003$), and the least-defect class with thresholded residual ties it, $11.59$ against $11.61$ per cent, $p=0.45$. And the information decision shows where the theory says, in the promise rather than the delivery: the least-defect diagonal model has the worst ratio of realized to promised volatility in the table at the shorter estimation length and the consequence statistic had announced it, while its realized volatility is indistinguishable from the others. The paper's claim is therefore stated at its measured size: certification changes what a risk model can say about itself and does not, inserted into the standard minimum-variance rule, change what the portfolio realizes; the allocation rule that would exploit a least-defect class, and the dynamic covariance such a rule would use, are the subject of a separate study.

\begin{figure}[!htb]
\centering
\begin{tikzpicture}
\begin{groupplot}[group style={group size=2 by 1,horizontal sep=1.8cm},height=6.4cm,tick label style={font=\footnotesize},label style={font=\small},legend style={font=\footnotesize}]
\nextgroupplot[width=7.8cm,xlabel={promised volatility (\%)},ylabel={realized volatility (\%)},xmin=0,xmax=20,ymin=0,ymax=30,,legend cell align=left]
\addplot[only marks,mark=*,mark size=1.3pt,black] coordinates {(7.74,8.77) (5.97,11.04) (6.18,11.21) (5.94,12.02) (7.46,9.36) (7.91,11.06) (7.35,16.47) (3.68,10.22) (3.81,11.03) (3.45,11.04) (4.73,13.71) (4.32,14.58) (4.51,14.35) (9.06,11.38) (9.62,10.75) (9.92,11.03) (7.90,9.62) (6.72,7.08) (5.85,6.14) (4.70,7.86) (3.29,11.11) (2.87,11.40) (4.44,12.02) (6.26,13.38) (6.56,15.25) (5.86,17.47) (5.37,10.03) (6.10,9.91) (3.58,37.45) (6.14,39.67) (5.75,37.89) (17.46,15.44) (16.51,14.53) (19.04,12.80) (6.36,10.37) (6.16,11.20) (4.30,13.85) (5.18,15.73) (5.85,17.92) (7.48,16.90) (8.68,15.36)}; \addlegendentry{least-defect class}
\addplot[only marks,mark=o,mark size=1.3pt,gray] coordinates {(8.92,9.82) (7.28,10.49) (7.69,11.15) (7.14,11.85) (8.96,9.52) (8.84,16.34) (8.34,21.19) (7.75,16.77) (6.90,10.65) (6.93,10.60) (7.89,13.42) (8.81,14.75) (8.55,15.30) (10.71,12.24) (11.33,11.00) (11.87,9.97) (9.93,8.58) (8.19,7.60) (7.49,6.99) (6.28,8.76) (5.31,11.20) (4.60,10.79) (5.83,11.62) (8.16,14.09) (8.46,12.91) (8.63,18.25) (8.90,10.09) (9.14,9.66) (9.08,33.18) (8.01,41.31) (7.72,37.83) (20.94,16.89) (21.86,14.55) (23.21,12.29) (11.40,9.46) (9.75,10.27) (7.94,13.91) (7.98,24.12) (8.59,25.48) (11.19,17.98) (12.97,15.05)}; \addlegendentry{sample covariance}
\addplot[only marks,mark=triangle*,mark size=1.6pt,red!70!black] coordinates {(3.63,8.86) (2.78,11.48) (1.85,11.29) (2.77,9.94) (1.91,11.39) (3.57,13.42) (2.00,16.71) (5.09,10.96) (3.24,8.97) (2.90,7.97) (1.59,11.99) (1.81,12.18) (2.45,14.12) (2.56,10.96) (2.59,37.60) (1.84,41.41) (8.09,12.48) (3.57,10.18) (2.29,15.39) (2.03,21.89) (4.33,17.52)}; \addlegendentry{loss-based selection}
\addplot[thick,black!50,domain=0:20] {x};
\nextgroupplot[width=7.8cm,ybar,bar width=5pt,xtick={0,1,2,3},xticklabels={loss-sel.,admiss.,sample,LW lin.},x tick label style={rotate=30,anchor=east,font=\scriptsize},ylabel={median volatility (\%)},ymin=0,ymax=16,]
\addplot[fill=gray!35,draw=gray!60] coordinates {(0,2.59) (1,5.63) (2,8.00) (3,6.98)}; \addlegendentry{promised}
\addplot[fill=black,draw=black] coordinates {(0,11.99) (1,11.40) (2,12.24) (3,11.57)}; \addlegendentry{realized}
\end{groupplot}
\end{tikzpicture}
\caption{Promise and delivery on the primary panel, 147 S\&P~500 names, long-only minimum-variance portfolios, one point per window. Left: volatility promised by each method against the volatility realized on the evaluation block; the diagonal is a kept promise. Loss-based representation selection, which chooses drivers by the in-sample variance they produce, promises a small fraction of what it delivers; the admissible problem and the sample covariance sit above the diagonal, the admissible problem by more, because the diagonal-residual restriction discards a term that is large at this scale and that the consequence statistic reports in advance. Right: medians of promised and realized volatility by method.}
\label{fig:sidebyside}
\end{figure}

The experiment separates two statements that portfolio practice usually merges. A representation can be good enough to improve an allocation and not good enough to certify the diagonal-residual restriction its covariance imposes, and the framework distinguishes the two before any portfolio is formed: the diagnostic reports the failure, the portfolio is still built if the mandate requires it, and the risk it reports is known in advance to be understated. Closing the remaining dependence requires either a larger candidate universe than the 127 series declared here or an asset universe with fewer common components than twelve sector representatives or seventeen strategy indices, and that is the empirical question the framework poses.

\FloatBarrier
\section{Concluding remarks}\label{sec:discussion}

Portfolio optimization is usually posed after the conditioning information has been chosen, and two errors about that information, using what was not available at the decision time and treating common variation across assets as idiosyncratic, are then carried into the portfolio without detection. This paper makes the information a decision variable under hard admissibility constraints declared in advance, selects an information class by an order in which no risk--return quantity enters, and solves the classical problem inside it. Classical portfolio theory is preserved as the inner problem; the consequences live in the outer one.

The contributions are of three kinds. The first is the construction itself: a decision contract that fixes the claim and the conditions before any data are examined, an admissible class whose membership is certified rather than assumed, and a two-stage problem in which the information is chosen first and the portfolio second. The second is a set of results about that construction. The value of information is monotone only under an embedding that portfolio constraints can break, and a rule that selects information by the loss of the decision it produces therefore prefers look-ahead information whenever the candidate set contains it, which is why the chronological condition must precede the loss rather than be inferred from it. The two-stage solution and the joint envelope are the minimal elements of two orders on one set of admissible pairs, and the two coincide only when the representation is not allowed to change with the target. The diversifiable part of risk is a property of the admissible class, with an exact account of what the diagonal-residual restriction costs a given portfolio, and the restriction can only be met below a basket size set by how much common residual variation the representation leaves unexplained. Requiring causal identification has an explicit price, traded against the size of the search, and the mechanism a representation describes is invariant under interventions only under interventional admissibility, while the systematic risk it carries need not be. The certification produces a bound on every residual correlation, and that bound determines a robust covariance that repairs the risk a model reports without any parameter to tune. The third contribution is the finite-sample theory: both stages are consistent, the regret of the selected portfolio is second order in the estimation error while the weights themselves move at first order, and reselection across decision nodes can be governed by a hysteresis rule that reacts to certification statistics and never to performance.

What the evidence establishes is equally definite. On a controlled system the admissibility order recovers the exact separators, excludes every candidate that violates chronology although those candidates have the lowest in-sample loss, and converges at the rates the theory gives. On market data the separator condition can be met on small sets of individual securities and never on pre-diversified portfolios, an ordering the diversification result predicts and that enlarging the candidate search within the family examined does not overturn; on a book of a practitioner's size the condition is not met either, and there the maximum over pairs stops discriminating between representations while the portfolio-level consequence statistic becomes the operative diagnostic. Where the condition fails, the model built under the diagonal restriction understates the variance of its own portfolio by an amount that depends on the candidate universe, severely when the candidates are macroeconomic series alone and modestly once industry and style candidates are admitted, and the understatement is announced from the certification block before any portfolio is held and confirmed by the realized risk afterwards. Inserted into the standard minimum-variance rule, the covariance of the selected class realizes the volatility of the best static or dynamic shrinkage estimator and is never significantly better, while it dominates statistical factors of the same dimension; it trades less than the sample covariance at every scale. Selecting the same information by the loss it produces, with every series correctly dated, yields portfolios that promise a small fraction of the risk they realize, which is the look-ahead result observed in its overfitting form. Demanding loadings that are stable across regimes, whether defined by the calendar or by the policy instrument, leaves almost no admissible driver and raises out-of-sample variance by a measurable amount.

For practice the framework offers four things. A specification test for an existing factor risk model, which model-validation functions can run on the models they already have, using the screening defect with a size-controlled tolerance and the consequence statistic at the weights the mandate holds. A map of where factor models with diagonal residual should not be expected to hold, which includes any portfolio of pre-diversified assets. A rule for changing the information a book is conditioned on that uses certification statistics and not performance, so that the selection cannot learn from the profit and loss it produces and changes rarely. And a price for a claim that mandates often make implicitly, that the risk model should survive the next regime.

The limits follow from the setting as much as from the evidence. The framework does not lower realized portfolio risk, and the reason is that inside the quadratic mean--variance problem the information enters only through two conditional moments, so that an error in them moves the realized loss at second order however much it moves the reported risk; converting certified information into realized performance requires formulations in which the drivers enter through the trajectory of the constituents' sensitivities to them, which is a design step downstream of certification and is not taken here. The representations the framework selects are contemporaneous decompositions of risk and not signals, since nothing survives a one-session lag. Among the admissibility conditions only the mandate scope is a declaration; the others are certified from provenance, from the enlarged filtration or from returns. The interventional branch prices a stability restriction over observational environments, which coincides with the price of causal identification only under the identifying assumptions stated, and has no application until a mandate makes a claim about interventions that a conditional-law model cannot support. The controlled experiments and the geometry of the capital allocation surface are Gaussian and affine, and how the separator condition degrades under nonlinear response is not treated. Portfolio selection with intertemporal costs inside the selected classes is located by the separability result as the dynamic part of the problem and is the natural next step.

The question the paper poses is whether optimizing over a certified information--portfolio domain produces decisions more appropriate to the declared claim than optimizing weights after an arbitrary representation has been fixed. The controlled experiments answer affirmatively, and the market experiments show the certification reporting, before any portfolio is formed, a failure whose consequence for reported risk is then observed.

\section*{Declarations}
\noindent\textbf{Funding.} No specific funding was received for this research.

\noindent\textbf{Competing interests.} The author is employed by Miralta Finance Bank S.A., which provided institutional access to the licensed market data used in Section~\ref{sec:market}. The views expressed are the author's and not necessarily those of the institution.

\noindent\textbf{Data and code.} The controlled experiments use no external data and are reproduced by the accompanying code with fixed seeds. The equity, driver, individual-name and hedge-fund panels are licensed third-party series and are not redistributed; their construction is specified in Appendix~\ref{app:config}. The Fama--French portfolio files are public.


\appendix
\section{Results used, omitted statements and figures}\label{app:used}

\subsection{The conditional mean--variance frontier}

Given $\mu_H\in\R^{n}$ and $\Sigma_H\succ0$, the mean--variance problem with full investment is $\min\{w^{\Tr}\Sigma_Hw:\ \ind^{\Tr}w=1,\ \mu_H^{\Tr}w=m\}$ \cite{Markowitz1952}. Its solution is classical \cite{Merton1972}.

\begin{proposition}[Conditional frontier; \cite{Merton1972}]\label{prop:merton}
Let $a=\ind^{\Tr}\Sigma_H^{-1}\ind$, $b=\ind^{\Tr}\Sigma_H^{-1}\mu_H$, $c=\mu_H^{\Tr}\Sigma_H^{-1}\mu_H$ and $d=ac-b^{2}>0$. For every target $m$ the minimum-variance portfolio with $\ind^{\Tr}w=1$ and $\mu_H^{\Tr}w=m$ is
\begin{equation}\label{eq:merton-w}
\begin{aligned}
w_H(m)&=\frac{c-bm}{d}\,\Sigma_H^{-1}\ind+\frac{am-b}{d}\,\Sigma_H^{-1}\mu_H,\\
\rho^{\star}_H(m)&=w_H(m)^{\Tr}\Sigma_Hw_H(m)=\frac{am^{2}-2bm+c}{d}.
\end{aligned}
\end{equation}
The minimum-variance portfolio is $w_H(b/a)=\Sigma_H^{-1}\ind/a$, with variance $1/a$; the frontier $m\mapsto\rho^{\star}_H(m)$ is a convex parabola, and its efficient branch is $m\ge b/a$. The utility form $\min\{\tfrac{\gamma}{2}w^{\Tr}\Sigma_Hw-\mu_H^{\Tr}w:\ind^{\Tr}w=1\}$ has the unique minimizer $w_H(m(\gamma))$ with $m(\gamma)=b/a+d/(a\gamma)$ (the budget multiplier is $(b-\gamma)/a$ and the mean of the minimizer is $(c-b(b-\gamma)/a)/\gamma$), so every $\gamma>0$ selects one point of the efficient branch and every efficient target $m>b/a$ is selected by $\gamma=d/(a(m-b/a))$; the two parametrizations describe the same frontier. If a riskless rate $r_f<b/a$ is available, the tangency portfolio is $w_H^{\mathrm{tan}}=\Sigma_H^{-1}(\mu_H-r_f\ind)/(b-ar_f)$ and the capital allocation line is $\mu=r_f+\sqrt{c-2br_f+ar_f^{2}}\;\sigma$.
\end{proposition}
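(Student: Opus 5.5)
The proof is Lagrangian throughout, carried out in the order of the statement. For the target form, minimize $w^{\Tr}\Sigma_Hw$ subject to $\ind^{\Tr}w=1$ and $\mu_H^{\Tr}w=m$. Since $\Sigma_H\succ0$ the objective is strictly convex and the constraints are affine, so the KKT conditions are necessary and sufficient and the minimizer is unique. Stationarity gives
\[
w=\lambda_1\Sigma_H^{-1}\ind+\lambda_2\Sigma_H^{-1}\mu_H.
\]
Substituting into the two constraints yields the linear system
\[
a\lambda_1+b\lambda_2=1,\qquad b\lambda_1+c\lambda_2=m.
\]
Its determinant is $d=ac-b^2$. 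Positivity of $d$ I will not merely assume: it follows from Cauchy--Schwarz in the inner product $\langle x,y\rangle=x^{\Tr}\Sigma_H^{-1}y$, with strict inequality because $\mu_H$ is not proportional to $\ind$. Cramer's rule then gives $\lambda_1=(c-bm)/d$ and $\lambda_2=(am-b)/d$, which is~\eqref{eq:merton-w}.

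For the variance, $w^{\Tr}\Sigma_Hw=\lambda_1+\lambda_2 m$ by the constraints, which equals $(am^2-2bm+c)/d$. This is a convex parabola, since $a>0$. Its vertex is at $m=b/a$, where $\lambda_2=0$ and $\lambda_1=1/a$. This gives the minimum-variance portfolio $\Sigma_H^{-1}\ind/a$ with variance $1/a$, and the efficient branch is $m\ge b/a$.

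For the utility form, stationarity of $\tfrac{\gamma}{2}w^{\Tr}\Sigma_Hw-\mu_H^{\Tr}w-\eta(\ind^{\Tr}w-1)$ gives
\[
w=\gamma^{-1}\Sigma_H^{-1}(\mu_H+\eta\ind).
\]
The budget constraint fixes the multiplier; with the sign convention of the statement it is $(b-\gamma)/a$. The mean of this portfolio is $(c+\eta b)/\gamma$, which simplifies to $b/a+d/(a\gamma)$. The minimizer is feasible for the target problem at $m(\gamma)$ and has the same KKT form. By uniqueness of the target-form solution it therefore equals $w_H(m(\gamma))$. The map $\gamma\mapsto m(\gamma)$ is a bijection from $(0,\infty)$ onto $(b/a,\infty)$, with inverse $\gamma=d/(a(m-b/a))$. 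So the two parametrizations trace the same efficient branch.

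For the tangency part, with $r_f<b/a$, I will maximize the Sharpe ratio
\[
\frac{(\mu_H-r_f\ind)^{\Tr}w}{\sqrt{w^{\Tr}\Sigma_Hw}}
\]
over $\ind^{\Tr}w=1$. The ratio is scale-invariant, so I first maximize over the unnormalized direction. Cauchy--Schwarz in the $\Sigma_H$ inner product gives the direction $\Sigma_H^{-1}(\mu_H-r_f\ind)$ and the maximal ratio $\sqrt{(\mu_H-r_f\ind)^{\Tr}\Sigma_H^{-1}(\mu_H-r_f\ind)}=\sqrt{c-2br_f+ar_f^2}$. Normalizing to full investment divides by $b-ar_f$. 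This denominator is positive precisely because $r_f<b/a$, which is what makes the normalized portfolio have positive excess return and lie on the upper branch. The capital allocation line is then the set of combinations of $r_f$ with this portfolio, with slope equal to the maximal Sharpe ratio.

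The one delicate step is this sign condition $b-ar_f>0$: it guarantees that the normalized tangency portfolio lies on the efficient branch rather than the inefficient one, and I would state it explicitly. Everything else is routine algebra.
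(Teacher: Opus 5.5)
Your argument is correct. It is the classical Lagrangian derivation of Merton (1972), which the paper cites rather than reproduces; the statement's mention of the budget multiplier already points to this route, and your handling of the sign condition $b-ar_f>0$ in the tangency normalization is right. One sign needs tidying in the utility form: under your Lagrangian $-\eta(\ind^{\Tr}w-1)$ the budget constraint gives $\eta=(\gamma-b)/a$, the negative of the statement's value, and it is this $\eta$ that must go into $(c+\eta b)/\gamma$ to obtain $b/a+d/(a\gamma)$.
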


The proposition is invoked in Section~\ref{sec:two-stage} with $H$ the admissibility-optimal class; nothing in it depends on how $H$ was chosen. The sensitivity of $w_H(m)$ to errors in $\mu_H$ is the classical finding of \cite{Michaud1989,ChopraZiemba1993} and reappears in the finite-sample experiments. Figure~\ref{fig:mpt-apo} places the classical objects of this proposition next to their counterparts in the admissible problem, and Table~\ref{tab:homology} in Appendix~\ref{app:used} lists the correspondence.

\subsection{Causal separation, response and residual}

The statistical condition on which the admissibility theory rests is that a set of drivers renders asset returns conditionally independent over the horizon. Its consequences for conditional covariances are proved here.

\begin{definition}[Separator]\label{def:separator}
Let $Z$ be a vector of $k$ drivers with path $Z_{[t,t+h]}$ over the horizon, let $H^{+}:=\sigma(\calG^{0}_t,Z_{[t,t+h]})$ be the horizon-closed information it generates and $H^{-}:=\sigma(\calG^{0}_t,Z_t)$ its decision-time component. The drivers are a \emph{separator} for $A$ if the conditional covariance $\Cov(R\mid H^{+})$ is diagonal almost surely, that is, if the components of $R$ are pairwise conditionally uncorrelated given $H^{+}$; they are a \emph{structural separator} if the components of $R$ are mutually independent conditional on $H^{+}$, which implies the former and is the notion a causal model delivers (Proposition~\ref{prop:commonality}). A separator is \emph{minimal} if no separator generates a strictly smaller $\sigma$-algebra; minimality is with respect to information inclusion and does not order separators that are incomparable under inclusion. The mean--variance theory of this paper uses only the second-moment notion; conditional independence is reserved for the structural and interventional statements that need it.
\end{definition}

\begin{proposition}[Tower decomposition]\label{prop:tower}
For any drivers $Z$, with $\xi:=\E[R\mid H^{+}]-\E[R\mid H^{-}]$ the horizon innovation of the conditional mean,
\begin{equation}\label{eq:tower}
\Cov(R\mid H^{-})=\E\big[\Cov(R\mid H^{+})\,\big|\,H^{-}\big]+\Cov(\xi\mid H^{-}) .
\end{equation}
If $Z$ is a separator, the first term, denoted $\Omega$, is diagonal. If moreover $\E[R\mid H^{+}]$ is affine in a $k$-dimensional statistic $\zeta$ of the driver path, $\E[R\mid H^{+}]=\alpha+B\zeta$, then $\Cov(\xi\mid H^{-})=BVB^{\Tr}$ with $V=\Cov(\zeta\mid H^{-})$; $B$ is the \emph{response operator} and $V$ the innovation covariance. The first term is idiosyncratic and the second systematic.
\end{proposition}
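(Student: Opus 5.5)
The plan is to derive the decomposition from the conditional law of total covariance applied to the nested $\sigma$-algebras $\sigma(H^{-})\subseteq\sigma(H^{+})$, and then to specialise each term under the two extra hypotheses. The nesting holds because $Z_t$ is a coordinate of the path $Z_{[t,t+h]}$, so $H^{-}\subseteq H^{+}$. Since $R\in L^{2}$, every conditional moment below is well defined modulo null sets. Write $m^{+}:=\E[R\mid H^{+}]$ and $m^{-}:=\E[R\mid H^{-}]$. By the tower property, $m^{-}=\E[m^{+}\mid H^{-}]$, so $\xi=m^{+}-m^{-}$ is square-integrable with $\E[\xi\mid H^{-}]=0$.

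The first step is to expand
\[
\Cov(R\mid H^{-})=\E[(R-m^{-})(R-m^{-})^{\Tr}\mid H^{-}]
\]
after writing $R-m^{-}=(R-m^{+})+\xi$. This produces four terms.

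\begin{itemize}
\item The first is $\E[(R-m^{+})(R-m^{+})^{\Tr}\mid H^{-}]$. By the tower property it equals $\E[\Cov(R\mid H^{+})\mid H^{-}]$.
\item The last is $\E[\xi\xi^{\Tr}\mid H^{-}]$. Because $\xi$ has zero conditional mean given $H^{-}$, this is $\Cov(\xi\mid H^{-})$.
\item The two cross terms vanish. Condition first on $H^{+}$: since $\xi$ is $H^{+}$-measurable, $\E[(R-m^{+})\xi^{\Tr}\mid H^{+}]=\E[R-m^{+}\mid H^{+}]\,\xi^{\Tr}=0$. Conditioning on $H^{-}$ then gives zero.
\end{itemize}

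The only technical care needed is integrability of the products. It follows from Cauchy--Schwarz because $R-m^{+}$ and $\xi$ are both in $L^{2}$, which is the step I would state explicitly rather than gloss over.

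Second, the separator clause. By Definition~\ref{def:separator}, $\Cov(R\mid H^{+})$ is diagonal almost surely. Conditional expectation acts entrywise, and an off-diagonal entry that is zero almost surely has conditional expectation zero. Hence $\Omega=\E[\Cov(R\mid H^{+})\mid H^{-}]$ is diagonal almost surely.

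Third, the affine clause. If $m^{+}=\alpha+B\zeta$ with $\alpha$ and $B$ deterministic, then $m^{-}=\alpha+B\,\E[\zeta\mid H^{-}]$, so $\xi=B(\zeta-\E[\zeta\mid H^{-}])$. Therefore $\Cov(\xi\mid H^{-})=B\,\Cov(\zeta\mid H^{-})\,B^{\Tr}=BVB^{\Tr}$. This needs $\zeta\in L^{2}$, which I would add as an implicit assumption. It can be justified when $B$ has full column rank, since then $\zeta$ is recovered as an affine function of the square-integrable $m^{+}$.

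If $\alpha$ and $B$ were allowed to be $H^{-}$-measurable, the same computation goes through by pulling them out of the conditional expectation. I would note this as a remark.

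The main obstacle is not computational. It is interpretive: the final sentence, that the first term is idiosyncratic and the second systematic. I would read it as a labelling justified by two facts. Under separation, $\Omega$ is diagonal, so it carries no cross-asset dependence. The term $BVB^{\Tr}$ has rank at most $k$ and loads on the common innovation $\zeta$. I would state this as a consequence of the two previous steps rather than as a separate claim requiring proof.
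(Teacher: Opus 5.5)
Your proof is correct and follows the paper's argument: the law of total covariance for the nested $\sigma$-algebras $H^{-}\subseteq H^{+}$, then the entrywise conditional expectation of the almost surely diagonal matrix $\Cov(R\mid H^{+})$, then the tower property applied to $\alpha+B\zeta$. The only differences are that you derive the total-covariance identity by expanding $R-m^{-}=(R-m^{+})+\xi$ and showing the cross terms vanish, where the paper cites the identity and subtracts the $H^{-}$-measurable $\E[R\mid H^{-}]$ inside the second covariance, and that you state explicitly the integrability of $\zeta$ that the paper leaves implicit.
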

\noindent The proof is in Appendix~\ref{app:proofs}.

Every object in~\eqref{eq:tower} is invariant under invertible bimeasurable recodings of the driver coordinates, since such a recoding leaves $\sigma(Z)$ and hence every conditional expectation unchanged; in the affine case $B$ transforms by the Jacobian and $BVB^{\Tr}$ is unchanged.\label{prop:minimal}

\label{rem:invariance}Interventional invariance follows from a structural model. If the drivers and returns are embedded in a structural causal model in which $Z$ is a cause of $R$, and an intervention acts on variables that are not descendants of $Z$ and leaves the mechanism from $Z$ to $R$ unchanged, then the conditional law of $R$ given $H^{+}$ is the same under the observational and the post-intervention law; this is the modularity of structural causal models \cite[Ch.~1 and~3]{Pearl2009}, \cite[Sec.~6.3]{PetersJanzing2017}. Every object in~\eqref{eq:tower} is then invariant to the intervention. The paper does not assume such a model; it makes invariance across a declared set of regimes the definition of the causal admissibility condition (Definition~\ref{def:claim}), and this remark records when that condition follows from a structural model.

The screening defect used in Section~\ref{sec:apo} to grade candidate representations is the departure from Definition~\ref{def:separator}: for a candidate $H$ it is the maximal absolute conditional correlation between components of $R$ after conditioning on $H$, equal to zero if and only if the candidate is a second-moment separator.

\subsection{The Commonality Principle}

\begin{proposition}[Commonality Principle, \cite{RD2023,RD2025causal}]\label{prop:commonality}
Let the constituents of a portfolio be generated by a structural causal model in which a declared set of drivers is the complete set of their common causes, there is no latent common cause, and no constituent is a cause of another (causal sufficiency). Call a driver common if it is a cause of at least two constituents. Then, by the causal Markov condition, the constituents are mutually independent conditional on the common drivers, so the common drivers are a separator; the drivers that are optimal for the portfolio's diversification dynamics, in the sense that conditioning on them removes the dependence between constituents that is not idiosyncratic, are the common drivers; conditioning on a set that omits a common cause leaves the dependence it induces, and conditioning on a common effect of two constituents creates a dependence absent without it. Portfolio risk conditional on the common drivers decomposes into a systematic part carried by the drivers' innovations and an idiosyncratic part that is diversifiable in the sense of Proposition~\ref{prop:diversification}.
\end{proposition}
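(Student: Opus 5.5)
The plan is to prove the Commonality Principle in three parts, in the order the statement makes its claims: separation, misspecified conditioning sets, and the risk decomposition. The principle is a statement about a structural causal model. In that model the observed variables are the constituents' returns $R_1,\dots,R_n$ and the declared drivers $Z$, each with an independent exogenous noise, under causal sufficiency (no latent common cause) and with no edge between constituents. The main tool is the local (parental) Markov property, which the causal Markov condition delivers.

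\emph{Separation.} Causal sufficiency and the absence of edges among constituents give $R_i=f_i(\mathrm{pa}(R_i),U_i)$ with $\mathrm{pa}(R_i)\subseteq Z$ and mutually independent $U_i$, independent of $Z$. Let $Z^{c}$ be the common drivers and $Z^{s}_i$ the drivers that cause only $R_i$. Conditional on $Z^{c}$, each $R_i$ is a measurable function of $(Z^{s}_i,U_i)$ and the fixed value of $Z^{c}$.

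The step I expect to be the main obstacle is the independence of the private-driver blocks $(Z^{s}_i,U_i)$ across $i$ given $Z^{c}$. A private driver of $R_i$ could share an ancestor with a private driver of $R_j$. I would handle this by reading ``complete set of common causes'' as closed under common ancestry: any variable that is an ancestor of two constituents counts as a common cause and is declared. With that reading, the $d$-separation criterion shows that every path between $R_i$ and $R_j$ passes through $Z^{c}$ as a non-collider, or through a constituent as a collider that is not conditioned on. Hence $Z^{c}$ $d$-separates the constituents, and by the global Markov property they are mutually independent given $Z^{c}$. This makes $Z^{c}$ a structural separator, and therefore a separator in the second-moment sense of Definition~\ref{def:separator} (the horizon version follows by taking $Z$ to be the driver path).

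\emph{Omitted and collider conditioning.} These claims are existence statements under faithfulness, and I would prove each with a two-constituent Gaussian example. For the omitted cause, take $R_1=Z+U_1$ and $R_2=Z+U_2$ and condition on the empty set: the residual correlation is $\Var Z/(\Var Z+\sigma^2)\neq0$. For the collider, take independent $R_1,R_2$ and $C=R_1+R_2+U$ and condition on $C$: the conditional covariance is $-\Var R_1\Var R_2/\Var C\neq0$. Optimality of the common drivers then follows. They separate, and any conditioning set that omits one of them leaves dependence in some faithful instance, so under faithfulness no such set removes all non-idiosyncratic dependence.

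\emph{Risk decomposition.} Given separation, Proposition~\ref{prop:tower} gives $\Cov(R\mid H^{-})=\Omega+\Cov(\xi\mid H^{-})$ with $\Omega$ diagonal. The second term is carried by the innovations of the driver path. Diversifiability of the $\Omega$ term is exactly Proposition~\ref{prop:diversification}(a), whose diagonality hypothesis has just been verified.
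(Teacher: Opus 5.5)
The separation step and the risk decomposition follow the route the paper sketches: the causal Markov condition, then Proposition~\ref{prop:tower} and Proposition~\ref{prop:diversification}(a). Reading the completeness hypothesis as closure under common ancestry is exactly what ``no latent common cause'' has to mean for separation to hold.

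One repair is needed in that part. Your list of how paths get blocked leaves out colliders that are neither constituents nor in $Z^{c}$. Examples are the common effect in your own collider example, or a driver that is a common child of two private drivers. Such paths are still blocked, because $Z^{c}$ is ancestrally closed, so no collider outside $Z^{c}$ has a descendant in it. A more direct argument avoids the issue. The ancestor sets of the $R_i$ outside $Z^{c}$ are pairwise disjoint. So each $R_i$ is a function of $Z^{c}$ and of its own block of exogenous noises, and these blocks are independent across $i$ and independent of $Z^{c}$. That gives mutual independence at once.

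The genuine gap is the optimality step. You start from ``any conditioning set that omits a common driver leaves dependence in some faithful instance.'' You then conclude that, in the given model, no such set removes all non-idiosyncratic dependence. This turns an existential claim over models into a universal claim within one model. Under your own closure reading the conclusion is also false.

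Take $W\to Z_a\to R_1$ and $W\to Z_b\to R_2$. The common drivers are $\{W\}$. Yet $S=\{Z_a,Z_b\}$ omits $W$ and blocks the only path, so $R_1\perp R_2\mid S$ in every model on this graph, faithful or not. More generally, the set of all parents of the constituents is always a separator by the local Markov property, whether or not it contains the common ancestors. Your Gaussian example covers only $S=\emptyset$ and a direct common parent, which is why it does not reveal this.

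What can be proved is narrower, and it is proved inside the given faithful model rather than by example:
\begin{itemize}
\item If $Z_c$ is a parent of both $R_i$ and $R_j$, the path $R_i\leftarrow Z_c\to R_j$ is open given any $S$ with $Z_c,R_i,R_j\notin S$. Faithfulness then gives dependence given $S$.
\item For an indirect common cause, the same holds only when $S$ leaves a directed path from it to each of the two constituents unblocked.
\item The collider clause follows from the open path $R_i\to C\leftarrow R_j$ given $Z^{c}\cup\{C\}$, together with faithfulness.
\end{itemize}
So you should state optimality either as sufficiency plus this restricted necessity, or restrict the class of conditioning sets being compared, for instance to sets that do not condition on private drivers. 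As a claim over all conditioning sets it cannot be proved, because it is not true.
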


The principle is the causal origin of the diversification result of Section~\ref{sec:diversification}. What a portfolio can diversify away is what remains after conditioning on the common causes, and the residual $\Omega$ of Proposition~\ref{prop:tower} is the formal object that remainder denotes.


\begin{table}[!htb]\centering\small
\caption{Correspondence between the objects of modern portfolio theory and their admissible counterparts.}
\label{tab:homology}
\begin{tabular}{@{}p{4.1cm}p{5.5cm}p{5.2cm}@{}}
\toprule
Modern portfolio theory & Admissible portfolio optimization & What changes\\
\midrule
moments $\mu,\Sigma$ taken as given & conditional moments $\mu_H,\Sigma_H$ of an admissible $H$ & the information is a decision variable subject to admissibility conditions\\
feasible set of portfolios & admissible pairs $\Gamma_t(m)$ & pairs, not weights; inadmissible $H$ absent\\
efficient frontier & conditional frontier of $\Hst$ (Def.~\ref{def:frontier}) & one frontier per class; the reported one is that of the admissibility-optimal class\\
minimum-variance portfolio & conditional minimum-variance portfolio of $\Hst$ & unchanged in form\\
tangency portfolio, capital allocation line & admissible tangent portfolio and admissible CAL, one line per decision state; their union over states is the admissible capital allocation surface (Fig.~\ref{fig:cal-surface}) & the slope becomes state-dependent; its root-mean-square is the unconditional Sharpe ratio\\
indifference curve, utility & the same preference, applied only in Stage~II & preference cannot reselect information (Def.~\ref{def:apo-canonical})\\
--- & joint envelope over $\calA^{0}_t$ (Sec.~\ref{sec:joint}) & a different contract, not an object of the classical theory\\
diversification acts on $\Sigma$ & diversification acts on the residual $\Omega_{\Hst}$ (Prop.~\ref{prop:diversification}) & the diversifiable part is fixed by the class\\
estimation error in $\mu,\Sigma$ & estimation error in $\mu_H,\Sigma_H$ plus selection error in $H$ (Sec.~\ref{sec:finite}) & a second source of error with its own bound\\
\bottomrule
\end{tabular}
\end{table}

\subsection{Establishing the admissibility conditions}
Causal admissibility is a constraint rather than a score. For an interventional claim, representations without identified intervention semantics are infeasible; for a conditional-law claim they are not excluded for lacking them.

\label{rem:certified}The conditions differ in how they are established. Chronological availability is a deterministic check on provenance. The enlargement condition is the classical question of the theory of enlargement of filtrations: an initial enlargement violating Jacod's condition can destroy the semimartingale property and create arbitrage, the mechanism by which an insider's expected utility becomes infinite \cite{AmendingerImkellerSchweizer1998}, so the condition is a hypothesis on the enlarged filtration, verifiable within a model. Implementation feasibility is verifiable from exchange rules and liquidity data; the statistical and causal conditions are certified from returns, with sampling error quantified in Section~\ref{sec:finite}. Only the mandate scope is a declaration, with the same status as the set of admissible strategies in \cite{DelbaenSchachermayer1994}, and its misspecification is model risk rather than estimation error.

Three candidates illustrate the parts. A published factor return passes (i) by its release calendar and (ii) automatically, being in the market filtration. A proprietary order-flow signal computed before $t$ passes (i) and is the case for which (ii) is written. A realized horizon return fails (i) on the calendar and (ii) in the extreme form in which the informed agent's utility is unbounded.

\begin{definition}[Decision equivalence]\label{def:equiv}
Two admissible representations $H$ and $\tilde H$ are \emph{decision-equivalent}, $H\eqv\tilde H$, if the feasible sets $\calW(H)$ and $\calW(\tilde H)$ coincide and every population decision primitive entering $J$ coincides almost surely, so that $J(\cdot;H)$ and $J(\cdot;\tilde H)$ coincide as functions on the common feasible set. In the exact conditional-law case, equality of the generated $\sigma$-algebras modulo null sets together with identical implementation constraints is sufficient.
\end{definition}

\subsection{Canonicality, a two-asset example and convexity}
\begin{proposition}[Canonicality of Stage I]\label{thm:canonical}
If the admissibility conditions and the four quantities depend on a candidate only through the information it generates, the selected class is a union of information-equivalence classes and is invariant to invertible recodings of the candidate variables, and when exact admissible separators exist the selected class consists of exact separators, with the one of least causal defect, dimension and burden among them belonging to it. The formal statement is Proposition~\ref{thm:canonical-formal} in Appendix~\ref{app:formal}.
\end{proposition}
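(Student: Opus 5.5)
The plan is to work entirely through the lexicographic structure of $\prec_{\calA}$ (Definition~\ref{def:adm-order}) and the hypothesis that everything relevant is a function of the generated information. The first step is to formalise that hypothesis. I write $H\approx\tilde H$ when $\sigma(H^{-})=\sigma(\tilde H^{-})$ and $\sigma(H^{+})=\sigma(\tilde H^{+})$ modulo $\Prob$-null sets. The assumption is then that each indicator $G_F,G_S,G_C,G_I$ and each of $\varepsilon_t,\delta_t,\kappa_t,\iota_t$ is constant on $\approx$-classes. From this, $H\in\calA_t\iff\tilde H\in\calA_t$ whenever $H\approx\tilde H$, since admissibility is the conjunction of the required indicators. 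Because the order compares only the four-coordinate vector $q(H)=(\varepsilon_t,\delta_t,\kappa_t,\iota_t)(H)$, equivalent candidates have equal records. They therefore stand in the same $\prec_{\calA}$-relation to every third candidate.

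The second step is the union-of-classes claim. If $H\in\Ast$ and $\tilde H\approx H$, then $\tilde H$ is admissible. If some $G$ satisfied $G\prec_{\calA}\tilde H$, then $G\prec_{\calA}H$ as well, because the relation depends only on the records. That contradicts minimality of $H$, so $\tilde H\in\Ast$ and $\Ast$ is saturated under $\approx$. The same argument applies verbatim to $\Ast(\bar\varepsilon)$, using that $\calA_t(\bar\varepsilon)$ is saturated too. Invariance to invertible recodings then follows from this saturation. An invertible bimeasurable recoding $\phi$ of the driver coordinates leaves $\sigma(Z_t)$ and $\sigma(Z_{[t,t+h]})$ unchanged, as recorded after Proposition~\ref{prop:tower}. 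The recoded candidate is thus $\approx$-equivalent to the original, and it is in $\Ast$ exactly when the original is. The example $\{\phi(Z_1),Z_2\}$ of Experiment~\ref{exp:E1} is the instance.

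The third step is the separator statement. Suppose some admissible $H_0$ has $\varepsilon_t(H_0)=0$. Take any $H\in\Ast$. If $\varepsilon_t(H)>0$, then $H_0$ and $H$ first differ at the $\varepsilon$ coordinate, so $H_0\prec_{\calA}H$, which contradicts minimality. Hence every element of $\Ast$ is an exact separator, because the screening defect vanishes exactly for second-moment separators (Section~\ref{sec:setup}). Within the exact separators, minimality under the remaining coordinates means exactly that $\Ast$ consists of the lexicographic minimisers of $(\delta_t,\kappa_t,\iota_t)$ among them. Any exact separator attaining the least causal defect, then the least dimension, then the least burden lies in $\Ast$, and conversely.

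The main obstacle is existence of that lexicographic minimiser when the candidate set is infinite. On a finite set it is immediate from the total preorder, as argued after Definition~\ref{def:adm-order}. In general I would invoke the compact-quotient argument stated there: each coordinate is lower semicontinuous on a compact quotient, so the nested minimiser sets are nonempty and compact. I would state the conclusion under those hypotheses and defer the topological details to the formal version, Proposition~\ref{thm:canonical-formal}. A secondary point is to check that equality of $\sigma$-algebras modulo null sets, rather than exact equality, does not affect the indicators. This is covered by the standing convention of Section~\ref{sec:setting}.
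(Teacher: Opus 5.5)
Your proof is correct and follows the paper's own argument. Both rest on three points: $\Ast$ is saturated under information equivalence because the order sees only the record $(\varepsilon_t,\delta_t,\kappa_t,\iota_t)$; recoding invariance is a special case of that; and the first lexicographic coordinate forces $\Ast$ onto the exact separators, among which the least one in $(\delta_t,\kappa_t,\iota_t)$ is minimal. The existence issue you flag as the main obstacle does not arise, because the formal version (Proposition~\ref{thm:canonical-formal}) takes the existence of such a least exact separator $H^{*}_t$ as a hypothesis rather than proving it, so your compact-quotient detour is unnecessary.
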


\begin{example}[A representation switch that a fixed-representation optimizer cannot express]\label{ex:two-asset}
Consider two assets under full investment and two admissible information states with conditional moments
\[
\Sigma_{H_1}=\begin{pmatrix}1&0.8\\0.8&1\end{pmatrix},\ \mu_{H_1}=\begin{pmatrix}0.12\\0.08\end{pmatrix};\qquad
\Sigma_{H_2}=\begin{pmatrix}1&0.1\\0.1&1\end{pmatrix},\ \mu_{H_2}=\begin{pmatrix}0.09\\0.10\end{pmatrix}.
\]
For $\gamma=2$ and $c_H\equiv0$, problem~\eqref{eq:markowitz} returns $w^{\star}_{H_1}=(0.55,0.45)$ with $J=0.799$, and $w^{\star}_{H_2}=(0.497,0.503)$ with $J=0.455$. A Markowitz program produces either allocation but cannot compare them, since each conditional law is one of its inputs. The envelope problem~\eqref{eq:apo} compares $V(H_1)=0.799$ with $V(H_2)=0.455$ and selects $(H_2,w^{\star}_{H_2})$ although $H_1$ carries the larger conditional mean. APO proper (Definition~\ref{def:apo-canonical}) compares $H_1$ and $H_2$ by admissibility and constructs only the winner's frontier; Proposition~\ref{prop:lex} gives the exact relation between the two.
\end{example}

\begin{proposition}[Convexity of the conditional frontier]\label{thm:markowitz-geometry}
Inside a fixed class with a convex feasible set, convex risk and affine mean, the conditional frontier is convex in the target, the efficient set at each target is convex and is a singleton when the risk is strictly convex, and the frontier is continuous on the interior of the feasible range. Formal statement: Proposition~\ref{thm:markowitz-formal}, Appendix~\ref{app:formal}.
\end{proposition}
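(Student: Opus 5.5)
The statement is Proposition~\ref{thm:markowitz-geometry}: convexity of the conditional frontier inside a fixed class. I would prove it by treating $\rho^{\star}_H(m)=\inf\{\rho_H(w):w\in\calW(H),\ \mu_H(w)\ge m\}$ as the optimal value of a parametric convex program in the target $m$, and argue directly from convexity of the constraint graph.

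\textbf{Convexity of the frontier.} Take feasible targets $m_0,m_1$, weights $\lambda\in[0,1]$, and $\eta>0$. Choose $w_0,w_1\in\calW(H)$ with $\mu_H(w_j)\ge m_j$ and $\rho_H(w_j)\le\rho^{\star}_H(m_j)+\eta$; if the minima are attained, as Weierstrass gives under the compactness hypotheses of Appendix~\ref{app:formal}, take $\eta=0$. Set $w_\lambda=(1-\lambda)w_0+\lambda w_1$.
\begin{itemize}
\item $w_\lambda$ lies in $\calW(H)$, since the feasible set is convex.
\item $\mu_H(w_\lambda)=(1-\lambda)\mu_H(w_0)+\lambda\mu_H(w_1)\ge(1-\lambda)m_0+\lambda m_1$, since the mean is affine.
\item $\rho_H(w_\lambda)\le(1-\lambda)\rho_H(w_0)+\lambda\rho_H(w_1)$, since the risk is convex.
\end{itemize}
Hence $\rho^{\star}_H((1-\lambda)m_0+\lambda m_1)\le(1-\lambda)\rho^{\star}_H(m_0)+\lambda\rho^{\star}_H(m_1)+\eta$. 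Letting $\eta\to0$ gives convexity. The feasible range of targets is an interval, because it is the image of a convex set under an affine map together with the $\ge$ relaxation.

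\textbf{The efficient set.} At a fixed $m$, the set $\{w\in\calW(H):\mu_H(w)\ge m\}$ is convex, being a convex set intersected with a half-space. The efficient set $\calE_H(m)$ is the sublevel set $\{w:\rho_H(w)\le\rho^{\star}_H(m)\}$ of a convex function on that convex set, so it is convex. If $\rho_H$ is strictly convex and $w\ne w'$ are both efficient, their midpoint is feasible and has risk strictly below $\rho^{\star}_H(m)$, a contradiction. So $\calE_H(m)$ has at most one point, and exactly one when the minimum is attained.

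\textbf{Continuity.} A convex function that is finite on an interval is continuous on its interior. The remaining work is to check that $\rho^{\star}_H$ is finite, that is, not $-\infty$, on the interior of the feasible range. Nonemptiness of the constraint set on the range gives $\rho^{\star}_H<+\infty$. For the lower bound, $-\infty$ is excluded when $\rho_H$ is bounded below, for instance nonnegative or continuous on a compact $\calW(H)$. Otherwise, a convex function that takes the value $-\infty$ at one point of an interval equals $-\infty$ at every interior point, which contradicts attainment at any single target.

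The main obstacle is this last step. Under the bare hypotheses of the informal statement, finiteness is not automatic, and I would import it from the compactness and continuity assumptions of Appendix~\ref{app:formal}. Attainment of the minimum, needed to turn "at most one point" into "exactly one", comes from the same assumptions.
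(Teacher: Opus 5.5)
Your proposal is correct and follows essentially the paper's own argument. Convexity of the value function comes from convex combinations of (near-)minimizers, and your $\eta$-slack only removes the paper's reliance on attainment; uniqueness comes from the midpoint argument; and continuity is that of a finite convex function on the interior of its domain. The one adjustment concerns attainment in the formal version (Proposition~\ref{thm:markowitz-formal}): there the feasible set may be unbounded, as for the budget hyperplane $\{w:\ind^{\Tr}w=1\}$, so existence comes from coercivity of $w^{\Tr}\Sigma_Hw$ with $\Sigma_H\succ0$ rather than from compactness, and the monotonicity it also asserts follows at once from $\{w\in\calW(H):\mu_H(w)\ge m'\}\subseteq\{w\in\calW(H):\mu_H(w)\ge m\}$ for $m'\ge m$.
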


\label{sec:setvalued}

\section{Formal statements and proofs}\label{app:proofs}

\subsection{Formal statements}\label{app:formal}

\begin{theorem}[Existence]\label{thm:existence-formal}
If the candidate set is finite and each $\calW(H)$ is compact with $J(\cdot;H)$ lower semicontinuous, the envelope problem~\eqref{eq:apo} has a minimizer. More generally a minimizer exists when $\calA_t/\!\eqv$ is compact, $H\mapsto\calW(H)$ is compact-valued and upper hemicontinuous and $J$ is lower semicontinuous on its graph.
\end{theorem}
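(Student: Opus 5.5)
The plan reduces both clauses to Weierstrass' theorem for lower semicontinuous functions on compact sets. The argument runs through the conditional value function $V(H)=\inf_{w\in\calW(H)}J(w;H)$ of Definition~\ref{def:apo}: first the inner problem is attained for each $H$, then the outer infimum over $H$ is attained.

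\emph{Finite case.} Fix $H\in\calA_t$. If $\calW(H)$ is empty, $V(H)=+\infty$ and $H$ plays no role. Otherwise $\calW(H)$ is compact and $J(\cdot;H)$ is lower semicontinuous, so the infimum is attained at some $w_H$. This is the usual argument: along a minimizing sequence, extract a convergent subsequence by compactness, and pass to the limit using $\liminf J\ge J$ at the limit point. Since $\calA_t$ is finite, $V^{\star}=\min_{H}V(H)$ is a minimum over finitely many extended reals. It is attained at some $\Hs$, and $(\Hs,w_{\Hs})$ is an envelope-optimal pair. If $V^{\star}$ is finite this is a genuine minimizer. If every feasible set is empty, the problem is vacuous, and I will note that nonemptiness of some $\calW(H)$ is implicitly assumed.

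\emph{General case.} Work on the quotient $\calA_t/\!\eqv$; by the remark after Definition~\ref{def:apo}, $V$ is well defined there because it is constant on equivalence classes. Let $\mathrm{Gr}=\{(H,w):w\in\calW(H)\}$ be the graph of the correspondence. Since $\calA_t/\!\eqv$ is compact and $H\mapsto\calW(H)$ is upper hemicontinuous with compact values, $\mathrm{Gr}$ is compact. This is the standard fact \cite[Thm.~17.8 / Lemma~17.8]{AliprantisBorder2006}, that an uhc compact-valued correspondence on a compact space has compact graph (image compact and graph closed in the Hausdorff setting).

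Then $J$ is a lower semicontinuous function on the compact set $\mathrm{Gr}$, so it attains its infimum. The argument is a net or sequence version of Weierstrass: take $(H_k,w_k)$ with $J\to V^{\star}$, extract a subnet converging in $\mathrm{Gr}$ to some $(\Hs,w^{\star})$, and use lower semicontinuity to get $J(w^{\star};\Hs)\le V^{\star}$. Membership of the limit in $\mathrm{Gr}$ gives equality. Any representative of the class $\Hs$ together with $w^{\star}$ then minimizes \eqref{eq:apo}, since $J$ depends only on the class.

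\emph{Main obstacle.} The delicate step is the compactness of the graph. Upper hemicontinuity on a general topological quotient needs care: compact values plus uhc gives closed graph only when the range is Hausdorff, which holds here since $\R^{n}$ is Hausdorff. Compactness of the graph then comes from the cited lemma, or directly: a net $(H_\alpha,w_\alpha)$ has a subnet with $H_\alpha\to H$, and uhc with compact $\calW(H)$ yields a further subnet with $w_\alpha\to w\in\calW(H)$.

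I will state the argument with nets, so that no metrizability of the quotient is required. I will also note that $J$ may take the value $+\infty$, which is harmless for lower semicontinuity.
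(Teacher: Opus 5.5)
Your argument is correct and is essentially the paper's proof: the graph is compact because the correspondence is compact-valued and upper hemicontinuous on the compact quotient (compact image plus closed graph, using that $\R^{n}$ is Hausdorff), the lower semicontinuous $J$ attains its infimum on that compact graph, and any representative of the minimizing class is a minimizer by quotient invariance. Your separate treatment of the finite case is the discrete-topology instance of the same argument. Your caveat that some $\calW(H)$ must be nonempty, with $J\not\equiv+\infty$, is the same implicit assumption the paper makes when it asserts that the infimum is finite.
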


\begin{theorem}[Existence of the two solution sets]\label{thm:two-sets-formal}
If the candidate set is finite, $\Ast$ is nonempty, and for every $H\in\Ast$ with $\calW(H)$ nonempty and compact and $\rho_H,\mu_H$ continuous, the conditional efficient set
\begin{equation}\label{eq:eff-set}
\calE_H(m)=\argmin\{\rho_H(w):\ w\in\calW(H),\ \mu_H(w)\ge m\}
\end{equation}
is nonempty and compact for every feasible target $m$. The same holds when $\calA_t/\!\eqv$ is compact and the four admissibility quantities are lower semicontinuous on it.
\end{theorem}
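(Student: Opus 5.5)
The statement has two parts, and I would prove them in order: first that $\Ast$ is nonempty, then that each conditional efficient set is nonempty and compact.

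\textbf{Finite case, nonemptiness of $\Ast$.} Map each $H\in\calA_t$ to its record $q(H)=(\varepsilon_t(H),\delta_t(H),\kappa_t(H),\iota_t(H))\in[0,\infty)^4$. By Definition~\ref{def:adm-order}(a), $H_1\prec_{\calA}H_2$ holds exactly when $q(H_1)$ is lexicographically smaller than $q(H_2)$. The lexicographic order on $\R^4$ is total, so $\prec_{\calA}$ is the strict part of a total preorder on the finite set $\calA_t$. I would then construct a minimal element directly:
\begin{itemize}
\item let $M_1=\argmin_{\calA_t}\varepsilon_t$, which is nonempty because the set is finite;
\item let $M_2=\argmin_{M_1}\delta_t$, $M_3=\argmin_{M_2}\kappa_t$ and $M_4=\argmin_{M_3}\iota_t$.
\end{itemize}
Suppose $H\in M_4$ and some $H'$ had $H'\prec_{\calA}H$. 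Then $H'$ would be strictly smaller at the first coordinate where the two records differ. This contradicts the minimality of $H$ at that stage, because $H'$ agrees with $H$ on all earlier coordinates and so lies in the same $M_j$. Hence $M_4\subseteq\Ast$. Nonemptiness of $\calA_t$ is implicit in the hypothesis and should be stated; the same argument gives $\Ast(\bar\varepsilon)$ with the coordinates reordered.

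\textbf{Efficient sets.} Fix $H\in\Ast$ and a feasible target $m$, meaning there is some $w\in\calW(H)$ with $\mu_H(w)\ge m$. Because $\mu_H$ is continuous, the constraint set $K_m=\calW(H)\cap\mu_H^{-1}([m,\infty))$ is closed in the compact set $\calW(H)$. It is therefore compact, and it is nonempty by feasibility. Since $\rho_H$ is continuous, Weierstrass gives a minimizer, so $\calE_H(m)\neq\emptyset$. Moreover $\calE_H(m)=K_m\cap\rho_H^{-1}(\{\rho^{\star}_H(m)\})$ is a closed subset of a compact set, hence compact. None of this depends on how $H$ was selected.

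\textbf{Compact-quotient case.} This is the step needing care. The orders are defined through the four quantities, so $\eqv$-equivalent candidates should have equal records; that is the hypothesis under which the quantities descend to lower semicontinuous functions on $\calA_t/\!\eqv$.
\begin{itemize}
\item A lower semicontinuous function on a nonempty compact space attains its minimum.
\item Its argmin set $\{x:f(x)\le\min f\}$ is a sublevel set, hence closed, hence compact.
\end{itemize}
I would iterate this through the four coordinates exactly as in the finite case, with the restriction of a lower semicontinuous function to a compact subset remaining lower semicontinuous. The result is a nonempty compact set of classes, whose preimage in $\calA_t$ is contained in $\Ast$ by the same contradiction argument.

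The main obstacle is this compatibility of the orders with the quotient. I would handle it by invoking Proposition~\ref{thm:canonical} (the quantities depend only on generated information) or by stating it as an explicit hypothesis. The efficient-set part then repeats verbatim for every representative in the selected classes.
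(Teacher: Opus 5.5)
Your proof is correct and follows the paper's argument. You minimize $\varepsilon_t,\delta_t,\kappa_t,\iota_t$ in turn, each step preserving nonemptiness and compactness by lower semicontinuity, and then apply Weierstrass on the compact set $\calW(H)\cap\mu_H^{-1}([m,\infty))$. The only difference is that the paper also proves the reverse inclusion, so the nested minimizer set equals $\Ast$ and is itself compact; your inclusion $M_4\subseteq\Ast$ already gives the nonemptiness the statement asks for.

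On the quotient, take your second option (an explicit hypothesis) rather than citing Proposition~\ref{thm:canonical}. That proposition gives constancy on information-equivalence classes, not on $\eqv$-classes: appending an independent noise variable to $\{Z_1,Z_2\}$ preserves $\eqv$ but raises $\kappa_t$. The hypothesis that the four quantities are lower semicontinuous on $\calA_t/\!\eqv$ already presupposes that they are defined on the quotient, so nothing further needs proving.
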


\begin{proposition}[Convexity of the conditional frontier]\label{thm:markowitz-formal}
Let $\rho_H(w)=w^{\Tr}\Sigma_Hw$ with $\Sigma_H\succ0$, $\mu_H(w)=\mu_H^{\Tr}w$, and let $\calW(\Hst)$ be defined by finitely many affine equalities and convex inequalities. Then for every feasible $m$ the Stage-II problem~\eqref{eq:frontier} is a convex program with a unique solution, and $m\mapsto\rho^{\star}_t(m)$ is convex and nondecreasing on the set of feasible targets.
\end{proposition}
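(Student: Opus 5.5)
The proof rests on the fact that Stage~II is a strictly convex quadratic program over a convex, nonempty feasible set, and that the optimal value is jointly convex in the target. Fix a feasible target $m$, that is, one for which $F(m):=\{w\in\calW(\Hst):\ \mu_{\Hst}^{\Tr}w\ge m\}$ is nonempty. The set $F(m)$ is convex and closed, since it is the intersection of the affine equalities, the convex inequalities defining $\calW(\Hst)$ and a half-space. The objective $w\mapsto w^{\Tr}\Sigma_{\Hst}w$ is strictly convex because $\Sigma_{\Hst}\succ0$, so~\eqref{eq:frontier} is a convex program.

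Existence of a minimizer needs no compactness of $\calW$. The objective is coercive, since $w^{\Tr}\Sigma w\ge\lambda_{\min}(\Sigma)\|w\|_2^2$. Fix any $w_0\in F(m)$; the sublevel set $\{w\in F(m):w^{\Tr}\Sigma w\le w_0^{\Tr}\Sigma w_0\}$ is then nonempty, closed and bounded, hence compact. Weierstrass' theorem gives a minimizer on it. For uniqueness, suppose there were two distinct minimizers. Their midpoint lies in $F(m)$ by convexity and has strictly smaller value by strict convexity, a contradiction.

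For convexity of $m\mapsto\rho^{\star}_t(m)$, take feasible $m_1,m_2$ with minimizers $w_1,w_2$ and let $\theta\in[0,1]$. The point $w_\theta=\theta w_1+(1-\theta)w_2$ lies in $\calW(\Hst)$ by convexity. Because the mean is affine, $\mu_{\Hst}^{\Tr}w_\theta\ge\theta m_1+(1-\theta)m_2$, so $w_\theta$ is feasible at the target $m_\theta=\theta m_1+(1-\theta)m_2$. This also shows that the set of feasible targets is an interval. By convexity of the quadratic form,
\[
\rho^{\star}_t(m_\theta)\le w_\theta^{\Tr}\Sigma w_\theta\le\theta\rho^{\star}_t(m_1)+(1-\theta)\rho^{\star}_t(m_2).
\]

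Monotonicity follows from the inequality form of the target constraint. For $m\le m'$ we have $F(m')\subseteq F(m)$, so a minimum over a smaller set is no smaller, giving $\rho^{\star}_t(m)\le\rho^{\star}_t(m')$.

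The argument has no hard step. The only care needed is existence without a compactness hypothesis on $\calW$, which is handled by coercivity and is the reason positive definiteness rather than semidefiniteness of $\Sigma_{\Hst}$ is assumed. It is also worth noting that monotonicity relies on the target being imposed as $\mu\ge m$, as in~\eqref{eq:frontier}. With an equality target the frontier is the full parabola of Proposition~\ref{prop:merton}, which decreases below $b/a$, and monotonicity holds only on its efficient branch.
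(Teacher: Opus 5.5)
Your proof is correct and follows the paper's argument step by step: a closed convex feasible set $F(m)$, the midpoint argument for uniqueness, nested feasible sets for monotonicity, and a convex combination of the two minimizers with the affine mean for convexity of $\rho^{\star}_t$. The differences are minor. You obtain existence uniformly by coercivity, whereas the paper uses its compactness existence result when $\calW(\Hst)$ is bounded and falls back on coercivity otherwise. You also state explicitly that the set of feasible targets is an interval, which the paper's convexity step uses without saying so.
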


\begin{proposition}[Stage-I selection consistency, finite class]\label{prop:stage1-formal}
Let $\calA_t$ be finite, let $s_t$ represent $\prec_{\calA}$ and $\hat s_t$ represent $\widehat\prec_{\calA}$ on $\widehat\calA^{0}_t$, and suppose (i) $\Prob\{\widehat\calA^{0}_t=\calA_t\}\to1$ and (ii) $\max_{H\in\calA_t}|\hat s_t(H)-s_t(H)|\pto0$. If $\Delta_a:=\min\{s_t(H)-\min s_t:\ H\in\calA_t\setminus\Ast\}>0$, then $\Prob\{\widehat\calA^{\star}_t\subseteq\Ast\}\to1$, and $\Prob\{\widehat\calA^{\star}_t=\Ast\}\to1$ whenever $\hat s_t$ takes equal values on $\Ast$ with probability tending to one.
\end{proposition}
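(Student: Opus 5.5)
The argument is a gap argument on a finite set, run on the intersection of two high-probability events. For fixed $\eta\in(0,\Delta_a/2)$, write $E_N:=\{\widehat\calA^{0}_t=\calA_t\}\cap\{\max_{H\in\calA_t}|\hat s_t(H)-s_t(H)|<\eta\}$. By (i) and (ii) and a union bound, $\Prob(E_N)\to1$. Everything below is a deterministic statement on $E_N$.

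On $E_N$ the estimated order lives on $\calA_t$ itself. Since $\hat s_t$ represents $\widehat\prec_{\calA}$, the set $\widehat\calA^{\star}_t=\Min_{\widehat\prec_{\calA}}\widehat\calA^{0}_t$ equals $\argmin_{H\in\calA_t}\hat s_t(H)$. This set is nonempty because $\calA_t$ is finite.

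Now take any $G\in\calA_t\setminus\Ast$ and any $H^{\star}\in\Ast$, where $\Ast=\argmin s_t$ is nonempty. Then
\[
\hat s_t(G)>s_t(G)-\eta\ge\min s_t+\Delta_a-\eta>\min s_t+\eta=s_t(H^{\star})+\eta>\hat s_t(H^{\star}).
\]
So $G$ is not an $\hat s_t$-minimizer. Hence $\widehat\calA^{\star}_t\subseteq\Ast$ on $E_N$, and the first claim follows from $\Prob(E_N)\to1$.

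For equality, let $F_N$ be the event that $\hat s_t$ is constant on $\Ast$, so $\Prob(F_N)\to1$ by hypothesis. On $E_N\cap F_N$, every element of $\Ast$ carries the same estimated score. That common score is strictly below the estimated score of every non-optimal candidate, by the display above. So $\argmin\hat s_t=\Ast$ on this event. A second union bound gives $\Prob(E_N\cap F_N)\to1$.

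The only point needing care is the first step: showing that the minimal elements of $\widehat\prec_{\calA}$ on $\widehat\calA^{0}_t$ coincide with $\argmin\hat s_t$ over $\calA_t$. This needs the representation to be a strict-order representation ($H_1\,\widehat\prec_{\calA}\,H_2\iff\hat s_t(H_1)<\hat s_t(H_2)$), as fixed in Section~\ref{sec:guarantees}. It also needs the set identity from (i), so that the comparison runs over the right candidates. The rest is bookkeeping with the gap $\Delta_a$.
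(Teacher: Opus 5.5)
Your proposal is correct and follows the paper's proof essentially line for line: the same high-probability event (the paper sets the deviation threshold to exactly $\Delta_a/2$ instead of some $\eta<\Delta_a/2$), the same gap chain ruling out every non-optimal candidate as an $\hat s_t$-minimizer, and the same constancy argument for equality. Your explicit check that the $\widehat\prec_{\calA}$-minimal elements coincide with $\argmin\hat s_t$ on $\{\widehat\calA^{0}_t=\calA_t\}$ is a step the paper leaves implicit.
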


\begin{proposition}[Stage-II frontier consistency]\label{prop:stage2-formal}
Assume Stage~I consistently selects a decision-equivalent class $[\Hst]$ and write $\calW:=\calW(\Hst)$, $\rho:=\rho_{\Hst}$, $\mu:=\mu_{\Hst}$. Suppose (i) $\calW$ is known, nonempty, convex and compact, $\rho$ is continuous and convex, and $\mu$ is continuous and affine on $\calW$; (ii) the estimators $\hat\rho_N,\hat\mu_N$ are continuous on $\calW$ and $\eta_N:=\sup_{\calW}|\hat\rho_N-\rho|+\sup_{\calW}|\hat\mu_N-\mu|\pto0$; (iii) the target interval $\calM=[m_-,m_+]$ satisfies $m_-<m_+<\max_{\calW}\mu$. Let $\hat\rho^{\star}_N(m)=\min\{\hat\rho_N(w):w\in\calW,\hat\mu_N(w)\ge m\}$ and $\widehat\calE_N(m)$ the corresponding minimizer set. Then
\begin{equation}\label{eq:frontier-consistency}
\sup_{m\in\calM}\big|\hat\rho^{\star}_N(m)-\rho^{\star}_t(m)\big|\pto0,
\end{equation}
and every accumulation point of every sequence $\hat w_N\in\widehat\calE_N(m)$ belongs to $\calE^{\star}_t(m)$.
\end{proposition}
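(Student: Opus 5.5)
The argument is a deterministic sandwich on the event where the estimated moments are uniformly close, followed by the usual passage to probability. Fix $N$ and write $\eta_N$ as in (ii). For any $m\in\calM$, a point feasible for the estimated problem at $m$, i.e.\ with $\hat\mu_N(w)\ge m$, satisfies $\mu(w)\ge m-\eta_N$. Conversely, a point feasible for the population problem at $m+\eta_N$ is feasible for the estimated problem at $m$. Combining these with $|\hat\rho_N-\rho|\le\eta_N$ on $\calW$ gives
\[
\rho^{\star}_t(m+\eta_N)-\eta_N\;\le\;\hat\rho^{\star}_N(m)\;\le\;\rho^{\star}_t(m+\eta_N)+\eta_N ,
\]
with the lower side actually reading $\rho^{\star}_t(m-\eta_N)-\eta_N$. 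The sandwich is therefore
\[
\rho^{\star}_t(m-\eta_N)-\eta_N\le\hat\rho^{\star}_N(m)\le\rho^{\star}_t(m+\eta_N)+\eta_N .
\]
This holds whenever $m+\eta_N\le\max_{\calW}\mu$. By (iii), that condition is met uniformly on $\calM$ once $\eta_N<\max_{\calW}\mu-m_+$, an event of probability tending to one. Both optimal values exist by Weierstrass on the compact sets involved, as in Theorem~\ref{thm:two-sets-formal}.

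The uniform convergence~\eqref{eq:frontier-consistency} then reduces to uniform continuity of $\rho^{\star}_t$ on a slightly enlarged interval $[m_--\epsilon,m_++\epsilon]$ that stays inside the feasible range below $\max_{\calW}\mu$.

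\emph{Main obstacle: continuity of the population frontier.} By Proposition~\ref{thm:markowitz-geometry}, or directly, $\rho^{\star}_t$ is convex and nondecreasing, because the constraint set shrinks as $m$ rises and convexity of $\rho$ together with affinity of $\mu$ gives convexity in $m$. A convex function is continuous on the interior of its domain. Its domain is $(-\infty,\max_{\calW}\mu]$, so the interval $[m_--\epsilon,m_++\epsilon]$, for $\epsilon$ small, lies in the interior on the right thanks to (iii).

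On the left the domain is unbounded, so there is no boundary issue. The frontier may, however, be constant below the minimum-risk target, which causes no difficulty. Continuity on a compact interval gives uniform continuity. Hence $\sup_{\calM}|\hat\rho^{\star}_N-\rho^{\star}_t|\le\eta_N+\omega(\eta_N)$, where $\omega$ is the modulus of continuity, and this bound tends to zero in probability.

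For the minimizers, take $\hat w_N\in\widehat\calE_N(m)$ and a subsequence converging to $\bar w\in\calW$, which exists by compactness. Since $\hat\mu_N(\hat w_N)\ge m$ and $\hat\mu_N\to\mu$ uniformly, continuity of $\mu$ gives $\mu(\bar w)\ge m$, so $\bar w$ is feasible. Likewise $\rho(\bar w)=\lim\hat\rho_N(\hat w_N)=\lim\hat\rho^{\star}_N(m)=\rho^{\star}_t(m)$, using uniform convergence, continuity of $\rho$, and the first part. Thus $\bar w\in\calE^{\star}_t(m)$. This argument runs along subsequences on which $\eta_N\to0$ almost surely, extracted from convergence in probability. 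The statement about accumulation points is therefore read in probability, or along such almost-sure subsequences, which is the precise sense I will adopt.
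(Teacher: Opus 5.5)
Your proof is correct and follows the paper's own route. It uses the nested feasible sets $F(m+\eta_N)\subseteq\widehat F_N(m)\subseteq F(m-\eta_N)$ to get the sandwich $\rho^{\star}_t(m-\eta_N)-\eta_N\le\hat\rho^{\star}_N(m)\le\rho^{\star}_t(m+\eta_N)+\eta_N$, then continuity of the convex, nondecreasing frontier on the interior of $(-\infty,\max_{\calW}\mu]$ with a uniform modulus on an enlarged compact interval, and finally a compactness argument for the minimizers. The first display, with $m+\eta_N$ on the lower side, is wrong and should simply be deleted, as you already note. Your remark that the accumulation-point claim holds only along subsequences on which $\eta_N\to0$ almost surely, or in probability, is a precision the paper's Step~3 glosses over: the paper passes to limits in $\eta_N$ as if the convergence were deterministic.
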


\begin{proposition}[Canonicality of Stage I]\label{thm:canonical-formal}
Suppose the required admissibility conditions and the four quantities $\varepsilon_t,\delta_t,\kappa_t,\iota_t$ are constant on information-equivalence classes. Then

\smallskip\noindent\textit{Part (a).}\quad $\Ast$ is a union of information-equivalence classes and is invariant to invertible measurable recodings of candidate variables; (b) if an exact admissible separator exists, then $\Ast$ consists of exact admissible separators, those with defect zero; if moreover an exact admissible separator $H^{*}_t$ has, among the exact admissible separators, the least causal defect, then the least dimension and then the least implementation burden, then $[H^{*}_t]\subseteq\Ast$. Minimality under information inclusion alone does not imply the latter, because an exact separator incomparable with $H^{*}_t$ under inclusion may have lower dimension or burden.
\end{proposition}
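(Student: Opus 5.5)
The plan is to exploit the assumption that everything Stage~I looks at factors through the information-equivalence relation. Write $H\approx\tilde H$ when $\sigma(H)=\sigma(\tilde H)$ modulo null sets. By hypothesis, every required admissibility indicator and each of the four quantities $\varepsilon_t,\delta_t,\kappa_t,\iota_t$ take the same value on $H$ and $\tilde H$. It follows that the admissibility record $r(H):=(\varepsilon_t,\delta_t,\kappa_t,\iota_t)(H)$ is constant on $\approx$-classes. So is membership in $\calA_t$, provided $\calH_t$ is closed under $\approx$ within the declared candidate set. (If it is not, the statement reads as concerning $\calA_t\cap[H]$, and I would make that explicit.)

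\emph{Part (a).} The order $\prec_{\calA}$ of Definition~\ref{def:adm-order}(a) depends on $H_1,H_2$ only through $r(H_1),r(H_2)$. Suppose $H\in\Ast$ and $\tilde H\approx H$ is a candidate. Then $\tilde H\in\calA_t$. Any $G\prec_{\calA}\tilde H$ would satisfy $r(G)<_{\mathrm{lex}}r(\tilde H)=r(H)$, hence $G\prec_{\calA}H$, which contradicts minimality of $H$. So $\tilde H\in\Ast$, and $\Ast$ is a union of classes. For recoding invariance, let $\phi$ be invertible and bimeasurable, so that $\sigma(\phi(Z))=\sigma(Z)$. Then $\phi(H)\approx H$, and this is the case just proved. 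This is the same remark used after Proposition~\ref{prop:tower}.

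\emph{Part (b).} Suppose an exact admissible separator $H_0$ exists, so $\varepsilon_t(H_0)=0$. Since $\varepsilon_t\ge0$, the minimum of the first lexicographic coordinate over $\calA_t$ is $0$. Any $H$ with $\varepsilon_t(H)>0$ is strictly dominated by $H_0$ at the first coordinate, so it is not minimal. Hence $\Ast\subseteq\{\varepsilon_t=0\}$, which is exactly the set of exact admissible separators. Now let $H^{*}_t$ minimize $(\delta_t,\kappa_t,\iota_t)$ lexicographically among exact separators. Any $G$ with $G\prec_{\calA}H^{*}_t$ would need either $\varepsilon_t(G)<0$, which is impossible, or $\varepsilon_t(G)=0$ with a lexicographically smaller tail, which contradicts the choice of $H^{*}_t$. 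So $H^{*}_t\in\Ast$, and part (a) gives $[H^{*}_t]\subseteq\Ast$.

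The closing remark needs only a counterexample. Take two exact separators that are incomparable under inclusion, where the one that is not inclusion-minimal in its chain has smaller dimension. Then the inclusion-minimal one is not $\prec_{\calA}$-minimal.

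The main obstacle is bookkeeping rather than depth. The order is defined on candidates, not on classes, so I will state the closure assumption on $\calH_t$ carefully. I also need the first-differing-coordinate definition to give a strict partial order whose minimal elements are exactly the minimizers of $r$ in $<_{\mathrm{lex}}$. This holds because the lexicographic order on $\R^{4}$ is total, so minimal elements are the lexicographic minimizers of $r$ over $\calA_t$. In the infinite case, nonemptiness is taken from Theorem~\ref{thm:two-sets-formal} and is not part of the claim.
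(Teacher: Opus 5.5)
Your proposal is correct and follows the paper's argument: the record $(\varepsilon_t,\delta_t,\kappa_t,\iota_t)$ is constant on information-equivalence classes, so $\Ast$ is closed under equivalence and under recodings; the first lexicographic coordinate keeps exactly the defect-zero candidates; and the lexicographic minimality of $H^{*}_t$ on the remaining coordinates, combined with part (a), gives $[H^{*}_t]\subseteq\Ast$. You are more careful than the paper in flagging that $\calH_t$ must be closed under information equivalence, but your wording of the closing counterexample is muddled: what it needs is an inclusion-minimal exact separator that an incomparable exact separator beats on dimension or burden.
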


\subsection{Proofs}

\begin{proof}[Proof of Proposition~\ref{prop:tower}]

\smallskip\noindent\textit{Step 1. The law of total covariance.}\quad Since $H^{-}\subseteq H^{+}$, conditioning first on $H^{+}$ and then on $H^{-}$ gives, for the random vector $R$,
\[
\Cov(R\mid H^{-})=\E\big[\Cov(R\mid H^{+})\,\big|\,H^{-}\big]+\Cov\big(\E[R\mid H^{+}]\,\big|\,H^{-}\big).
\]
Because $\E[R\mid H^{-}]$ is $H^{-}$-measurable, subtracting it inside the second covariance changes nothing, so
\[
\Cov\big(\E[R\mid H^{+}]\,\big|\,H^{-}\big)=\Cov\big(\E[R\mid H^{+}]-\E[R\mid H^{-}]\,\big|\,H^{-}\big)=\Cov(\xi\mid H^{-}),
\]
which is~\eqref{eq:tower}.

\smallskip\noindent\textit{Step 2. Diagonality under separation.}\quad If $Z$ is a separator, $\Cov(R\mid H^{+})$ is diagonal almost surely by definition, hence so is its $H^{-}$-conditional expectation, which is the first term $\Omega$; a structural separator is a separator because conditional independence of square-integrable variables implies zero conditional covariance.

\smallskip\noindent\textit{Step 3. The affine case.}\quad If $\E[R\mid H^{+}]=\alpha+B\zeta$ with $\zeta$ a statistic of the driver path, then $\E[R\mid H^{-}]=\alpha+B\,\E[\zeta\mid H^{-}]$ by the tower property, so $\xi=B\big(\zeta-\E[\zeta\mid H^{-}]\big)$ and
\[
\Cov(\xi\mid H^{-})=B\,\Cov(\zeta\mid H^{-})\,B^{\Tr}=BVB^{\Tr}.\qedhere
\]
\end{proof}

\begin{proof}[Proof of Proposition~\ref{thm:canonical-formal}]

\smallskip\noindent\textit{Part (a).}\quad If $\phi$ is an invertible measurable recoding, $\sigma(\phi(H))=\sigma(H)$, so $\phi(H)$ and $H$ lie in the same information-equivalence class and, by hypothesis, have the same admissibility-condition values and the same four quantities; the order of Definition~\ref{def:adm-order} and its set of minimal elements are therefore unchanged, and $\Ast$ contains every element of a class as soon as it contains one.

\smallskip\noindent\textit{Part (b).}\quad By definition of a least exact separator, $\varepsilon_t(H)=0$ exactly for the exact separators in $\calA_t$, so the first coordinate of the lexicographic order retains the exact admissible separators and only them; under a conditional-law contract $\delta_t\equiv0$ and the second coordinate retains them all. The remaining coordinates, dimension and burden, order the exact separators among themselves, and minimality under information inclusion does not decide that order, since two exact separators can be incomparable under inclusion with different dimensions or burdens. Hence $\Ast$ consists of exact admissible separators, and $H^{*}_t\in\Ast$ exactly when it is minimal among them in $(\delta_t,\kappa_t,\iota_t)$, which is the stated hypothesis; $[H^{*}_t]\subseteq\Ast$ then follows by (a).

\smallskip\noindent\textit{Part (c).}\quad The four quantities are functions of $H$ alone; a selection rule depending on $m$ or on $w$ is a different order and hence a different contract.
\end{proof}

\begin{proof}[Proof of Theorem~\ref{thm:two-sets-formal}]

\smallskip\noindent\textit{Step 1 (Stage I).}\quad Write $a^{1},a^{2},a^{3},a^{4}$ for the four quantities $\varepsilon_t,\delta_t,\kappa_t,\iota_t$ of Definition~\ref{def:adm-order}, in that order. Set $X^{0}:=X_t$ and, recursively, $X^{k}:=\argmin_{x\in X^{k-1}}a^{k}(x)$ for $k=1,\dots,4$. If $X^{k-1}$ is nonempty and compact, then $a^{k}$, being lower semicontinuous, attains its minimum on $X^{k-1}$ \cite[Thm.~2.43]{AliprantisBorder2006}, and $X^{k}=X^{k-1}\cap\{a^{k}\le\min_{X^{k-1}}a^{k}\}$ is the intersection of a compact set with a closed set (lower level sets of lsc functions are closed), hence nonempty and compact. By induction $X^{4}$ is nonempty and compact. It remains to identify $X^{4}$ with the set of $\lex$-minimal elements: $x\in X^{4}$ means that $x$ minimizes $a^{1}$, then $a^{2}$ among $a^{1}$-minimizers, and so on, which is precisely the statement that no $x'$ satisfies $x'\prec_{\calA}x$; conversely, if $x\notin X^{4}$, let $k$ be the first index with $x\notin X^{k}$; then $x\in X^{k-1}$ and any $x'\in X^{k}$ ties $x$ on $a^{1},\dots,a^{k-1}$ and is strictly smaller on $a^{k}$, so $x'\prec_{\calA}x$ and $x$ is not minimal. Hence $\Ast=X^{4}$ (as a union of classes). 

\smallskip\noindent\textit{Step 2 (Stage II).}\quad Fix $H\in\Ast$ and a feasible $m$. The set $F_H(m):=\{w\in\calW(H):\mu_H(w)\ge m\}$ is the preimage of the closed set $[m,\infty)$ under the continuous $\mu_H$, intersected with the compact $\calW(H)$; it is therefore compact and, by assumption, nonempty. The continuous $\rho_H$ attains its minimum on $F_H(m)$, and $\calE_H(m)=F_H(m)\cap\rho_H^{-1}(\{\min_{F_H(m)}\rho_H\})$ is closed in a compact set, hence compact.
\end{proof}

\begin{proof}[Proof of Theorem~\ref{thm:voi}]

\smallskip\noindent\textit{Step 0. The loss used in the paper satisfies the embedding hypothesis.}\quad For the conditional mean--variance loss $\ell(w,R;H)=\tfrac{\gamma}{2}\Var(w^{\Tr}R\mid H)-\E[w^{\Tr}R\mid H]$ and a $\sigma(H_1^{-})$-measurable policy $\pi$, the law of total variance applied to the finer information gives
\[
\E\big[\Var(\pi^{\Tr}R\mid H_2)\big]=\E\big[\Var(\pi^{\Tr}R\mid H_1)\big]-\E\big[\Var\big(\E[\pi^{\Tr}R\mid H_2]\,\big|\,H_1\big)\big]\le\E\big[\Var(\pi^{\Tr}R\mid H_1)\big],
\]
while $\E\big[\E[\pi^{\Tr}R\mid H_2]\big]=\E\big[\E[\pi^{\Tr}R\mid H_1]\big]$. Hence the identity map $\iota\pi=\pi$ satisfies $J(\pi;H_2)\le J(\pi;H_1)$, which is the embedding inequality of Definition~\ref{def:order}; this is the case used throughout the paper, and it is the reason the definition is stated for expected losses, since the pointwise inequality between the two losses does not hold in general.

\smallskip\noindent\textit{Step 1. Weak inequality.}\quad Let $\iota:\Pi(H_1)\to\Pi(H_2)$ be the embedding. For every $\pi\in\Pi(H_1)$, $\iota\pi\in\Pi(H_2)$ and, by the embedding inequality,
\[
J(\iota\pi;H_2)\le J(\pi;H_1).
\]
Taking infima over $\pi\in\Pi(H_1)$ on both sides and enlarging the set on the left,
\[
V(H_1)=\inf_{\pi\in\Pi(H_1)}J(\pi;H_1)\ \ge\ \inf_{\pi\in\Pi(H_1)}J(\iota\pi;H_2)\ \ge\ \inf_{\pi'\in\Pi(H_2)}J(\pi';H_2)=V(H_2).
\]

\smallskip\noindent\textit{Step 2. Strict inequality.}\quad Under the additional hypothesis $J(\pi_2;H_2)<J(\iota\pi_1;H_2)$, hence
\[
V(H_1)=J(\pi_1;H_1)\ \ge\ J(\iota\pi_1;H_2)\ >\ J(\pi_2;H_2)\ \ge\ V(H_2).\qedhere
\]
\end{proof}

\begin{proof}[Proof of Corollary~\ref{cor:not-maximal}]
If $H_2\notin\calA_t$ then $V(H_2)$ does not enter~\eqref{eq:apo}. If the embedding fails, Step~1 of the proof of Theorem~\ref{thm:voi} has no content, and $V(H_2)>V(H_1)$ occurs as soon as $\calW(H_2)$ excludes $w^{\star}_{H_1}$ while the conditional law is unchanged, as in Example~\ref{ex:two-asset} with a representation-dependent trading restriction.
\end{proof}

\begin{proof}[Proof of Corollary~\ref{prop:lookahead}]

\smallskip\noindent\textit{Step 1. Weak inequality.}\quad Since $\sigma(H^{-})\subseteq\sigma(H^{-},X^{+})$ and $\calW(H^{\sharp})=\calW(H)$, the identity map embeds the $H$ problem in the $H^{\sharp}$ problem in the sense of Definition~\ref{def:order}, by Step~0 of the proof of Theorem~\ref{thm:voi}. The theorem gives $V(H^{\sharp})\le V(H)$.

\smallskip\noindent\textit{Step 2. Strict inequality for the quadratic loss.}\quad Let $\pi$ be the $H^{-}$-measurable optimizer and $\pi^{\sharp}$ the $H^{\sharp}$-measurable optimizer. Conditional on $H^{\sharp}$ the objective $w\mapsto\tfrac{\gamma}{2}w^{\Tr}\Cov(R\mid H^{\sharp})w-\E[R\mid H^{\sharp}]^{\Tr}w$ is strictly convex with unconstrained minimizer
\[
w^{\sharp}=\gamma^{-1}\Cov(R\mid H^{\sharp})^{-1}\E[R\mid H^{\sharp}].
\]
and the $H^{-}$-measurable optimizer is $\pi=\gamma^{-1}\Cov(R\mid H^{-})^{-1}\E[R\mid H^{-}]$. On the event where the two differ, which has positive probability by hypothesis, strict convexity of the conditional objective and the fact that $\calW$ contains a neighbourhood of the optimizer give $\pi^{\sharp}$ a strictly smaller conditional objective than $\pi$; elsewhere the two coincide. Since the conditional objective of $\pi$ given $H^{\sharp}$ averages to $J(\pi;H)$ by the tower property, taking expectations gives $J(\pi^{\sharp};H^{\sharp})<J(\pi;H)$, hence $V(H^{\sharp})<V(H)$. A change in the conditional mean alone does not suffice for the hypothesis, because a simultaneous change in the conditional covariance can leave the optimizer unchanged; the condition is placed on the optimizer for that reason.

\smallskip\noindent\textit{Step 3. Consequence for selection rules.}\quad On any candidate set containing both $H$ and $H^{\sharp}$, a rule that returns a minimizer of $V$ cannot return $H$, which is strictly dominated by $H^{\sharp}$ in the set. This alone does not force the minimizer to contain $X^{+}$: a third candidate $G$ excluding $X^{+}$ may satisfy $V(G)<V(H^{\sharp})$. Under either of the two additional hypotheses it does. If $H$ minimizes $V$ among the candidates excluding $X^{+}$, every such candidate has value at least $V(H)>V(H^{\sharp})$ and cannot be a minimizer. If every candidate excluding $X^{+}$ has its own augmentation in the set, each is strictly dominated by its augmentation by Steps~1 and~2, so none is a minimizer. In both cases every minimizer contains $X^{+}$. Decision-loss selection is therefore consistent with the chronological condition only on candidate sets from which that condition has already removed every such candidate.
\end{proof}

\begin{proof}[Proof of Theorem~\ref{thm:existence-formal}]

\smallskip\noindent\textit{Step 1 (the graph is compact).}\quad A compact-valued uhc correspondence into a Hausdorff space has closed graph \cite[Thm.~17.10]{AliprantisBorder2006}, and the image $\calW(X)=\bigcup_{x\in X}\calW(x)$ of the compact set $X$ under a compact-valued uhc correspondence is compact \cite[Lemma~17.8]{AliprantisBorder2006}. Hence $\Gamma$ is a closed subset of the compact set $X\times\calW(X)$ and is therefore compact. 

\smallskip\noindent\textit{Step 2 (attainment).}\quad A lower semicontinuous function with values in $\R\cup\{+\infty\}$ attains its infimum on a nonempty compact set \cite[Thm.~2.43]{AliprantisBorder2006}; the infimum is finite because $J\not\equiv+\infty$. Any point $(x^{\star},w^{\star})$ at which it is attained, together with any representative $\Hs\in x^{\star}$, is a minimizer of~\eqref{eq:apo} by the quotient invariance of $V$.
\end{proof}

\begin{proof}[Proof of Proposition~\ref{prop:lex}]

\smallskip\noindent\textit{Step 1. Every minimal pair lies in the union.}\quad Let $(H,w)\in\Gamma_t(m)$. If $H\notin\Ast$, there is $H'\in\calA_t$ with $H'\prec_{\calA}H$; for every $w'\in\calW(H')$ with $\mu_{H'}(w')\ge m$, which exists because $m$ lies in the common target domain, the first clause of~\eqref{eq:apo-order} gives $(H',w')\prec_{\mathrm{APO}}(H,w)$, so $(H,w)$ is not minimal. If $H\in\Ast$ but $w\notin\calE_H(m)$, there is $w'$ feasible for $H$ with $\rho_H(w')<\rho_H(w)$, and the second clause gives $(H,w')\prec_{\mathrm{APO}}(H,w)$.

\smallskip\noindent\textit{Step 2. Every pair in the union is minimal.}\quad\sloppy  Let $H\in\Ast$, $w\in\calE_H(m)$, and suppose $(H',w')\prec_{\mathrm{APO}}(H,w)$. The first clause would require $H'\prec_{\calA}H$, excluded by the minimality of $H$ in $\Ast$; the second requires $H'=H$ and $\rho_H(w')<\rho_H(w)$, excluded by $w\in\calE_H(m)$. Hence $(H,w)$ is minimal, and~\eqref{eq:lex-char} follows.

\smallskip\noindent\textit{Step 3. The envelope.}\quad Under the total preorder $\preceq_{\mathrm{env}}$ every two pairs are comparable, so the minimal elements are the pairs of least risk over all of $\Gamma_t(m)$, which are the pairs attaining~\eqref{eq:joint}; with $\rho$ replaced by $J$ and the target constraint dropped, they are the minimizers of~\eqref{eq:apo}.
\end{proof}

\begin{proof}[Proof of Proposition~\ref{thm:markowitz-formal}]
The feasible set $F(m)=\calW(\Hst)\cap\{w:\mu_H^{\Tr}w\ge m\}$ is the intersection of convex sets, hence convex, and closed. The objective is strictly convex since $\Sigma_H\succ0$. If $w_1\neq w_2$ were both minimizers, then $\tfrac12(w_1+w_2)\in F(m)$ by convexity and $\rho_H(\tfrac12(w_1+w_2))<\tfrac12\rho_H(w_1)+\tfrac12\rho_H(w_2)=\min_{F(m)}\rho_H$ by strict convexity, a contradiction; existence follows from Theorem~\ref{thm:two-sets} when $\calW(\Hst)$ is bounded and from coercivity of $\rho_H$ otherwise. Monotonicity holds because $F(m')\subseteq F(m)$ for $m'\ge m$. Convexity of the value function: for feasible $m_0,m_1$ and $\theta\in[0,1]$, let $w_i\in\calE_H(m_i)$; then $\mu_H^{\Tr}(\theta w_1+(1-\theta)w_0)\ge\theta m_1+(1-\theta)m_0$ and $\theta w_1+(1-\theta)w_0\in\calW(\Hst)$, so $\rho^{\star}_t(\theta m_1+(1-\theta)m_0)\le\rho_H(\theta w_1+(1-\theta)w_0)\le\theta\rho^{\star}_t(m_1)+(1-\theta)\rho^{\star}_t(m_0)$ \cite[Thm.~29.1]{Rockafellar1970}.
\end{proof}

\begin{proof}[Proof of Proposition~\ref{prop:sharpe}]
(a) is the definition of the supremum and of the Stage-I class.

\smallskip\noindent\textit{Part (b).}\quad For the quadratic loss $J_\gamma(w;H)=\tfrac{\gamma}{2}w^{\Tr}\Sigma_Hw-(\mu_H-r_f\ind)^{\Tr}w$ over $\R^{n}$ the minimizer is $w=\gamma^{-1}\Sigma_H^{-1}(\mu_H-r_f\ind)$ and the conditional minimum value is $-S_t(H)^{2}/(2\gamma)$, so $V_\gamma(H)=-\bar S_t(H)^{2}/(2\gamma)$. Theorem~\ref{thm:voi} applied to $J_\gamma$ gives $V_\gamma(H_2)\le V_\gamma(H_1)$ for every $\gamma>0$, hence $\bar S_t(H_2)^{2}\ge\bar S_t(H_1)^{2}$; the normalization $\ind^{\Tr}w=1$ does not change the maximal slope because the Sharpe ratio is scale-invariant.
\end{proof}

\begin{proof}[Proof of Proposition~\ref{prop:diversification}]

\smallskip\noindent\textit{Step 1. Part (a): the residual is diagonal and its risk spreads.}\quad By Proposition~\ref{prop:tower},
\[
\Cov(R\mid H^{-})=\underbrace{\E\big[\Cov(R\mid H^{+})\,\big|\,H^{-}\big]}_{\Omega_H}+\underbrace{\Cov\big(\E[R\mid H^{+}]\,\big|\,H^{-}\big)}_{B_HV_HB_H^{\Tr}}.
\]
Under separation $\Cov(R\mid H^{+})$ is diagonal almost surely by definition, and so is its $H^{-}$-conditional expectation $\Omega_H$; under structural separation this follows from conditional independence of square-integrable variables. For a diagonal $\Omega_H$,
\[
w^{\Tr}\Omega_Hw=\sum_{i}w_i^{2}\Omega_{H,ii}\le\Big(\max_i\Omega_{H,ii}\Big)\|w\|_2^{2},
\]
which is~\eqref{eq:idio-bound}; for $w=\ind/n$, $\|w\|_2^{2}=1/n$. The systematic term is $w^{\Tr}B_HV_HB_H^{\Tr}w=\beta_w^{\Tr}V_H\beta_w$ with $\beta_w=B_H^{\Tr}w$, and two portfolios with the same exposure $\beta_w$ have the same systematic risk however their weights are spread.

\smallskip\noindent\textit{Step 2. Part (b): off-diagonal residual mass does not spread.}\quad Write $\Omega_H=\operatorname{diag}(\Omega_H)+\Omega^{\mathrm{off}}_H$. Then
\[
w^{\Tr}\Omega_Hw=\sum_iw_i^{2}\Omega_{H,ii}+w^{\Tr}\Omega^{\mathrm{off}}_Hw,
\qquad
\Big(\tfrac{\ind}{n}\Big)^{\Tr}\Omega^{\mathrm{off}}_H\Big(\tfrac{\ind}{n}\Big)=\frac{1}{n^{2}}\sum_{i\neq j}\Omega_{H,ij}=\frac{n-1}{n}\,\overline{\Omega^{\mathrm{off}}_H},
\]
where $\overline{\Omega^{\mathrm{off}}_H}$ is the average off-diagonal entry; the second term tends to that average, not to zero.

\smallskip\noindent\textit{Step 3. Part (c): invariance under the declared regimes.}\quad By Definition~\ref{def:claim}, $\mathcal L^{e}(R\mid H^{+})=\mathcal L^{o}(R\mid H^{+})$ for every $e\in\mathcal E$. The response $B_H$, the conditional covariance $\Cov(R\mid H^{+})$ and hence its diagonality are functionals of this conditional law, so they are identical under $e$ and under the observational regime; $\Omega_H=\E[\Cov(R\mid H^{+})\mid H^{-}]$ is then invariant as well. The covariance $V_H=\Cov(\zeta\mid H^{-})$ of the driver innovation is a functional of the law of $H^{+}$ given $H^{-}$, which the hypothesis does not constrain, so the systematic term is invariant only under the additional condition stated. A conditional-law-admissible $H$ that explained covariation through a channel the intervention severs no longer satisfies the hypothesis of part (a) afterwards, and part (b) applies to its post-intervention residual.

\smallskip\noindent\textit{Step 4. Part (d): the exact decomposition.}\quad Write $\Omega_H=\operatorname{diag}(s)\,C\,\operatorname{diag}(s)$ with $C$ the residual correlation matrix and $s_i=\Omega_{H,ii}^{1/2}$, and set $v=w\odot s$. Then
\[
\begin{aligned}
w^{\Tr}\Omega^{\mathrm{off}}_Hw&=\sum_{i\neq j}v_iv_jc_{ij}=D\,\bar c_w,\\
w^{\Tr}\operatorname{diag}(\Omega_H)w&=\|v\|_2^{2},\\
D&=\Big(\sum_iv_i\Big)^{2}-\|v\|_2^{2}=\|v\|_2^{2}\Big(\frac{(\sum_iv_i)^{2}}{\|v\|_2^{2}}-1\Big),
\end{aligned}
\]
and dividing the first identity by $w^{\Tr}\Sigma_Hw$ and inserting the third gives~\eqref{eq:misallocation}. If $v\ge0$ then $D=\sum_{i\neq j}v_iv_j\ge0$, the weights $v_iv_j/D$ form a probability distribution over the pairs, $\bar c_w$ is a convex combination of the $c_{ij}$, and therefore $|\bar c_w|\le\max_{i\neq j}|c_{ij}|=\varepsilon_t(H)$.
\end{proof}

\begin{proof}[Proof of Theorem~\ref{thm:distance}]

\smallskip\noindent\textit{Step 1. Lower bound.}\quad $\calA_t(\mathrm i)\subseteq\calA_t(\mathrm c)$, so the infimum over the smaller class is no smaller: $V^{\star}_{\mathrm i}\ge V^{\star}_{\mathrm c}$ and $\Delta_{\mathrm{ci}}\ge0$.

\smallskip\noindent\textit{Step 2. Upper bound.}\quad Fix any $G\in\calA_t(\mathrm i)$. By definition of $V^{\star}_{\mathrm i}$ as an infimum over a set containing $G$, and by the Lipschitz property,
\[
V^{\star}_{\mathrm i}\le V(G)\le V(H^{\star}_0)+L\,d(G,H^{\star}_0)=V^{\star}_{\mathrm c}+L\,d(G,H^{\star}_0).
\]
Hence $\Delta_{\mathrm{ci}}\le L\,d(G,H^{\star}_0)$ for every $G\in\calA_t(\mathrm i)$, and taking the infimum over $G$ gives~\eqref{eq:distance}.

\smallskip\noindent\textit{Step 3. The quadratic case.}\quad For $J(w;H)=\tfrac{\gamma}{2}w^{\Tr}\Sigma_Hw-\mu_H^{\Tr}w$ and any $w\in\calW$ with $\|w\|_2\le\bar r$,
\[
|J(w;H)-J(w;G)|\le\tfrac{\gamma}{2}\,\bar r^{2}\,\|\Sigma_H-\Sigma_G\|_{\mathrm{op}}+\bar r\,\|\mu_H-\mu_G\|_2=\bar r\,d(H,G),
\]
and two functions whose pointwise difference is bounded by a constant have infima that differ by at most that constant, so $V$ is $\bar r$-Lipschitz for the metric~\eqref{eq:moment-metric}. When the moments are random functions of the state the same bound holds pointwise in the state, and its expectation gives the Lipschitz property for the expected metric.
\end{proof}

\begin{proof}[Proof of Proposition~\ref{prop:regret2}]

\smallskip\noindent\textit{Step 1. The regret is bounded by a perturbation times a distance.}\quad Write $r:=J(\hat w)-J(w^{\star})\ge0$, $D:=J-\widehat J$, $\Delta_\Sigma:=\Sigma-\widehat\Sigma$ and $\Delta_\mu:=\mu-\hat\mu$. Since $\hat w$ minimizes $\widehat J$, $\widehat J(\hat w)\le\widehat J(w^{\star})$, and therefore
\[
r\le D(\hat w)-D(w^{\star})=\tfrac{\gamma}{2}\big(\hat w^{\Tr}\Delta_\Sigma\hat w-w^{\star\Tr}\Delta_\Sigma w^{\star}\big)-\Delta_\mu^{\Tr}(\hat w-w^{\star}).
\]
The quadratic difference factors as $\tfrac{\gamma}{2}(\hat w+w^{\star})^{\Tr}\Delta_\Sigma(\hat w-w^{\star})$, and $\|\hat w+w^{\star}\|_2\le2\bar r$, so
\[
r\le\big(\gamma\bar r\,\|\Delta_\Sigma\|_{\mathrm{op}}+\|\Delta_\mu\|_2\big)\,\|\hat w-w^{\star}\|_2=:\epsilon\,\|\hat w-w^{\star}\|_2.
\]

\smallskip\noindent\textit{Step 2. Strong convexity bounds the distance by the regret.}\quad $J$ is $\gamma\lambda$-strongly convex because $\Sigma\succeq\lambda I$, and $w^{\star}$ minimizes it over the convex set $\calW$; the first-order optimality condition gives, for every $w\in\calW$,
\[
J(w)-J(w^{\star})\ \ge\ \tfrac{\gamma\lambda}{2}\,\|w-w^{\star}\|_2^{2},
\qquad\text{hence}\qquad
\|\hat w-w^{\star}\|_2\le\sqrt{2r/(\gamma\lambda)}.
\]

\smallskip\noindent\textit{Step 3. Closing the inequality.}\quad Combining the two steps, $r\le\epsilon\sqrt{2r/(\gamma\lambda)}$; if $r>0$, dividing by $\sqrt r$ and squaring gives $r\le2\epsilon^{2}/(\gamma\lambda)$, which is~\eqref{eq:regret2}. The rate follows because $\sqrt N$-consistency of the moments makes $\epsilon=O_p(N^{-1/2})$ and hence $r=O_p(N^{-1})$: the perturbation enters the regret squared, which is the sense in which it is second order, while it enters the reported risk $w^{\Tr}\widehat\Sigma w$ linearly.
\end{proof}

\begin{proof}[Proof of Proposition~\ref{prop:stage1-formal}]
Let $E_N$ be the event $\{\widehat\calA^{0}_t=\calA_t\}\cap\{\max_H|\hat s_t(H)-s_t(H)|<\Delta_a/2\}$; by (i)--(ii) and the union bound, $\Prob(E_N)\to1$. On $E_N$, for every $H\notin\Ast$ and $H^{\star}\in\Ast$,
\[
\hat s_t(H)\ >\ s_t(H)-\tfrac{\Delta_a}{2}\ \ge\ \min s_t+\tfrac{\Delta_a}{2}\ =\ s_t(H^{\star})+\tfrac{\Delta_a}{2}\ >\ \hat s_t(H^{\star}),
\]
where the first and last inequalities use $|\hat s_t-s_t|<\Delta_a/2$ and the middle one the definition of $\Delta_a$. Hence no $H\notin\Ast$ minimizes $\hat s_t$ on $E_N$, so $\widehat\calA^{\star}_t\subseteq\Ast$. If in addition $\hat s_t$ is constant on $\Ast$, every element of $\Ast$ is an $\hat s_t$-minimizer and equality holds.
\end{proof}

\begin{proof}[Proof of Proposition~\ref{prop:stage2-formal}]

\smallskip\noindent\textit{Step 1 (value function).}\quad Under (i), $\rho^{\star}_t$ is finite, nondecreasing and convex on the feasible target interval $(-\infty,\max_{\calW}\mu]$ by the argument of Proposition~\ref{thm:markowitz-geometry} (convexity of $\rho$ and affinity of $\mu$ are all that was used), and a finite convex function is continuous on the interior of its domain \cite[Thm.~10.1]{Rockafellar1970}; by (iii) there is $\eta_0>0$ with $[m_--\eta_0,m_++\eta_0]$ in that interior, so $\rho^{\star}_t$ is uniformly continuous there with modulus $\omega$. 

\smallskip\noindent\textit{Step 2 (sandwich).}\quad Fix $m\in\calM$ and $N$ with $\eta_N<\eta_0$, and write $F(m)=\{w\in\calW:\mu(w)\ge m\}$, $\widehat F_N(m)=\{w\in\calW:\hat\mu_N(w)\ge m\}$. Since $|\hat\mu_N-\mu|\le\eta_N$ on $\calW$, $F(m+\eta_N)\subseteq\widehat F_N(m)\subseteq F(m-\eta_N)$. Minimizing $\hat\rho_N\in[\rho-\eta_N,\rho+\eta_N]$ over these nested sets,
\[
\rho^{\star}_t(m-\eta_N)-\eta_N\ \le\ \hat\rho^{\star}_N(m)\ \le\ \rho^{\star}_t(m+\eta_N)+\eta_N,
\]
and by monotonicity $\rho^{\star}_t(m-\eta_N)\le\rho^{\star}_t(m)\le\rho^{\star}_t(m+\eta_N)$. Hence $|\hat\rho^{\star}_N(m)-\rho^{\star}_t(m)|\le\eta_N+\omega(\eta_N)$ uniformly in $m\in\calM$, and the right-hand side tends to zero in probability, which is~\eqref{eq:frontier-consistency}. 

\smallskip\noindent\textit{Step 3 (minimizers).}\quad Let $\hat w_N\in\widehat\calE_N(m)$ and $\bar w$ be an accumulation point along a subsequence (which exists by compactness of $\calW$). Along it, $\mu(\bar w)=\lim\mu(\hat w_N)\ge\lim(\hat\mu_N(\hat w_N)-\eta_N)\ge m$ and $\rho(\bar w)=\lim\rho(\hat w_N)=\lim(\hat\rho_N(\hat w_N)\pm\eta_N)=\lim\hat\rho^{\star}_N(m)=\rho^{\star}_t(m)$ by Step~2, using continuity of $\mu,\rho$ and closedness of $\calW$. Thus $\bar w\in F(m)$ attains $\rho^{\star}_t(m)$, i.e.\ $\bar w\in\calE^{\star}_t(m)$. Conditioning on correct Stage-I selection is legitimate because its probability tends to one, and sample separation ensures that Stage-II performance cannot alter Stage-I eligibility.
\end{proof}

\begin{proof}[Proof of Proposition~\ref{thm:regret}]
On $E_N$, apply~\eqref{eq:uniform-dev} at $(\widehat H,\hat w)$ and at $(\Hs,w^{\star})$ and use the optimality of $(\widehat H,\hat w)$ for $\widehat J_N$:
\[
J(\hat w;\widehat H)\le\widehat J_N(\hat w;\widehat H)+\varepsilon_N\le\widehat J_N(w^{\star};\Hs)+\varepsilon_N\le J(w^{\star};\Hs)+2\varepsilon_N .\qedhere
\]
\end{proof}

\begin{proof}[Proof of Theorem~\ref{thm:oracle}]

\smallskip\noindent\textit{Step 1. A uniform deviation over the class.}\quad Set
\[x_r:=\sigma\sqrt{2\log(2M_r/\delta)/N},\qquad\text{so that}\qquad 2\exp\{-Nx_r^{2}/(2\sigma^{2})\}=\delta/M_r.\]
By~\eqref{eq:subgauss} and the union bound over the $M_r$ elements of $\calA_r$,
\[
\Prob\Big\{\max_{H\in\calA_r}\big|\widehat V_N(H)-V(H)\big|>x_r\Big\}\ \le\ M_r\cdot\frac{\delta}{M_r}\ =\ \delta.
\]

\smallskip\noindent\textit{Step 2. The oracle inequality on the complementary event.}\quad On the event of probability at least $1-\delta$ where the deviation is below $x_r$, let $H^{\circ}\in\argmin_{\calA_r}V$, which exists because $\calA_r$ is finite. Then
\[
V(\widehat H_r)\le\widehat V_N(\widehat H_r)+x_r\le\widehat V_N(H^{\circ})+x_r\le V(H^{\circ})+2x_r,
\]
the middle inequality being the optimality of $\widehat H_r$ for $\widehat V_N$; this is~\eqref{eq:oracle}.

\smallskip\noindent\textit{Step 3. Relative to the conditional-law oracle.}\quad Since $\min_{\calA_t(\mathrm i)}V=V^{\star}_{\mathrm i}$,
\[
V(\widehat H_{\mathrm i})-V^{\star}_{\mathrm c}=\big(V(\widehat H_{\mathrm i})-V^{\star}_{\mathrm i}\big)+\big(V^{\star}_{\mathrm i}-V^{\star}_{\mathrm c}\big)\le2x_{\mathrm i}+\Delta_{\mathrm{ci}}
\]
by Step~2 and the definition~\eqref{eq:price}, which is~\eqref{eq:oracle-c2}.

\smallskip\noindent\textit{Step 4. On the deviation hypothesis.}\quad Inequalities of the form~\eqref{eq:subgauss} hold for bounded or truncated losses by Hoeffding's inequality \cite{BLM2013,Hoeffding1963}, and under a finite-variance assumption alone for the value computed with a robust mean estimator in place of the sample mean \cite{Catoni2012}. Heavy-tailed portfolio losses such as those of experiment Experiment~\ref{exp:E3} are outside the hypotheses unless one of these devices is used, and Experiment~\ref{exp:E3} is reported as a comparison of designs rather than as an instance of the bound.
\end{proof}

\begin{proof}[Proof of Corollary~\ref{cor:regularization}]
Both bounds are upper bounds on the excess loss over the common oracle $V^{\star}_{\mathrm{c}}$; the conditional-law bound is $2x_{\mathrm{c}}$ and the interventional bound is $\Delta_{\mathrm{ci}}+2x_{\mathrm{i}}$. The inequality $\Delta_{\mathrm{ci}}+2x_{\mathrm{i}}<2x_{\mathrm{c}}$ is~\eqref{eq:regularization} after substituting $x_r$.
\end{proof}

\begin{proof}[Proof of Proposition~\ref{prop:hysteresis}]

\smallskip\noindent\textit{Part (a).}\quad Reselection means $\hat s_t(\widehat H_{t-1})>\min\hat s_t+\eta$; since $\hat s_t(\widehat H_{t-1})\le s_t(\widehat H_{t-1})+\delta_t$ and $\min\hat s_t\ge\min s_t-\delta_t$, we obtain $s_t(\widehat H_{t-1})-\min s_t>\eta-2\delta_t$.

\smallskip\noindent\textit{Part (b).}\quad If the incumbent is population-optimal, the left side of (a) is zero, so (a) would require $\eta<2\delta_t$.

\smallskip\noindent\textit{Part (c).}\quad For $\widehat H_t\in\argmin\hat s_t$ and $H^{\star}\in\argmin s_t$, $s_t(\widehat H_t)\le\hat s_t(\widehat H_t)+\delta_t\le\hat s_t(H^{\star})+\delta_t\le s_t(H^{\star})+2\delta_t$.
\end{proof}

\subsection{A continuous candidate class}
\begin{proposition}[A continuous candidate class]\label{prop:continuous}
Let $(R,Z)$ be jointly Gaussian with $Z\in\R^{k}$, $\Cov(Z)\succ0$ and $\Cov(R\mid Z)$ diagonal, so that $Z$ is a separator. For $\theta$ on the unit sphere $S^{k-1}$ let $H_\theta:=\sigma(\calG^{0}_t,\theta^{\Tr}Z)$ and let $\calW(H_\theta)\equiv\calW$ be a fixed nonempty compact set. Then the quotient of $\{H_\theta:\theta\in S^{k-1}\}$ by information equivalence is compact, $\theta\mapsto\calW(H_\theta)$ is upper hemicontinuous, the screening defect $\theta\mapsto\varepsilon_t(H_\theta)$ and the conditional value $\theta\mapsto V(H_\theta)$ are continuous, and both the envelope problem~\eqref{eq:apo} and the Stage-I problem have solutions.
\end{proposition}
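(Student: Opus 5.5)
The plan is to exploit Gaussianity to turn everything into explicit continuous functions of $\theta$, and then to invoke Theorems~\ref{thm:existence-formal} and~\ref{thm:two-sets-formal}. I will take $\calG^{0}_t$ trivial, or independent of $(R,Z)$; otherwise the same formulas hold conditionally and nothing changes.

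First, compactness of the quotient. $H_\theta$ and $H_{-\theta}$ generate the same $\sigma$-algebra, since $-\theta^{\Tr}Z$ is an invertible recoding. More generally, $\sigma(\theta^{\Tr}Z)=\sigma(\theta'^{\Tr}Z)$ iff $\theta'=\pm\theta$, because $\Cov(Z)\succ0$ makes $\theta^{\Tr}Z$ non-degenerate. By Definition~\ref{def:equiv}, decision equivalence is at least as coarse as this. The quotient is therefore a continuous image of $S^{k-1}$ (via $\mathbb{RP}^{k-1}$) under the quotient map, and I will topologize it so that this map is continuous. A continuous image of a compact set is compact. Upper hemicontinuity of the constant correspondence $\theta\mapsto\calW$ is immediate, since its upper inverse of any open set is either everything or empty.

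Second, the explicit moments. For jointly Gaussian $(R,\theta^{\Tr}Z)$, write $s_\theta:=\theta^{\Tr}\Cov(Z)\theta>0$ and $c_\theta:=\Cov(R,Z)\theta$. Then
\[
\mu_{H_\theta}=\E R+c_\theta\,(\theta^{\Tr}Z-\theta^{\Tr}\E Z)/s_\theta,\qquad \Sigma_{H_\theta}=\Cov(R)-c_\theta c_\theta^{\Tr}/s_\theta .
\]
The covariance is deterministic, and both expressions are continuous in $\theta$ because $s_\theta$ is bounded away from zero on the sphere. The screening defect is the maximum of $|\Sigma_{ij}|/\sqrt{\Sigma_{ii}\Sigma_{jj}}$ over $i\neq j$, which is continuous provided the diagonal entries stay positive. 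That holds if $\Cov(R\mid Z)$ has positive diagonal, because $\Sigma_{H_\theta}\succeq\Cov(R\mid Z)$ by the law of total covariance; I will note this (or assume it) as the nondegeneracy needed. For the value I will use the conditional mean--variance loss: $V(H_\theta)=\E\big[\min_{w\in\calW}\{\tfrac{\gamma}{2}w^{\Tr}\Sigma_{H_\theta}w-\mu_{H_\theta}^{\Tr}w\}\big]$, with pointwise minimization justified because $\calW$ is fixed and compact and measurable selection applies.

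Continuity of $V$ is the main obstacle, since $\mu_{H_\theta}$ is random and the expectation must be passed through a limit. I would write $\mu_{H_\theta}=\E R+c_\theta X_\theta/s_\theta$ with $X_\theta$ a linear function of $Z-\E Z$. The inner minimum is Lipschitz in $(\mu,\Sigma)$ with constant $\bar r=\sup_{\calW}\|w\|$ (Step~3 of the proof of Theorem~\ref{thm:distance}). Hence
\[
|V(H_\theta)-V(H_{\theta'})|\le \bar r\,\E\|\mu_{H_\theta}-\mu_{H_{\theta'}}\|+\tfrac{\gamma}{2}\bar r^{2}\|\Sigma_{H_\theta}-\Sigma_{H_{\theta'}}\|_{\mathrm{op}} .
\]
The first term is bounded by a continuous-in-$\theta$ coefficient times $\E\|Z-\E Z\|<\infty$, so it vanishes as $\theta'\to\theta$.

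Finally, existence. On the compact quotient, $V$ is continuous and hence lower semicontinuous, and $\calW$ is compact-valued and uhc, so Theorem~\ref{thm:existence-formal} gives a minimizer of~\eqref{eq:apo}. For Stage~I, the defect is continuous. The causal defect is zero under a conditional-law contract. Dimension and burden are constant (dimension $1$, and burden by declaration), hence trivially lower semicontinuous. The nested minimization of Theorem~\ref{thm:two-sets-formal} therefore yields a nonempty compact $\Ast$.
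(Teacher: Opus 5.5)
Your proposal is correct and follows the paper's proof step for step: the quotient is the projective space $S^{k-1}/\{\pm1\}$, the constant correspondence is upper hemicontinuous, and Gaussian conditioning gives moments continuous in $\theta$ because the denominator $\theta^{\Tr}\Cov(Z)\theta$ is bounded below on the sphere; the defect is then a continuous function of a covariance with positive diagonal, and Theorems~\ref{thm:existence-formal} and~\ref{thm:two-sets-formal} apply. The one difference is that you get continuity of $V$ from the moment-metric Lipschitz bound in the proof of Theorem~\ref{thm:distance}, which also covers state-dependent policies where the random conditional mean must be integrated out, whereas the paper uses Berge's maximum theorem; you also make explicit two points the paper leaves implicit, namely the positive diagonal via $\Sigma_{H_\theta}\succeq\Cov(R\mid Z)$ and the lower semicontinuity of $\delta_t,\kappa_t,\iota_t$.
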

\begin{proof}
Information equivalence identifies $\theta$ with $-\theta$ and no other pairs, since $\sigma(\theta^{\Tr}Z)=\sigma(\theta'^{\Tr}Z)$ for a nondegenerate Gaussian $Z$ if and only if $\theta'=\pm\theta$; the quotient is the projective space $S^{k-1}/\{\pm1\}$, which is compact. A constant correspondence is upper hemicontinuous. By Gaussian conditioning, $\Cov(R\mid H_\theta)=\Cov(R)-\Cov(R,\theta^{\Tr}Z)\Cov(\theta^{\Tr}Z)^{-1}\Cov(\theta^{\Tr}Z,R)$, a rational function of $\theta$ with denominator $\theta^{\Tr}\Cov(Z)\theta\ge\lambda_{\min}(\Cov(Z))>0$, hence continuous on the sphere, as is $\mu_{H_\theta}$; the maximal absolute off-diagonal correlation is a continuous function of a covariance with positive diagonal, so $\varepsilon_t$ is continuous, and $V(H_\theta)=\min_{w\in\calW}J(w;H_\theta)$ is continuous by Berge's maximum theorem \cite[Thm.~17.31]{AliprantisBorder2006} since $J$ is continuous in $(w,\theta)$. Theorem~\ref{thm:existence} and Theorem~\ref{thm:two-sets} apply.
\end{proof}

\section{Admissible tangent portfolio, capital allocation line and surface}\label{app:cal}
Let a riskless rate $r_f$ be available and let $\Hst$ be the fixed class with conditional mean $\mu_{\Hst}$ and covariance $\Sigma_{\Hst}\succ0$.

\label{def:tangent}The admissible tangent portfolio at $t$ is the solution of $\max\{(\mu_{\Hst}^{\Tr}w-r_f)/\sqrt{w^{\Tr}\Sigma_{\Hst}w}:\ w\in\calW(\Hst),\ \ind^{\Tr}w=1\}$, and the admissible capital allocation line is the half-line from $(0,r_f)$ through its risk--return point. Its slope $S_t(\Hst)$ is the admissible conditional Sharpe ratio; it is a random variable through the conditional moments, and $\bar S_t(H):=\E[S_t(H)^{2}]^{1/2}$ is the unconditional Sharpe ratio of the conditional tangent strategy, the object of \cite{HansenRichard1987,FersonSiegel2001}.

For $\calW(\Hst)=\{w:\ind^{\Tr}w=1\}$ the tangent portfolio is $w_T\propto\Sigma_{\Hst}^{-1}(\mu_{\Hst}-r_f\ind)$ and $S_t(\Hst)^{2}=(\mu_{\Hst}-r_f\ind)^{\Tr}\Sigma_{\Hst}^{-1}(\mu_{\Hst}-r_f\ind)$, as in the classical case; what changes is which $\Sigma_{\Hst}$ is allowed to enter. Two properties distinguish it from the tangent portfolio of an arbitrary conditioning set.

\begin{proposition}[Admissible Sharpe ratio]\label{prop:sharpe}
Let the feasible set be $\{w:\ind^{\Tr}w=1\}$ for every admissible representation.

\smallskip\noindent\textit{Part (a).}\quad For every $H\in\calA_t$, $\bar S_t(H)\le\sup_{G\in\calA_t}\bar S_t(G)$, and the difference for $H=\Hst$ is the admissibility price in Sharpe units; it is zero if and only if some Stage-I-optimal class attains the maximal unconditional Sharpe ratio over the admissible class.

\smallskip\noindent\textit{Part (b).}\quad If $H_1\preceq H_2$ are admissible and $H_2$ embeds the $H_1$ decision problem, then $\bar S_t(H_2)\ge\bar S_t(H_1)$: the unconditional Sharpe ratio of the conditional tangent strategy is nondecreasing along the information order inside the admissible class.
\end{proposition}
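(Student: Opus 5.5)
Part~(a) I treat as definitional. $\bar S_t(H)$ is a number attached to each $H\in\calA_t$, so it cannot exceed the supremum over $\calA_t$. The gap $\sup_{G}\bar S_t(G)-\bar S_t(\Hst)$ is then the admissibility price in Sharpe units. It is zero exactly when some element of $\Ast$ attains that supremum, which is just the definition of attainment restricted to the Stage-I solution set. The only care needed is that the supremum be attained for the ``if and only if'' to read correctly. On the finite candidate sets used throughout it always is; in the continuous case of Proposition~\ref{prop:continuous} it is, by continuity and compactness.

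For Part~(b) the plan is to turn the Sharpe ratio into a decision value and then apply Theorem~\ref{thm:voi}. Use the excess-return quadratic loss $J_\gamma(w;H)=\tfrac{\gamma}{2}w^{\Tr}\Sigma_Hw-(\mu_H-r_f\ind)^{\Tr}w$, working state by state with the conditional moments. Over unconstrained $w\in\R^{n}$, completing the square gives the minimizer $\gamma^{-1}\Sigma_H^{-1}(\mu_H-r_f\ind)$ and the minimum value $-S_t(H)^{2}/(2\gamma)$, using the closed form of $S_t^{2}$ stated just before the proposition. Taking expectations over the state then gives $V_\gamma(H)=-\bar S_t(H)^{2}/(2\gamma)$. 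Theorem~\ref{thm:voi}, applied under the embedding hypothesis, yields $V_\gamma(H_2)\le V_\gamma(H_1)$. Hence $\bar S_t(H_2)^{2}\ge\bar S_t(H_1)^{2}$, and since both are nonnegative, $\bar S_t(H_2)\ge\bar S_t(H_1)$.

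The step that needs the most care is reconciling the budget constraint $\ind^{\Tr}w=1$ in the proposition with the unconstrained optimization that gives the clean value $-S^{2}/(2\gamma)$. I will argue as follows. In the presence of a riskless asset, the risky positions are unconstrained and the residual weight goes to $r_f$. The tangent portfolio on $\{\ind^{\Tr}w=1\}$ is the normalization of the unconstrained optimizer, and since the Sharpe ratio is invariant to positive scaling, the maximal slope is unchanged. I also need the embedding to transfer to the unconstrained excess-return problem. With feasible sets equal to $\R^{n}$, the identity map is an embedding by the law-of-total-variance argument of Step~0 in the proof of Theorem~\ref{thm:voi}, which applies verbatim with $\mu_H$ replaced by $\mu_H-r_f\ind$. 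A further subtlety arises when $b-ar_f<0$: the normalization then flips sign. I would handle this by working with the squared slope throughout, which is all the argument uses.
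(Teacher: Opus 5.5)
Your proposal is correct and follows the paper's own proof: Part~(a) is read off from the definitions of the supremum and of the Stage-I class, and Part~(b) passes to the excess-return quadratic loss over $\R^{n}$, whose value is $-\bar S_t(H)^{2}/(2\gamma)$, applies Theorem~\ref{thm:voi}, and uses scale invariance of the Sharpe ratio to dispose of the budget constraint. You also make explicit two points the paper leaves implicit: the check that the identity map embeds the unconstrained excess-return problem (Step~0 with $\mu_H$ replaced by $\mu_H-r_f\ind$), and the sign of $b-ar_f$. Both are handled correctly given the closed form of $S_t^{2}$ stated before the proposition and your riskless-asset reading, under which risky positions are unconstrained.
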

\noindent The proof is in Appendix~\ref{app:proofs}.

Part (b) is the admissible form of the statement that conditioning information cannot lower the maximal Sharpe ratio \cite{HansenRichard1987}, with the same qualification as Theorem~\ref{thm:voi}: it holds inside the admissible class and under embedding, and fails for look-ahead information because that information is not admissible.

Because $\mu_{\Hst}$ and $\Sigma_{\Hst}$ depend on the decision state, the admissible CAL is one line per state and their union is a ruled surface in (risk, state, return) space, the \emph{admissible capital allocation surface} $\mu=r_f+S_t(\Hst)(z)\,\sigma$, whose root-mean-square slope over the state distribution is $\bar S_t(\Hst)$. The surface is not an illustration only: its root-mean-square slope is the quantity on which Proposition~\ref{prop:sharpe} is stated, and the fan of lines is the state-dependence that an unconditional Sharpe ratio averages away. Figure~\ref{fig:geometry} draws the geometry at one state and Figure~\ref{fig:cal-surface} the line across states and the surface: the line of the admissible class is steeper at each state than that of a representation failing separation, and the classical line is a horizontal section of the surface.

\begin{figure}[!htb]
\centering
\begin{tikzpicture}
\begin{groupplot}[group style={group size=2 by 1,horizontal sep=1.8cm},tick label style={font=\footnotesize},label style={font=\small}]
\nextgroupplot[width=7.4cm,height=6.2cm,xlabel={risk $\sigma$},ylabel={$\mu$},xmin=0,xmax=1.15,ymin=0,ymax=1.2,title={(a) admissible CAL across states},title style={font=\small},]
\addplot[thick,blue!70!black] coordinates {(0,0.010) (1.150,0.242)};
\node[font=\scriptsize,blue!70!black,anchor=south west] at (axis cs:0.950,0.201) {$z=0$};
\addplot[thick,blue!70!black,forget plot] coordinates {(0,0.010) (1.150,1.166)};
\node[font=\scriptsize,blue!70!black,anchor=south east] at (axis cs:1.035,1.050) {$z=0.5$};
\addplot[thick,blue!70!black,forget plot] coordinates {(0,0.010) (0.612,1.200)};
\node[font=\scriptsize,blue!70!black,anchor=south east] at (axis cs:0.551,1.081) {$z=1$};
\addplot[thick,blue!70!black,forget plot] coordinates {(0,0.010) (0.310,1.200)};
\node[font=\scriptsize,blue!70!black,anchor=south east] at (axis cs:0.279,1.081) {$z=2$};
\addplot[black,dotted,thick] coordinates {(0,0.010) (1.150,0.235)}; \node[font=\scriptsize,anchor=north west] at (axis cs:0.70,0.17) {unconditional};
\nextgroupplot[width=7.8cm,height=6.2cm,view={35}{28},xlabel={$\sigma$},ylabel={state $z$},zlabel={$\mu$},zmin=0,zmax=4.2,colormap/blackwhite,title={(b) admissible capital allocation surface},title style={font=\small},tick label style={font=\scriptsize},z label style={rotate=-90}]
\addplot3[surf,shader=faceted interp,mesh/rows=41,mesh/cols=11,faceted color=black!30,opacity=0.85] table {
0.00 -2.00 0.010
0.10 -2.00 0.387
0.20 -2.00 0.763
0.30 -2.00 1.140
0.40 -2.00 1.516
0.50 -2.00 1.893
0.60 -2.00 2.269
0.70 -2.00 2.646
0.80 -2.00 3.022
0.90 -2.00 3.399
1.00 -2.00 3.775

0.00 -1.90 0.010
0.10 -1.90 0.368
0.20 -1.90 0.725
0.30 -1.90 1.083
0.40 -1.90 1.440
0.50 -1.90 1.798
0.60 -1.90 2.155
0.70 -1.90 2.513
0.80 -1.90 2.870
0.90 -1.90 3.228
1.00 -1.90 3.586

0.00 -1.80 0.010
0.10 -1.80 0.349
0.20 -1.80 0.687
0.30 -1.80 1.026
0.40 -1.80 1.364
0.50 -1.80 1.703
0.60 -1.80 2.042
0.70 -1.80 2.380
0.80 -1.80 2.719
0.90 -1.80 3.057
1.00 -1.80 3.396

0.00 -1.70 0.010
0.10 -1.70 0.330
0.20 -1.70 0.649
0.30 -1.70 0.969
0.40 -1.70 1.289
0.50 -1.70 1.608
0.60 -1.70 1.928
0.70 -1.70 2.248
0.80 -1.70 2.567
0.90 -1.70 2.887
1.00 -1.70 3.207

0.00 -1.60 0.010
0.10 -1.60 0.311
0.20 -1.60 0.611
0.30 -1.60 0.912
0.40 -1.60 1.213
0.50 -1.60 1.514
0.60 -1.60 1.814
0.70 -1.60 2.115
0.80 -1.60 2.416
0.90 -1.60 2.716
1.00 -1.60 3.017

0.00 -1.50 0.010
0.10 -1.50 0.292
0.20 -1.50 0.574
0.30 -1.50 0.855
0.40 -1.50 1.137
0.50 -1.50 1.419
0.60 -1.50 1.701
0.70 -1.50 1.982
0.80 -1.50 2.264
0.90 -1.50 2.546
1.00 -1.50 2.828

0.00 -1.40 0.010
0.10 -1.40 0.273
0.20 -1.40 0.536
0.30 -1.40 0.799
0.40 -1.40 1.061
0.50 -1.40 1.324
0.60 -1.40 1.587
0.70 -1.40 1.850
0.80 -1.40 2.113
0.90 -1.40 2.376
1.00 -1.40 2.639

0.00 -1.30 0.010
0.10 -1.30 0.254
0.20 -1.30 0.498
0.30 -1.30 0.742
0.40 -1.30 0.986
0.50 -1.30 1.230
0.60 -1.30 1.474
0.70 -1.30 1.717
0.80 -1.30 1.961
0.90 -1.30 2.205
1.00 -1.30 2.449

0.00 -1.20 0.010
0.10 -1.20 0.235
0.20 -1.20 0.460
0.30 -1.20 0.685
0.40 -1.20 0.910
0.50 -1.20 1.135
0.60 -1.20 1.360
0.70 -1.20 1.585
0.80 -1.20 1.810
0.90 -1.20 2.035
1.00 -1.20 2.260

0.00 -1.10 0.010
0.10 -1.10 0.216
0.20 -1.10 0.422
0.30 -1.10 0.628
0.40 -1.10 0.834
0.50 -1.10 1.041
0.60 -1.10 1.247
0.70 -1.10 1.453
0.80 -1.10 1.659
0.90 -1.10 1.865
1.00 -1.10 2.071

0.00 -1.00 0.010
0.10 -1.00 0.197
0.20 -1.00 0.384
0.30 -1.00 0.572
0.40 -1.00 0.759
0.50 -1.00 0.946
0.60 -1.00 1.133
0.70 -1.00 1.321
0.80 -1.00 1.508
0.90 -1.00 1.695
1.00 -1.00 1.882

0.00 -0.90 0.010
0.10 -0.90 0.178
0.20 -0.90 0.347
0.30 -0.90 0.515
0.40 -0.90 0.683
0.50 -0.90 0.852
0.60 -0.90 1.020
0.70 -0.90 1.189
0.80 -0.90 1.357
0.90 -0.90 1.525
1.00 -0.90 1.694

0.00 -0.80 0.010
0.10 -0.80 0.160
0.20 -0.80 0.309
0.30 -0.80 0.459
0.40 -0.80 0.608
0.50 -0.80 0.758
0.60 -0.80 0.907
0.70 -0.80 1.057
0.80 -0.80 1.206
0.90 -0.80 1.356
1.00 -0.80 1.505

0.00 -0.70 0.010
0.10 -0.70 0.141
0.20 -0.70 0.271
0.30 -0.70 0.402
0.40 -0.70 0.533
0.50 -0.70 0.664
0.60 -0.70 0.794
0.70 -0.70 0.925
0.80 -0.70 1.056
0.90 -0.70 1.187
1.00 -0.70 1.317

0.00 -0.60 0.010
0.10 -0.60 0.122
0.20 -0.60 0.234
0.30 -0.60 0.346
0.40 -0.60 0.458
0.50 -0.60 0.570
0.60 -0.60 0.682
0.70 -0.60 0.794
0.80 -0.60 0.906
0.90 -0.60 1.018
1.00 -0.60 1.130

0.00 -0.50 0.010
0.10 -0.50 0.103
0.20 -0.50 0.197
0.30 -0.50 0.290
0.40 -0.50 0.384
0.50 -0.50 0.477
0.60 -0.50 0.570
0.70 -0.50 0.664
0.80 -0.50 0.757
0.90 -0.50 0.851
1.00 -0.50 0.944

0.00 -0.40 0.010
0.10 -0.40 0.085
0.20 -0.40 0.160
0.30 -0.40 0.235
0.40 -0.40 0.310
0.50 -0.40 0.385
0.60 -0.40 0.460
0.70 -0.40 0.535
0.80 -0.40 0.610
0.90 -0.40 0.685
1.00 -0.40 0.760

0.00 -0.30 0.010
0.10 -0.30 0.067
0.20 -0.30 0.124
0.30 -0.30 0.181
0.40 -0.30 0.237
0.50 -0.30 0.294
0.60 -0.30 0.351
0.70 -0.30 0.408
0.80 -0.30 0.465
0.90 -0.30 0.522
1.00 -0.30 0.579

0.00 -0.20 0.010
0.10 -0.20 0.050
0.20 -0.20 0.089
0.30 -0.20 0.129
0.40 -0.20 0.169
0.50 -0.20 0.208
0.60 -0.20 0.248
0.70 -0.20 0.287
0.80 -0.20 0.327
0.90 -0.20 0.367
1.00 -0.20 0.406

0.00 -0.10 0.010
0.10 -0.10 0.035
0.20 -0.10 0.060
0.30 -0.10 0.085
0.40 -0.10 0.110
0.50 -0.10 0.135
0.60 -0.10 0.160
0.70 -0.10 0.185
0.80 -0.10 0.211
0.90 -0.10 0.236
1.00 -0.10 0.261

0.00 0.00 0.010
0.10 0.00 0.030
0.20 0.00 0.050
0.30 0.00 0.070
0.40 0.00 0.091
0.50 0.00 0.111
0.60 0.00 0.131
0.70 0.00 0.151
0.80 0.00 0.171
0.90 0.00 0.191
1.00 0.00 0.212

0.00 0.10 0.010
0.10 0.10 0.040
0.20 0.10 0.070
0.30 0.10 0.100
0.40 0.10 0.130
0.50 0.10 0.160
0.60 0.10 0.190
0.70 0.10 0.221
0.80 0.10 0.251
0.90 0.10 0.281
1.00 0.10 0.311

0.00 0.20 0.010
0.10 0.20 0.056
0.20 0.20 0.102
0.30 0.20 0.148
0.40 0.20 0.194
0.50 0.20 0.240
0.60 0.20 0.287
0.70 0.20 0.333
0.80 0.20 0.379
0.90 0.20 0.425
1.00 0.20 0.471

0.00 0.30 0.010
0.10 0.30 0.074
0.20 0.30 0.137
0.30 0.30 0.201
0.40 0.30 0.265
0.50 0.30 0.329
0.60 0.30 0.392
0.70 0.30 0.456
0.80 0.30 0.520
0.90 0.30 0.584
1.00 0.30 0.647

0.00 0.40 0.010
0.10 0.40 0.092
0.20 0.40 0.174
0.30 0.40 0.256
0.40 0.40 0.338
0.50 0.40 0.420
0.60 0.40 0.502
0.70 0.40 0.584
0.80 0.40 0.666
0.90 0.40 0.748
1.00 0.40 0.830

0.00 0.50 0.010
0.10 0.50 0.111
0.20 0.50 0.211
0.30 0.50 0.312
0.40 0.50 0.412
0.50 0.50 0.513
0.60 0.50 0.613
0.70 0.50 0.714
0.80 0.50 0.814
0.90 0.50 0.915
1.00 0.50 1.015

0.00 0.60 0.010
0.10 0.60 0.129
0.20 0.60 0.248
0.30 0.60 0.368
0.40 0.60 0.487
0.50 0.60 0.606
0.60 0.60 0.725
0.70 0.60 0.844
0.80 0.60 0.964
0.90 0.60 1.083
1.00 0.60 1.202

0.00 0.70 0.010
0.10 0.70 0.148
0.20 0.70 0.286
0.30 0.70 0.424
0.40 0.70 0.562
0.50 0.70 0.700
0.60 0.70 0.838
0.70 0.70 0.976
0.80 0.70 1.114
0.90 0.70 1.251
1.00 0.70 1.389

0.00 0.80 0.010
0.10 0.80 0.167
0.20 0.80 0.324
0.30 0.80 0.480
0.40 0.80 0.637
0.50 0.80 0.794
0.60 0.80 0.951
0.70 0.80 1.107
0.80 0.80 1.264
0.90 0.80 1.421
1.00 0.80 1.578

0.00 0.90 0.010
0.10 0.90 0.186
0.20 0.90 0.361
0.30 0.90 0.537
0.40 0.90 0.712
0.50 0.90 0.888
0.60 0.90 1.064
0.70 0.90 1.239
0.80 0.90 1.415
0.90 0.90 1.590
1.00 0.90 1.766

0.00 1.00 0.010
0.10 1.00 0.204
0.20 1.00 0.399
0.30 1.00 0.593
0.40 1.00 0.788
0.50 1.00 0.982
0.60 1.00 1.177
0.70 1.00 1.371
0.80 1.00 1.566
0.90 1.00 1.760
1.00 1.00 1.955

0.00 1.10 0.010
0.10 1.10 0.223
0.20 1.10 0.437
0.30 1.10 0.650
0.40 1.10 0.863
0.50 1.10 1.077
0.60 1.10 1.290
0.70 1.10 1.504
0.80 1.10 1.717
0.90 1.10 1.930
1.00 1.10 2.144

0.00 1.20 0.010
0.10 1.20 0.242
0.20 1.20 0.475
0.30 1.20 0.707
0.40 1.20 0.939
0.50 1.20 1.171
0.60 1.20 1.404
0.70 1.20 1.636
0.80 1.20 1.868
0.90 1.20 2.100
1.00 1.20 2.333

0.00 1.30 0.010
0.10 1.30 0.261
0.20 1.30 0.512
0.30 1.30 0.764
0.40 1.30 1.015
0.50 1.30 1.266
0.60 1.30 1.517
0.70 1.30 1.768
0.80 1.30 2.019
0.90 1.30 2.271
1.00 1.30 2.522

0.00 1.40 0.010
0.10 1.40 0.280
0.20 1.40 0.550
0.30 1.40 0.820
0.40 1.40 1.090
0.50 1.40 1.361
0.60 1.40 1.631
0.70 1.40 1.901
0.80 1.40 2.171
0.90 1.40 2.441
1.00 1.40 2.711

0.00 1.50 0.010
0.10 1.50 0.299
0.20 1.50 0.588
0.30 1.50 0.877
0.40 1.50 1.166
0.50 1.50 1.455
0.60 1.50 1.744
0.70 1.50 2.033
0.80 1.50 2.322
0.90 1.50 2.611
1.00 1.50 2.900

0.00 1.60 0.010
0.10 1.60 0.318
0.20 1.60 0.626
0.30 1.60 0.934
0.40 1.60 1.242
0.50 1.60 1.550
0.60 1.60 1.858
0.70 1.60 2.166
0.80 1.60 2.474
0.90 1.60 2.782
1.00 1.60 3.090

0.00 1.70 0.010
0.10 1.70 0.337
0.20 1.70 0.664
0.30 1.70 0.991
0.40 1.70 1.318
0.50 1.70 1.645
0.60 1.70 1.972
0.70 1.70 2.299
0.80 1.70 2.625
0.90 1.70 2.952
1.00 1.70 3.279

0.00 1.80 0.010
0.10 1.80 0.356
0.20 1.80 0.702
0.30 1.80 1.048
0.40 1.80 1.394
0.50 1.80 1.739
0.60 1.80 2.085
0.70 1.80 2.431
0.80 1.80 2.777
0.90 1.80 3.123
1.00 1.80 3.469

0.00 1.90 0.010
0.10 1.90 0.375
0.20 1.90 0.740
0.30 1.90 1.104
0.40 1.90 1.469
0.50 1.90 1.834
0.60 1.90 2.199
0.70 1.90 2.564
0.80 1.90 2.929
0.90 1.90 3.293
1.00 1.90 3.658

0.00 2.00 0.010
0.10 2.00 0.394
0.20 2.00 0.778
0.30 2.00 1.161
0.40 2.00 1.545
0.50 2.00 1.929
0.60 2.00 2.313
0.70 2.00 2.696
0.80 2.00 3.080
0.90 2.00 3.464
1.00 2.00 3.848
};
\end{groupplot}
\end{tikzpicture}
\caption{The admissible capital allocation line across decision states and the surface it sweeps, reference system, class $\{Z_1,Z_2\}$, $r_f=0.01$. (a) The admissible CAL at four states; its slope is the conditional Sharpe ratio $S_t(\Hst)(z)$ and grows with the information carried by the state, while the unconditional line does not move. (b) The union of the lines over states is the admissible capital allocation surface $\mu=r_f+S_t(\Hst)(z)\,\sigma$; its root-mean-square slope over the state distribution is the unconditional Sharpe ratio $\bar S_t(\Hst)$ of the conditional tangent strategy, $1.90$ against $1.29$ for the proxy and $0.20$ unconditionally, and the classical capital allocation line is one horizontal section of it.}
\label{fig:cal-surface}
\end{figure}

Markowitz portfolio theory is the degenerate case in which a single admissible class $\{H_0\}$ with $\rho_{H_0}(w)=w^{\Tr}\Sigma w$ makes Stage~I vacuous and Stage~II the classical program, and the trivial information object gives the unconditional problem.

\section{Experiment specification}\label{app:config}

Every design that was executed in the course of this work is reported, either in the main text or in Section~\ref{sec:finite}: the controlled Experiments~\ref{exp:E1} to~\ref{exp:E4}; the robustness runs over seeds, tolerance levels ($0.05$, $0.10$, $0.15$), tolerance calibrations (per-pair Fisher, \v{S}id\'ak, multiplier bootstrap) and the reselection weight ($0.01$, $0.02$, $0.05$); the lag test at one, two and five sessions; the six asset universes and the aggregation and candidate-search designs; the evaluation protocol with six estimators, unconstrained and long-only; the calibration decomposition at three sets of weights; and the regime-invariance instance. No executed design was omitted because of its outcome; the two that ran against the framework are in Section~\ref{sec:market}.

Table~\ref{tab:spec} gives, for every design reported, the panel, the window structure, the selection parameters, the estimators and the tests; the data transformations are as follows. Asset prices enter as log differences. Candidate drivers are classified by type: yields, swap rates and risk reversals (22 series) enter as first differences; prices, indices, exchange rates, commodities and total-return indices (105 series) as log differences; sentinel values ($-1$) and non-finite entries are set to missing, single-day moves beyond twenty standard deviations are treated as missing, sessions with fewer than $90\%$ of drivers observed are dropped and isolated gaps are zero innovations. Industry and factor portfolios from Kenneth French's library enter as returns. Bond prices from the 1{,}336-issue file are dated one session later than the market data and are shifted back one session before alignment. The screening tolerance is $0.10$ plus a \v{S}id\'ak bound for the maximum over the $n(n-1)/2$ pairs at the certification-block length, with the multiplier-bootstrap calibration reported where stated. Forward selection is greedy up to the stated dimension. Long-only minimum-variance portfolios are computed by accelerated projected gradient with exact simplex projection to a relative tolerance of $10^{-10}$; unconstrained ones in closed form. The circular block bootstrap resamples the sequence of per-window paired differences in blocks of four windows with $2{,}000$ draws. Every quantity quoted in the text is produced by the scripts named in the table and checked against their outputs by the verification script in the repository.

\begin{table}[!htb]\centering\scriptsize\setlength{\tabcolsep}{3pt}\renewcommand{\arraystretch}{1.15}
\caption{Specification of every design reported.}
\label{tab:spec}
\begin{tabular}{@{}p{2.9cm}p{3.5cm}p{3.1cm}p{2.8cm}p{2.5cm}@{}}
\toprule
design & panel & windows and blocks & selection & estimators / tests \\
\midrule
Experiments \ref{exp:E1}--\ref{exp:E4} (controlled) & Gaussian structural system, 8 assets, 2 states, 41 candidates from 6 variables & \ref{exp:E2}: $N\in\{250,\dots,8000\}$, 50/25/25 split, 300 replications; \ref{exp:E4}: 200 replications & tolerance $0.10$ plus Fisher bound; hysteresis margin $0.01$, $0.02$, $0.05$ & exact recovery; selection and frontier error rates; intervention $\sigma_{Z_1}\to1.5\sigma_{Z_1}$; reselection \\
primary screening & 147 S\&P~500 names, 180 candidates (127 series + 48 industries + 5 factors), 2010--2024 & 756 or 945 sessions stepped by 63; certification 378, estimation 189 or 378, evaluation 189 & $K=12$ & defect, consequence statistic, bias statistics \\
head-to-head & as above & as above, both estimation lengths & $K=12$ & sample, LW linear, nonlinear (analytical), DCC-NL, AFM-DCC-NL, $k$ PCs $\pm$ threshold, least-defect $\pm$ threshold, robust; block bootstrap vs best \\
macro-only primary & 147 names, 127 series & 756 stepped by 63, 50/25/25 & $K=8$ & screening, lag 1, halves, consequence, bias, loss-based selection \\
regime price & 147 names, 127 series & every second window & $K=8$; invariance at 3 pooled s.e. & calendar regimes 2010--14/15--19/20--24; policy regimes by FOMC dates \\
in-loop invariance & 147 names, 127 series & quarterly, 80 rebalances & thirds of the certification block as environments, 2.5 s.e. & defect, admissible fraction, long-only strategy \\
twelve-equity panel & 12 US equities, 127 series, 2001--2023 & 756 stepped by 63 (81 windows), 50/25/25 & $K=8$ & screening, calibration, protocol (Table~\ref{tab:ec}), strategy (Table~\ref{tab:strategy}), reselection \\
universe grid & 48 industries; 25 size/value; 25 Europe and Developed size/momentum & 756 stepped by 63 & $K=8$, pair-calibrated tolerance & screening only \\
aggregation & 147 names in baskets of $k\in\{1,2,4,8,12\}$, 6 draws & 756 stepped by 189 & $K=8$ & screening \\
candidate search & 12 equities; 25 size/value & random candidate subsets of 4 to 127, 8 draws & $K=8$ & attainable defect \\
cross-asset & 147 equities and 1{,}336 Spanish bonds, 2014--2018; 12 equities, 12 bonds, 6+6, 50+50 & 504 stepped by 63 or 126 & $K=8$, rank-correlation defect & screening, lag 1, consequence, bias \\
hedge funds & 17 strategy indices, monthly, 271 months & 120 months stepped by 6 & $K=4$ & screening only \\
\bottomrule
\end{tabular}
\end{table}

\end{document}